\def\BibTeX{{\rm B\kern-.05em{\sc i\kern-.025em b}\kern-.08emT\kern-.1667em\lower.7ex\hbox{E}\kern-.125emX}}
\definecolor{Xiang}{rgb}{1,0,0}
\definecolor{Weilong}{rgb}{0,0,1}
\newcommand{\nop}[1]{}
\begin{document}

\fancyhead{}
  % do not delete this code.

%
% The "title" command has an optional parameter, allowing the author to define a "short title" to be used in page headers.
\title{Efficient Join Processing Over Incomplete Data Streams (Technical Report)}

\author{Weilong Ren$^1$, Xiang Lian$^1$, and Kambiz Ghazinour $^1\ ^2$}
%\authornote{Both authors contributed equally to this research.}
%\orcid{1234-5678-9012}
\affiliation{%
  \institution{$^1$ Department of Computer Science, Kent State University, Kent}
  \institution{$^2$ Center for Criminal Justice, Intelligence and Cybersecurity, State University of New York, Canton}
  %\streetaddress{PO Box 5190}
  %\city{Kent}
  %\state{Ohio}
  %\postcode{44242}
}
\email{{wren3, xlian}@kent.edu, ghazinourk@canton.edu}

\nop{
%
% The "author" command and its associated commands are used to define the authors and their affiliations.
% Of note is the shared affiliation of the first two authors, and the "authornote" and "authornotemark" commands
% used to denote shared contribution to the research.
\author{Ben Trovato}
\authornote{Both authors contributed equally to this research.}
\email{trovato@corporation.com}
\orcid{1234-5678-9012}
\author{G.K.M. Tobin}
\authornotemark[1]
\email{webmaster@marysville-ohio.com}
\affiliation{%
  \institution{Institute for Clarity in Documentation}
  \streetaddress{P.O. Box 1212}
  \city{Dublin}
  \state{Ohio}
  \postcode{43017-6221}
}

\author{Lars Th{\o}rv{\"a}ld}
\affiliation{%
  \institution{The Th{\o}rv{\"a}ld Group}
  \streetaddress{1 Th{\o}rv{\"a}ld Circle}
  \city{Hekla}
  \country{Iceland}}
\email{larst@affiliation.org}

\author{Valerie B\'eranger}
\affiliation{%
  \institution{Inria Paris-Rocquencourt}
  \city{Rocquencourt}
  \country{France}
}

\author{Aparna Patel}
\affiliation{%
 \institution{Rajiv Gandhi University}
 \streetaddress{Rono-Hills}
 \city{Doimukh}
 \state{Arunachal Pradesh}
 \country{India}}
 
\author{Huifen Chan}
\affiliation{%
  \institution{Tsinghua University}
  \streetaddress{30 Shuangqing Rd}
  \city{Haidian Qu}
  \state{Beijing Shi}
  \country{China}}

\author{Charles Palmer}
\affiliation{%
  \institution{Palmer Research Laboratories}
  \streetaddress{8600 Datapoint Drive}
  \city{San Antonio}
  \state{Texas}
  \postcode{78229}}
\email{cpalmer@prl.com}

\author{John Smith}
\affiliation{\institution{The Th{\o}rv{\"a}ld Group}}
\email{jsmith@affiliation.org}

\author{Julius P. Kumquat}
\affiliation{\institution{The Kumquat Consortium}}
\email{jpkumquat@consortium.net}

%
% By default, the full list of authors will be used in the page headers. Often, this list is too long, and will overlap
% other information printed in the page headers. This command allows the author to define a more concise list
% of authors' names for this purpose.
\renewcommand{\shortauthors}{Trovato and Tobin, et al.}
}

%
% The abstract is a short summary of the work to be presented in the article.
\begin{abstract}
For decades, the join operator over fast data streams has always drawn much attention from the database community, due to its wide spectrum of real-world applications, such as online clustering, intrusion detection, sensor data monitoring, and so on. Existing works usually assume that the underlying streams to be joined are complete (without any missing values). However, this assumption may not always hold, since objects from streams may contain some missing attributes, due to various reasons such as packet losses, network congestion/failure, and so on. In this paper, we formalize an important problem, namely \textit{join over incomplete data streams} (Join-iDS), which retrieves joining object pairs from incomplete data streams with high confidences. We tackle the Join-iDS problem in the style of ``data imputation and query processing at the same time''. To enable this style, we design an effective and efficient cost-model-based imputation method via \textit{deferential dependency} (DD), devise effective pruning strategies to reduce the Join-iDS search space, and propose efficient algorithms via our proposed cost-model-based data synopsis/indexes. Extensive experiments have been conducted to verify the efficiency and effectiveness of our proposed Join-iDS approach on both real and synthetic data sets.
\end{abstract}

%
% The code below is generated by the tool at http://dl.acm.org/ccs.cfm.
% Please copy and paste the code instead of the example below.
%
\nop{
\begin{CCSXML}
<ccs2012>
 <concept>
  <concept_id>10010520.10010553.10010562</concept_id>
  <concept_desc>Computer systems organization~Embedded systems</concept_desc>
  <concept_significance>500</concept_significance>
 </concept>
 <concept>
  <concept_id>10010520.10010575.10010755</concept_id>
  <concept_desc>Computer systems organization~Redundancy</concept_desc>
  <concept_significance>300</concept_significance>
 </concept>
 <concept>
  <concept_id>10010520.10010553.10010554</concept_id>
  <concept_desc>Computer systems organization~Robotics</concept_desc>
  <concept_significance>100</concept_significance>
 </concept>
 <concept>
  <concept_id>10003033.10003083.10003095</concept_id>
  <concept_desc>Networks~Network reliability</concept_desc>
  <concept_significance>100</concept_significance>
 </concept>
</ccs2012>
\end{CCSXML}

\ccsdesc[500]{Computer systems organization~Embedded systems}
\ccsdesc[300]{Computer systems organization~Redundancy}
\ccsdesc{Computer systems organization~Robotics}
\ccsdesc[100]{Networks~Network reliability}
}

% Rights management information. 
% This information is sent to you when you complete the rights form.
% These commands have SAMPLE values in them; it is your responsibility as an author to replace
% the commands and values with those provided to you when you complete the rights form.
%
% These commands are for a PROCEEDINGS abstract or paper.
\copyrightyear{2019}
\acmYear{2019}
\acmConference[CIKM '19]{The 28th ACM International Conference on Information and
Knowledge Management}{November 3--7, 2019}{Beijing, China}
\acmBooktitle{The 28th ACM International Conference on Information and Knowledge
Management (CIKM '19), November 3--7, 2019, Beijing, China}
\acmPrice{15.00}
\acmDOI{10.1145/3357384.3357863}
\acmISBN{978-1-4503-6976-3/19/11}

%
% Keywords. The author(s) should pick words that accurately describe the work being
% presented. Separate the keywords with commas.
\keywords{Join, Incomplete Data Streams, Join Over Incomplete Data Streams, Join-iDS}

%
% A "teaser" image appears between the author and affiliation information and the body 
% of the document, and typically spans the page. 
\nop{
\begin{teaserfigure}
  \includegraphics[width=\textwidth]{sampleteaser}
  \caption{Seattle Mariners at Spring Training, 2010.}
  \Description{Enjoying the baseball game from the third-base seats. Ichiro Suzuki preparing to bat.}
  \label{fig:teaser}
\end{teaserfigure}
}

%
% This command processes the author and affiliation and title information and builds
% the first part of the formatted document.
\maketitle

\section{Introduction}
Stream data processing has received much attention from the database community, due to its wide spectrum of real-world applications such as online clustering \cite{hyde2017fully}, intrusion detection \cite{dhanabal2015study}, sensor data monitoring \cite{abadi2004integration}, object identification \cite{hong2016explicit}, location-based services \cite{hong2012discovering}, IP network traffic analysis \cite{fusco2010net}, Web log mining \cite{carbone2015apache}, moving object search \cite{yu2013online}, event matching \cite{song2017matching}, and many others. In these applications, data objects from streams (e.g., sensory data samples) may sometimes contain missing attributes, for various reasons like packet losses, transmission delays/failures, and so on. It is therefore rather challenging to manage and process such streams with incomplete data effectively and efficiently.

In this paper, we will study the \textit{join} operator between incomplete data streams (i.e., streaming objects with missing attributes), which has real applications such as network intrusion detection, online clustering, sensor networks, and data integration. 

%{\color{Xiang} (Weilong, please add applications here) \color{Weilong} professor, it is added.}. 

We have the following motivation example for the join over incomplete data streams in the application of network intrusion detection.

%{\color{Xiang} (online clustering still has its own applications like network intrusion. We may need a better example to motivate the join problem. Tell a story of the problem!)}

\setlength{\textfloatsep}{1pt}
\begin{figure}[t!]
\centering%\vspace{-3ex}
\hspace{-1ex}\includegraphics[scale=0.22]{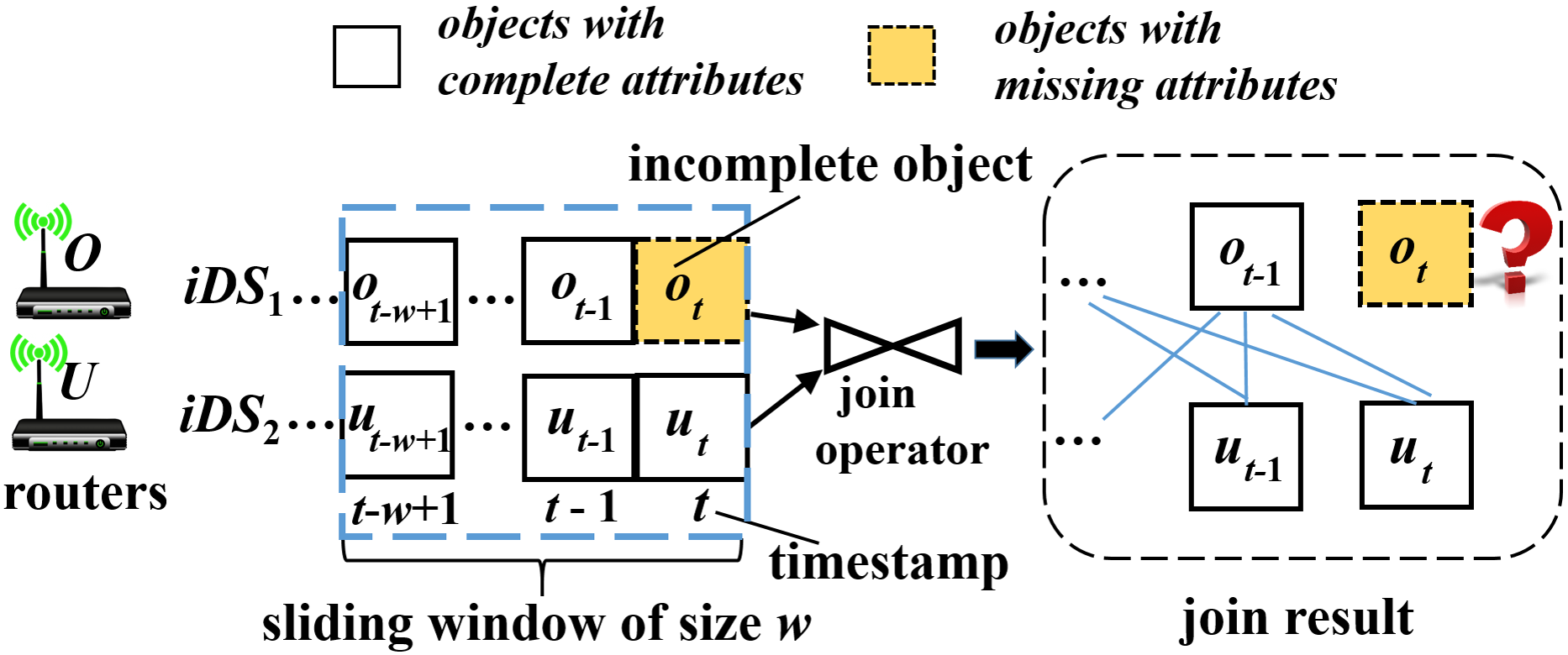}\vspace{-1ex}
\caption{\small The join operator over incomplete data streams for monitoring network intrusion.}%\vspace{-2ex}
\label{fig:fig1}
\end{figure}

%(with Euclidean distance function $dist(\cdot, \cdot)$ and a distance threshold $\epsilon$)

% {\color{Xiang} (notations of objects are not consistent! from figure, it seems that you cluster complete objects and incomplete objects separately, which may be confusing.)\color{Weilong} (professor, it is revised.)}

\begin{table}[t!]
\centering\scriptsize\vspace{-2ex}
\caption{\small Incomplete data streams, $iDS_1$ and $iDS_2$, in Figure \ref{fig:fig1}.}\label{table:table1}\vspace{-2ex}
\begin{tabular}{|c|c||c|c|c|}
\hline
\textbf{router} & \textbf{object ID from} &  [$A$] \textbf{No. of con-} & \textbf{[$B$] connection } & [$C$] \textbf{transferred}  \\
\textbf{ID} & \textbf{stream $iDS_1$ or $iDS_2$} & \textbf{nections ($\times 10^3)$} & \textbf{duration (min)} & \textbf{data size (GB)} \\
\hline
\hline
\multirow{4}{*}{$O$} & $o_{t-w+1}$ &  0.1 & 0.1 & 0.1 \\
 & $...$ & ...  & ... & ... \\
 & $o_{t-1}$ &  0.2 & 0.1 & 0.2 \\
 & $o_t$ &  0.4 & 0.3 & {\bf \color{blue}$-$}
 \\\hline\hline
\multirow{4}{*}{$U$} & $u_{t-w+1}$ &  0.2 & 0.1 & 0.1 \\
 & $...$ & ...  & ... & ... \\
 & $u_{t-1}$ &  0.2 & 0.2 & 0.2  \\
 & $u_t$ &  0.3 & 0.3 & 0.2 
 \\\hline
\end{tabular}
%\vspace{-3ex}
\end{table}

\nop{

\begin{table}[t!]
\centering\scriptsize
\caption{\small Euclidean distances between objects from streams $iDS_1$ and $iDS_2$ in Figure \ref{fig:fig1} and Table \ref{table:table1} at timestamps $t-1$ and $t$.}\vspace{-2ex}
\begin{tabular}{|c|c||c|}
\hline
\textbf{router ID} &  \textbf{router ID} & \textbf{Euclidean distance} \\
\hline
\hline
 $o_{t-1}$ &  $u_{t-1}$ &  0.224  \\\hline
$o_t$ &  $u_t$ & {\bf ?}   \\\hline
\end{tabular}
\label{table:table2}%\vspace{-3ex}
\end{table}

}

\begin{example} {\bf (Monitoring Network Intrusion)} Figure \ref{fig:fig1} illustrates two critical routers, $O$ and $U$, in an IP network, from which we collect statistical (log) attributes in a streaming manner, for example, \textit{No. of connections}, \textit{the connection duration}, and \textit{the transferred data size}. In practice, due to packet losses, network congestion/delays, or hardware failure, we may not always obtain all attributes from each router. As an example in Table \ref{table:table1}, the \textit{transferred data size} of router $o_t$ is missing (denoted as ``-'') at timestamp $t$. As a result, stream data collected from each router may sometimes contain incomplete attributes.

One critical, yet challenging, problem in the network is to monitor network traffic, and detect potential network intrusion. 
If one router (e.g., $O$) is under the attack of network intrusion, we should quickly identify potential attacks in other routers, like $U$, at close timestamps, to which we may take actions for protecting the network security. In this case, it is very important to conduct the join over (incomplete) router data streams, and monitor similar patterns/behaviors from these two routers (e.g., $O$ and $U$). The resulting joining pairs can be used to effectively detect network intrusion events in routers.
\qquad $\blacksquare$

\label{example:example1}
\end{example}

%To tackle the problem in Example \ref{example:example1}, we need to figure out two issues. First, the Euclidean distances between objects may not be able to obtain, due to the existence of missing attribute values of objects from incomplete data streams $iDS$. No existing works support join processing over $iDS$, and it is challenging to obtain possible values of missing attributes. Thus, it is necessary and important to explore potential imputation methods to obtain potential values of missing attribute values. Second, it is non-trivial to efficiently and accurately retrieve the join results over $iDS$. To solve join processing over $iDS$, one intuitive strategy is to first turn incomplete data streams into uncertain ones via some imputation methods, and then apply join operator over uncertain data streams. However, this straightforward approach may not satisfy the unique stream processing requirements (i.e., limited memory consumption and fast processing). Instead, in this paper, we explore an feasible and efficient approach by doing data imputation and join processing at the same time. To achieve this simultaneous style, we need to devise effective pruning rules to filter out false alarm of object pairs with low joining probability, and design effective and space-friendly data synopses to enable data imputation and join processing at the same time.

In Example \ref{example:example1}, the join on incomplete router data streams monitors pairs of (potentially incomplete) objects from streams whose Euclidean distances are within some user-specified threshold. Due to the incompleteness of objects, it is rather challenging to accurately infer missing attributes, and effectively calculate the distance between 2 incomplete objects with missing attributes. For example, as depicted in Table \ref{table:table1}, it is not trivial how to compute the distance between object $o_t$ (with missing attribute $C$, \textit{transferred data size}) from router $O$ and any object $u_i$ (for $t-w+1\le i\le t$) from router $U$.

%, since the \textit{transferred data size} of router $o_t$ is missing.
%{\color{Xiang} (Weilong, please add an example of computing distance between incomplete objects in Table 1) ...  \color{Weilong} (professor, it is revised.)}

%\noindent {\bf The Join-iDS Problem.} 

Inspired by the example above, in this paper, we formally define the \textit{join over incomplete data streams} (Join-iDS), which continuously monitors pairs of similar (incomplete) objects from two incomplete data streams with high confidences. In addition to the application of network intrusion detection (as shown in Example \ref{example:example1}), the Join-iDS problem is also useful for many other real applications, such as sensor data monitoring and data integration. 

One straightforward method to solve the Join-iDS problem is to conduct the imputation over data streams, followed by join processing over two imputed streams. However, this method is not that efficient, due to high imputation and joining costs, which may not suit for the requirements of stream processing (e.g., small response time). 

To tackle the Join-iDS problem efficiently and effectively, in this paper, we will propose an effective and adaptive imputation approach to turn incomplete data objects into complete ones, devise cost-model-based imputation indexes and a synopsis for data streams, and an efficient algorithm to simultaneously handle data imputation and Join-iDS processing.

%ganguly2004processing,hammad2003stream

\vspace{1ex}\noindent {\bf Differences from Prior Works.} While many prior works studied the \textit{join} operator over complete data streams \cite{das2003approximate,lin2015scalable} or uncertain data streams \cite{lian2010similarity,lian2009efficient}, they all assume that data streams are complete, and streaming objects do not have any missing attributes. To the best of our knowledge, no previous works considered the join operator over incomplete data streams (i.e., Join-iDS). To turn incomplete data records into complete ones, one straightforward way is to set the missing attribute values to 0, that is, ignoring the missing attribute values. However, this method may overestimate (underestimate) the distance between objects from data streams and cause wrong join results. Instead, in this paper, we will adopt \textit{differential dependency} (DD) rules \cite{song2011differential} to impute the possible values of missing attributes of data objects from incomplete data streams. 

Most importantly, in this paper, we will propose efficient Join-iDS processing algorithms to enable the data imputation and join processing at the same time, by designing cost-model-based and space-efficient index structures and efficient pruning strategies.

In this paper, we make the following major contributions:
\begin{enumerate}
\item We formalize a novel and important problem, \textit{join over incomplete data streams} (Join-iDS), in Section \ref{sec:problem_def}.%\vspace{-1ex}

\item We propose effective and efficient cost-model-based data imputation techniques via DD rules in Section \ref{sec:imputation_of_io}.

\item We devise effective pruning strategies to reduce the Join-iDS search space in Section \ref{sec:pruning_strategies}.

\item We design an efficient Join-iDS processing algorithm via data synopsis/indexes in Section \ref{sec:join_over_iDS}. 

\item We evaluate through extensive experiments the performance of our Join-iDS approach on real/synthetic data in Section \ref{sec:exp_eval}.

\end{enumerate}

In addition, Section \ref{sec:related_work} reviews related works on the stream processing, differential dependency, join operator, and incomplete data management. Section \ref{sec:conclusions} concludes this paper. 

\section{Problem Definition}
\label{sec:problem_def}

In this section, we formally define the problem of the \textit{join over incomplete data streams} (Join-iDS), which takes into account the missing attributes in the process of the stream join.

\subsection{Incomplete data stream}
\label{subsec:iDS}

We first define two terms, \textit{incomplete data stream} and \textit{sliding window}, below. 

%The latter one is a common used data processing model over streams by only considering the most recent stream objects.

\begin{definition} \textbf{(Incomplete Data Stream)} 
An \textit{incomplete data stream}, $iDS$, contains an ordered sequence of objects, $(o_1, o_2, ...,$ $o_t,$ $...)$. Each object $o_i\in iDS$ arrives at timestamp $i$, and has $d$ attributes $A_j$ (for $1 \le j \le d$), some of which are missing, denoted as $o_i[A_j] =$ ``$-$''.
\label{def:iDS}
\end{definition}

%{\color{Weilong}
In Definition \ref{def:iDS}, at each timestamp $i$, an object $o_i$ from incomplete data stream $iDS$ will arrive. Each object $o_i$ may be an incomplete object, containing some missing attributes $o_i[A_j]$.

%Thus, as time goes, the size of $iDS$ may be too large to handle. To tackle this, we adopt the model of \textit{sliding window} \cite{ananthakrishna2003efficient}.

%{\color{Xiang} (Weilong, we will probably just consider a simple model, one object arrives at each timestamp.) \color{Weilong}(professor, it is revised.)}

Following the literature of data streams, in this paper, we consider the \textit{sliding window} model \cite{ananthakrishna2003efficient} over incomplete data stream $iDS$.

\begin{definition} \textbf{(Sliding Window, $W_t$)}
Given an incomplete data stream $iDS$, an integer $w$, and the current timestamp $t$, a \textit{sliding window}, $W_t$, contains an ordered set of the most recent $w$ objects from $iDS$, that is, $(o_{t-w+1}, o_{t-w+2}, ..., o_t)$.
%where each object $o_i\in iDS$ ($t-w+1\leq i \leq t$) arrive into $iDS$ at timetamp $t$.
\label{def:sw}
\end{definition}

%sequence -> subset

%{\color{Xiang} (Please change sliding window definition accordingly) \color{Weilong} professor, it is revised.}

In Definition \ref{def:sw}, the sliding window $W_t$ contains all objects from $iDS$ arriving within the time interval $[t-w+1, t]$. To incrementally maintain the sliding window, at a new timestamp $(t+1)$, a new sliding window $W_{t+1}$ can be obtained by adding the newly arriving object $o_{t+1}$ to $W_t$ and removing the old (expired) object $o_{t-w+1}$ from $W_t$.

Note that, the sliding window we adopt in this paper is the count-based one \cite{ananthakrishna2003efficient}. For other data models such as the time-based sliding window \cite{tao2006maintaining} (allowing more than one object arriving at each timestamp), we can easily extend our problem by replacing each object $o_i \in W_t$ with a set of objects arriving simultaneously at timestamp $i$, which we would like to leave as our future work.

%{\color{Xiang} (Weilong, can you briefly describe how to extend it?) \color{Weilong} (professor, it is revised).}

\subsection{Imputation Over $iDS$}
\label{subsec:imputation_iDS}
In this paper, we adopt \textit{differential dependency} (DD) rules \cite{song2011differential} as our imputation approach for inferring the missing attributes of incomplete data objects from $iDS$. By using DD rules, incomplete data streams can be turned into \textit{imputed data streams}. We would like to leave the topics of considering other imputation methods (e.g., multiple imputation \cite{royston2004multiple}, editing rule \cite{fan2010towards}, relational dependency network \cite{mayfield2010eracer}, etc.) as our future work.

%{\color{Xiang} (Weilong, add that we would like to leave the topic of using other imputation methods (e.g., ...) as our future work) \color{Weilong} professor, it is revised.}

%Besides, in this subsection, we introduce the \textit{possible worlds} semantics.

\vspace{1ex}\noindent {\bf Differential Dependency (DD).} The \textit{differential dependency} (DD) \cite{song2011differential} reveals correlation rules among attributes in data sets, which can be used for imputing the missing attributes in incomplete objects. As an example, given a table with 3 attributes $A$, $B$, and $C$, a DD rule can be in the form of $(A \to C, \{A.I,C.I\})$, where $A.I = [0,\epsilon_A]$ and $C.I = [0,\epsilon_C]$ are 2 distance constraints on attributes $A$ and $C$, respectively. Assuming $\epsilon_A = \epsilon_C = 0.1$, this DD rule implies that if any two objects $o_i$ and $o_j$ have their attribute $A$ within $\epsilon_A$-distance away from each other (i.e., $|o_i[A]-o_j[A]|\in [0, 0.1]$), then their values of attribute $C$ must also be within $\epsilon_C$-distance away from each other (i.e., $|o_i[C]-o_j[C]|\in [0, 0.1]$ holds).

%$o_1^2$ and $o_2^2$ (in Table \ref{table:table1}) must have similar values on attribute $C$ ($|0.2 - 0.2| = 0 \in [0,0.1]$), as long as their values on attribute $A$ are satisfied with the distance constraint ($|0.3-0.2| = 0.1 \in [0,0.1]$).

%{\color{Xiang} (weilong, please give a simple example of DD (using previous table examples?), before the definition. also, give some explanation of terms like I, epsilon) \color{Weilong} professor, it is revised.}

%{\color{Xiang} it is sudden to mention Table 1 directly. Maybe you do not need to use Table 1 as an example. Please remove this comment after you see it.}

Formally, we give the definition of the DD rule as follows.

%followed by missing data imputation via DDs and a static data repository $R$.

\begin{definition} \textbf{(Differential Dependency, $DD$)} 
A \textit{differential dependency} (DD) rule is in the form of $(X \to A_j, \phi [X A_j])$, where $X$ is a set of \textit{determinant attributes}, $A_j$ is a \textit{dependent attribute} ($A_j\notin X$), and $\phi[Y]$ is a differential function to specify the distance constraints, $A_y.I$, on attributes $A_y$ in $Y$, where $A_y.I = [0,\epsilon_{A_y}]$.
\label{def:dd}
\end{definition}

In Definition \ref{def:dd}, given a DD rule $(X \to A_j, \phi [X A_j])$, if two data objects satisfy the differential function $\phi[X]$ on determinant attributes $X$, then they will have similar values on dependent attributes $A_j$. 

\vspace{1ex}\noindent {\bf Missing Values Imputation via DD and Data Repository $R$.} DD rules can achieve good imputation performance even in sparse data sets, since they tolerate differential differences between attribute values \cite{song2011differential}.

In this paper, we assume that a static data repository $R$ (containing complete objects without missing attributes) is available for imputing missing attributes from data streams. Given a DD $(X \to A_j, \phi[X\ A_j])$ and an incomplete object $o_i$, we can obtain possible values of missing attribute $o_i[A_j]$, by leveraging all complete objects $o_c \in R$ satisfying the differential function $\phi[X]$ w.r.t. attributes $X$ in $o_i$. 

%Specifically, we use $o_c.freq$ and $val.p$ to represent the number of eligible complete objects $o_c$ and the number of objects $o_c$ with attribute values $o_c[A_j]=val$, respectively. Thus, for the missing attribute $A_j$ of incomplete object $o_i$, we can obtain the existence probability $val.p$ of each possible value $val$ via $val.p = \frac{val.preq}{o_c.freq}$.

Specifically, any imputed value $o_i[A_j] = val$ is associated with an existence probability $val.p$, defined as the fraction of complete objects $o_c$ with attribute $o_c[A_j] = val$ among all complete objects in $R$ satisfying the distance constraint $\phi[X]$ with $o_i$ on attribute(s) $X$.

%defined as the number of complete objects $o_c$ with attribute $o_c[A_j] = val$ divided by the number of all complete objects $o_c$ satisfying the distance constraint $\phi[X]$ with $o_i$ on attribute(s) $X$.

%{\color{Xiang} (notations are confusing, $val.p$, $val.preq$? how to define "eligible"? why do we need to define val.p here? can you simplify the description by just using a sentence to describe it? For example, any imputed value $o_i[A_j] = val$ is associated with an existence probability val.p, defined as ... divided by ...) \color{Weilong}(professor, it is revised.)}

\vspace{1ex}\noindent {\bf Imputed Data Stream.} With DDs and data repository $R$, we can turn incomplete data streams into imputed data streams, which is defined as follows.

\begin{definition} \textbf{(Imputed Data Stream, $pDS$)}
Given an incomplete data stream $iDS=(o_1,o_2,...,o_r,...)$, DD rules, and a static data repository $R$, the imputed (uncertain) data stream, $pDS=(o_1^p, o_2^p, ..., o_r^p, ...)$, is composed of imputed (probabilistic) objects, $o_i^p$, by imputing missing attribute values of incomplete objects $o_i$ via DDs and $R$. 

Each imputed object $o_i^p \in pDS$ contains a number of probabilistic instances, $o_{il}$, with existence confidences $o_{il}.p$, where instances $o_{il}$ are mutually exclusive and meet $\sum_{o_{il} \in o_i^p} o_{il}.p = 1$.
\label{def:pDS}
\end{definition}

In Definition \ref{def:pDS}, each object $o_i^p$ in $pDS$ is complete, containing a set of probabilistic instances $o_{il}$. In this paper, for each instance $o_{il}$, we calculate its existence probability $o_{il}.p$ as the product of confidences $val.p$ of $d$ attribute values $val$ of $o_{il}$.

\vspace{1ex} \noindent {\bf Possible Worlds Over $pDS$.} We consider \textit{possible worlds} \cite{dalvi2007efficient}, $pw(W_t)$, over the sliding window, $W_t$, of imputed data stream $pDS$, which are materialized instances of the sliding window that may appear in reality.

\begin{definition} \textbf{(Possible Worlds of the Imputed Data Stream, $pw(W_t)$)}
Given a sliding window $W_t$ of an imputed data stream $pDS$, a possible world, $pw(W_t)$, is composed of some object instances $o_{il}$, where these instances $o_{il}$ covers all imputed objects $o_i^p \in W_t$ and each instance comes from different imputed objects $o_i^p \in W_t$. 

The appearance probability, $Pr\{pw(W_t)\}$, of each possible world $pw(W_t)$ can be calculated by:\vspace{-2ex}

\begin{eqnarray}
Pr\{pw(W_t)\} = \prod_{o_{il} \in pw(W_t)} o_{il}.p.
\label{eq:eq1}\vspace{-3ex}
\end{eqnarray}

\label{def:pw_pDS}\vspace{-2ex}
\end{definition}

In Definition \ref{def:pw_pDS}, each imputed object $o_i^p \in W_t$ contributes to one potential instance $o_{il}$ to $pw(W_t)$, making each possible world $pw(W_t)$ a combination of instances from imputed objects in sliding window $W_t$.

\subsection{Join Over Incomplete Data Streams}
\label{subsec:Join_iDS}

\vspace{1ex}\noindent {\bf The Join-iDS Problem.} Now, we are ready to formally define the \textit{join over incomplete data streams} (Join-iDS).

\begin{definition} {\textbf{(Join Over Incomplete Data Streams, Join-iDS)}} Given two incomplete data streams, $iDS_1$ and $iDS_2$, a distance threshold $\epsilon$, a current timestamp $t$, and a probabilistic threshold $\alpha$, the \textit{join over incomplete data streams} (Join-iDS) continuously monitors pairs of incomplete objects $o_x$ and $o_y$ within sliding windows $W_{1t} \in iDS_1$ and $W_{2t} \in iDS_2$, respectively, such that they are similar with probabilities, $Pr_{Join\text{-}iDS}(o_x^p, o_y^p)$, greater than threshold $\alpha$, that is,\vspace{-2ex}

\begin{eqnarray}
&&Pr_{Join\text{-}iDS}(o_x^p, o_y^p)= Pr\{dist(o_x^p, o_y^p) \leq \epsilon\}\label{eq:eq2}\\
&=& \sum_{\forall pw(W_{1t})} \sum_{\forall pw(W_{2t})} Pr\{pw(W_{1t})\} \cdot Pr\{pw(W_{2t})\}\notag \\
&&  \cdot \chi\big(dist(o_{xl},o_{yg}) \leq \epsilon \text{ }|\text{ } o_{xl}\in pw(W_{1t}), o_{yg}\in pw(W_{2t})\big)
\ge \alpha, \notag\vspace{-3ex}
\end{eqnarray}
\noindent where $o_{xl}$ and $o_{yg}$ are instances of the imputed objects $o_x^p$ and $o_y^p$, respectively, $dist(\cdot, \cdot)$ is a Euclidean distance function, and function $\chi(z)$ returns 1, if $z = true$ (or 0, otherwise).
\label{def:Join-iDS}
\end{definition}

%and function $\psi(o_{xl},o_{yg},\epsilon) = 1$ if the Euclidean distance between instances (or objects) $o_{xl}$ and $o_{yg}$ is no larger than distance threshold $\epsilon$, i.e., $dist(o_{xl},o_{yg}) \le \epsilon$ (otherwise $\psi(o_{xl},o_{yg},\epsilon) = 0$).

%{\color{Xiang} notations are too complex, can you simplify it? I used $W_{1t}$ and $W_{2t}$. Please change other places. you may use $o_x$, $o_y$ \color{Weilong}(professor, it is revised. And I will modify all notations and figures in Section 4, after the notations of this section is fixed.)}
%{\color{Xiang} I updated Def. 2.6. please change the paragraph below. \color{Weilong}(professor, I have updated all notations and figures in Section 4)}Thanks!

In Definition \ref{def:Join-iDS}, at timestamp $t$, Join-iDS will retrieve all pairs of incomplete objects, $(o_x, o_y)$, such that their distance is within $\epsilon$ threshold with Join-iDS probabilities, $Pr_{Join\text{-}iDS}(o_x^p, o_y^p)$ (as given by Eq.~(\ref{eq:eq2})), greater than or equal to $\alpha$, where $o_x^p \in W_{1t}$ and $o_y^p \in W_{2t}$. In particular, the Join-iDS probability, $Pr_{Join\text{-}iDS}(o_x^p, o_y^p)$, in Eq.~(\ref{eq:eq2}) is given by summing up probabilities that object instances $o_{xl}$ and $o_{yg}$ are within $\epsilon$-distance in possible worlds, $pw(W_{1t})$ and $pw(W_{2t})$.

%in which the distances between  (i.e., $\psi(o_{xl},o_{yg},\epsilon) = 1$).

\vspace{1ex}\noindent {\bf Challenges.} There are three major challenges to tackle the Join-iDS problem. First, existing works often assume that objects from data streams are either complete \cite{das2003approximate,lin2015scalable} or uncertain \cite{lian2010similarity,lian2009efficient}, and this assumption may not always hold in practice, due to reasons such as transmission delay or packet losses. Moreover, it is also non-trivial to obtain possible values of missing attributes. To our best knowledge, no prior work has studied the join operator over incomplete data streams. Thus, we should specifically design effective and efficient imputation strategies to infer incomplete objects from $iDS_1$ and $iDS_2$.

%the works on uncertain data streams \cite{lian2010similarity,lian2009efficient} cannot be directly used for the Join-iDS problem. 

Second, it is very challenging to efficiently solve the Join-iDS problem under \textit{possible worlds} \cite{dalvi2007efficient} semantics. The direct computation of Eq.~(\ref{eq:eq2}) (i.e., materializing all possible worlds of two incomplete data streams) has an exponential time complexity, which is inefficient, or even infeasible. Thus, we need to devise efficient approaches to reduce the search space of our Join-iDS problem.

Third, it is not trivial how to efficiently and effectively process the join operator over data streams with incomplete objects, which includes data imputation and join processing over imputed data streams. To efficiently handle the Join-iDS problem, in this paper, we perform data imputation and join processing at the same time. Therefore, we need to propose efficient Join-iDS processing algorithms, supported by effective pruning strategies and indexing mechanism.

%Third, it is not trivial how to efficiently and effectively process the join operator over data streams with incomplete objects, which requires performing data imputation and join processing at the same time. We need to propose efficient Join-iDS processing algorithms, supported by effective pruning strategies and indexing mechanism.

\begin{table}\hspace{-2ex}
{\small\scriptsize
    \caption{\small Symbols and descriptions.}
    \label{symbols_and_descriptions}%\vspace{-3ex}
    \begin{tabular}{l|l} \hline
    {\bf Symbol} & \qquad\qquad\qquad\qquad{\bf Description} \\ \hline \hline
    $iDS$ ($iDS_1$ or $iDS_2$)  & an incomplete data stream \\ \hline
    $pDS$  & an imputed (probabilistic) data stream \\ \hline
    $W_{1t}$ (or $W_{2t}$)   & the most recent $w$ objects from stream $iDS_1$ (or $iDS_2$) at timestamp $t$ \\ \hline
    $w$ & the size of the sliding window \\ \hline
    $pw(W_t)$ & a possible world of imputed (probabilistic) objects in sliding window $W_t$ \\ \hline    
    $o_i$ ($o_x$ or $o_y$)   & an (incomplete) object from stream $iDS$ ($iDS_1$ or $iDS_2$)\\ \hline    
%    $o_x$ ($o_y$)   & an object arriving at timestamp $x$ ($y$) from stream $iDS_1$ ($iDS_2$) \\ \hline    
    $o_i^p$ & an imputed probabilistic object of $o_i$ in the imputed stream $pDS$ \\ \hline
    $R$ & a static (complete) data repository \\ \hline
    $Lat_j$ & an imputation lattice for DDs with dependent attribute $A_j$ \\ \hline
    $I_j$ & an index built over $R$ for imputing attribute $A_j$ \\ \hline
    $\epsilon$-grid & a data synopsis containing objects $o_x^p$ and $o_y^p$ from streams\\ \hline
    $JS$ & a join set containing object pairs $(o_x^p, o_y^p)$\\ \hline
    \end{tabular}\vspace{3ex}  
}
\end{table}

\subsection{Join-iDS Processing Framework}

%{\color{Xiang} (notations in pseudo code need to be updated) \color{Weilong} professor, I have updated the notations in Algorithm 1, and will finish it after I finish Section 5. I used $o_i$ for Section 3 (since only one incomplete object is needed), and $o_x$ and $o_y$ for Sections 2, 4, and 5. Please let me know if I need to update any notations.}

Algorithm \ref{alg:Join_iDS_framework} illustrates a framework for Join-iDS processing, which consists of three phases. In the first \textit{pre-computation phase}, we offline establish \textit{imputation lattices} $Lat_j$ (for $1\le j\le d$), and build imputation indexes $I_j$ over a historical repository $R$ for imputing attribute $A_j$ (lines 1-2). Then, in the \textit{imputation and Join-iDS pruning} phase, we online maintain a data synopsis, called $\epsilon$-grid, over objects $o_x^p$ ($o_y^p$) from sliding window $W_{1t}$ ($W_{2t}$) of each incomplete data stream. In particular, for each expired object $o_x^{'}$ ($o_y^{'}$), we remove it from sliding window $W_{1t}$ ($W_{2t}$), update the $\epsilon$-grid, and update the join set, $JS$, w.r.t. object $o_x^p$ ($o_y^p$) (lines 3-6); for each newly arriving object $o_x$ ($o_y$), we will impute it by traversing indexes $I_j$ over $R$ with the help of DD rules (selected by the \textit{imputation lattice} $Lat_j$), prune false join objects $o_y^p \in W_{2t}$ ($o_x^p \in W_{1t}$) via the $\epsilon$-grid, and insert the imputed object $o_x^p$ ($o_y^p$) into the $\epsilon$-grid (lines 7-9). Finally, in the \textit{Join-iDS refinement phase}, we will calculate the join probabilities between $o_x^p$ ($o_y^p$) and each non-pruned object $o_y^p\in \epsilon$-grid ($o_x^p\in \epsilon$-grid), and return the join results $JS$ for all objects $o_v \in W_{1t}$ ($W_{2t}$) (lines 10-11).

Table~\ref{symbols_and_descriptions} depicts the commonly-used symbols and their descriptions in this paper.

\nop{
\begin{algorithm}[t!]\scriptsize
\KwIn{a set of DDs, $X_1 \to A_j$, $X_2 \to A_j$, ..., and $X_l \to A_j$, with the same dependent attribute $A_j$, and a static data repository $R$}
\KwOut{\textit{imputation lattice} $Lat_j$}
\If{$Lat_j$ is not created}{
    \For{levels $lv$ of $Lat_j$ from $l$ to 1}{
        \For{each DD on level $lv$ of $Lat_j$}{
            compute the \textit{efficacy metric}, $DD.ef$, of DD
        }
        sort DDs on level $lv$ of $Lat_j$ based on their $DD.ef$ in a decreasing order
    }
}

return $Lat_j$
\caption{DD Ranking}
\label{alg:DD_ranking}%\vspace{-2ex}
\end{algorithm}
}

\nop{
\begin{algorithm}[t!]\scriptsize
\KwIn{an incomplete object $o_i$ with missing attribute $A_j$, a imputation lattice $Lat_j$, and a static data repository $R$}
\KwOut{an effective and efficient DD for $o_i$}
$\mathcal{L} \leftarrow null$ \tcp{a list of available DDs for $o_i$}

%obtain $Lat_j$ by the DD ranking approach

\For{levels $lv$ from $l$ to 1 on $Lat_j$}{
    \For{each DD,  $Y\to A_j$, on level $lv$ sorted by $DD.ef$ in a decreasing order}{
    \tcp{$o_i.A_c$ is the complete attribute(s) of $o_i$}
        \If{$Y \subseteq o_i.A_c$}{
            insert $DD$ into $\mathcal{L}$
        }
    }
}

\For{each $DD$,  $Y\to A_j$, in $\mathcal{L}$}{
    \If{$cnt_Q \ge 1$
    %{\color{Xiang}the estimated number, $cnt_Q$, of samples $o_c$ via our cost model to impute incomplete object $o_i$ is greater than or equal to 1}
    }{
        return $DD$;
    }
}

return $null$

\caption{DD Selection {\color{Xiang} (please change the notation in red in the algorithm, and also discussions in paragraphs. \color{Weilong}(professor, it is revised.))} }
\label{alg:DD_selection}%\vspace{-2ex}
\end{algorithm}
}

\nop{
\begin{algorithm}[t!]\scriptsize
\KwIn{a DD $DD_y$ and a static data repository $R$}
\KwOut{the imputation efficacy of $DD_y$}
$ime \leftarrow 0$

\caption{DD Imputation of Efficacy}
\label{alg:DD_efficacy_imputation}%\vspace{-2ex}
\end{algorithm}
}

\begin{algorithm}[t!]\scriptsize
\KwIn{two incomplete data streams $iDS_1$ and $iDS_2$, a static (complete) data repository $R$, current timestamp $t$, an timestamp interval $w$, a distance threshold $\epsilon$, and a probabilistic threshold $\alpha$}
\KwOut{a join result set, $JS$, over $W_{1t}$ and $W_{2t}$}
\tcp{Pre-computation Phase}
offline establish \textit{imputation lattice}, $Lat_j$, based on detected DDs from $R$

offline construct imputation indexes, $I_j$, over data repository $R$

\tcp{Imputation and Join-iDS Pruning Phase}
\For{each expired object $o_x^{'} \in iDS_1$ ($o_y^{'} \in iDS_2$) at timestamp $t$}{
    evict $o_x^{'}$ ($o_y^{'}$) from $W_{1t}$ ($W_{2t}$)
    
    update $\epsilon$-grid over $W_{1t}$ ($W_{2t}$)
    
    update join set $JS$ }

\For{each new object $o_x$ ($o_y$) arriving at $W_{1t}$ ($W_{2t}$)}{
    traverse index, $I_j$, over $R$ and $\epsilon$-grid, over $W_{1t}$ ($W_{2t}$) at the same time to simultaneously enable DD attribute imputation and join set preselection. 
    
    insert the data information of $o_x^p$ ($o_y^p$) into $\epsilon$-grid
    
    %update join set $JS$
}

\tcp{Join-iDS Refinement Phase}
calculate the join probabilities between $o_x^p$ ($o_y^p$) with each candidate $o_y^p \in \epsilon$-grid ($o_x^p\in \epsilon$-grid), and add the join pairs $(o_x^p,o_y^p)$ into $JS$

return the join sets, $JS$, for all objects $o_v \in W_{1t}$ ($W_{2t}$) as join results 

\caption{Join-iDS Processing Framework}
\label{alg:Join_iDS_framework}%\vspace{-2ex}
\end{algorithm}

\vspace{1ex}
\section{Imputation of incomplete objects via DDs}
\label{sec:imputation_of_io}

%As mentioned in Section \ref{subsec:imputation_iDS}, we impute the missing attributes $A_j$ (for $1\le j\le d$) of incomplete objects $o_i$ by a single DD rule with dependent attribute $A_j$. However, in some scenario, there may be more than one DD rule with the same dependent attributes $A_j$. In this section, we explore how to select an appropriate DD from multiple available DDs to impute the missing attributes. In particular, we introduce a DD ranking approach, based on a \textit{imputation lattice} and a DD efficacy metric, and a cost-model based DD selection algorithm to select an effective and efficient DD.

%{\color{Xiang} (briefly mention imputation via a single DD, then directly go to multiple ones below. You can omit the paragraph above and merge single DD into paragraph below.) \color{Weilong} (professor, it is revised.)}

\vspace{1ex}\noindent {\bf Data Imputation via DDs.} In Section \ref{subsec:imputation_iDS}, we discussed how to impute the missing attribute $A_j$ (for $1\le j\le d$) of an incomplete object $o_i$ by a single DD: $X\to A_j$. In practice, we may encounter multiple DDs with the same dependent attribute $A_j$, $X_1 \to A_j$, $X_2 \to A_j$, ..., and $X_l \to A_j$. In this case, one straightforward way is to combine all these DDs, that is, $(X_1X_2X_3...X_l \to A_j, \phi[X_1X_2...X_l A_j])$, to impute the missing attribute $o_i[A_j]$. By doing this, we may obtain a more selective query range, $X_1.I \land X_2.I\land ...\land X_l.I$, which may lead to not only more precise imputation results, but also the reduced imputation cost (i.e., with a smaller query range). However, to enable the imputation, such a combination has two requirements: (1) there should be at least one sample $o_c$ in data repository $R$ that satisfies the distance constraints $\phi[X_1X_2...X_l]$ w.r.t. $o_i$, and (2) incomplete object $o_i$ must have complete values on all attributes $X_1X_2...X_l$. Both requirements may not always hold, thus, alternatively we need to select a ``good'' subset of attributes $X_1X_2...X_l$ to impute $o_i[A_j]$.

\vspace{1ex}\noindent {\bf Imputation Lattice ($Lat_j$).} We propose a \textit{imputation lattice}, $Lat_j$ (for $1 \le j \le d$), which stores the combined DDs with all possible subsets of attributes $X_1X_2...X_l$, and can be used for selecting a ``good'' combined DD rule. In particular, each lattice $Lat_j$ has $l$ levels. Level 1 contains the $l$ original DD rules, with determinant attributes $X_1, X_2,$ $...$, and $X_l$; Level 2 has  $\big(^l_2\big)$ (i.e., $\frac{l\times (l-1)}{2}$) combined DDs, with determinant attributes such as $X_1X_2, X_1X_3,$ $...$, and $X_{l-1}X_l$; and so on. Finally, on Level $l$. there is only one combined DD rule, i.e., $X_1X_2X_3...X_l \to A_j$. 

\vspace{1ex}\noindent {\bf DD Selection Strategy.} Given an \textit{imputation lattice} $Lat_j$, we select a good DD rule from $Lat_j$ based on two principles. First, DDs on higher levels of $Lat_j$ (e.g., Level $l$) will have stronger imputation power than those on lower levels (e.g., Level $1$), since DDs on higher levels of $Lat_j$ tend to have more accurate imputation results and lower imputation cost. Second, for those DDs, $DD$, on the same level in $Lat_j$, we will offline estimate the expected numbers, $cnt(DD)$, of objects $o_c\in R$ that can be used for imputation via $DD$. We designed a cost model (via \textit{fractal dimension} \cite{belussi1998self}) for estimating $cnt(DD)$ in Appendix \ref{subsec:cost_DD_selection}. Since smaller $cnt(DD)$ indicates lower imputation cost and we need at least one sample for imputation, we rank DDs on the same level, first in increasing order for $cnt(DD) \ge 1$, and then in decreasing order for $cnt(DD) < 1$.

Given an incomplete object $o_i$ with missing attribute $A_j$, we traverse the lattice $Lat_j$ from Level $l$ to Level 1. On each level, we will access DDs in the offline pre-computed order as mentioned above. For each DD we encounter, we will online estimate the number of samples $o_c\in R$ for imputing attribute $A_j$ w.r.t. incomplete object $o_i$ (as given by Appendix \ref{subsec:cost_DD_selection}). If the expected number of objects for imputation is greater than or equal to 1, we will stop the lattice traversal, and use the corresponding DD for the imputation. 

Our proposed data imputation approaches via DDs are verified to be effective and efficient, whose empirical evaluation will be later illustrated in in Sections \ref{subsec:Join_effectiveness} and \ref{subsec:Join_efficiency}, respectively.

\section{Pruning Strategies}
\label{sec:pruning_strategies}

\subsection{Problem Reduction}
\label{subsec:problem_redu}

%Actually, the join probability between $o_x$ and $o_y$ is only related to the distances between their possible instances (i.e., ignoring other objects in sliding windows).

As given in Eq.~(\ref{eq:eq2}) of Section \ref{subsec:Join_iDS}, it is inefficient, or even infeasible, to compute join probabilities between two (incomplete) objects, $o_x$ and $o_y$, by enumerating an exponential number of \textit{possible worlds}. In this subsection, we reduce the problem of calculating the join probability, $Pr_{Join\text{-}iDS}(o_x^p,o_y^p)$, between $o_x$ and $o_y$ from possible-world level to that on object level, and rewrite Eq.~(\ref{eq:eq2}) as:\vspace{-2ex}

\begin{eqnarray}
&&Pr_{Join\text{-}iDS}(o_x^p,o_y^p) = Pr\{dist(o_x^p, o_y^p) \leq \epsilon\} \notag \\
&=& \sum_{\forall o_{xl} \in o_x^p} \sum_{\forall o_{yg} \in o_y^p} o_{xl}.p \cdot o_{yg}.p \cdot \chi(dist(o_{xl},o_{yg})\le \epsilon),
\label{eq:eq3}\vspace{-3ex}
\end{eqnarray}

\noindent where $o_{xl}.p$ and $o_{yg}.p$ are the existence confidences of instances $o_{xl} \in o_x^p$ and $o_{yg} \in o_y^p$, respectively, and function $\chi(\cdot)$ is given in Definition \ref{def:Join-iDS}.

In Eq.~(\ref{eq:eq3}), we consider all pairs, $(o_{xl},o_{yg})$, of instances $o_{xl} \in o_x^p$ and $o_{yg} \in o_y^p$, which is much more efficient than materializing all possible worlds, but still incurs $O(|o_x^p|\cdot |o_y^p|)$ cost, where $|o_i^p|$ is the number of instances in the imputed object $o_i^p$. Thus, in the sequel, we will design effective pruning rules to accelerate Join-iDS processing.

%even if we reduce the join calculation between two incomplete objects from possible-world level to object level, 

\begin{figure}%[t!]
\centering %\vspace{-2ex} 
\subfigure[][{\small object-level pruning}]{                    
\scalebox{0.18}[0.18]{\includegraphics{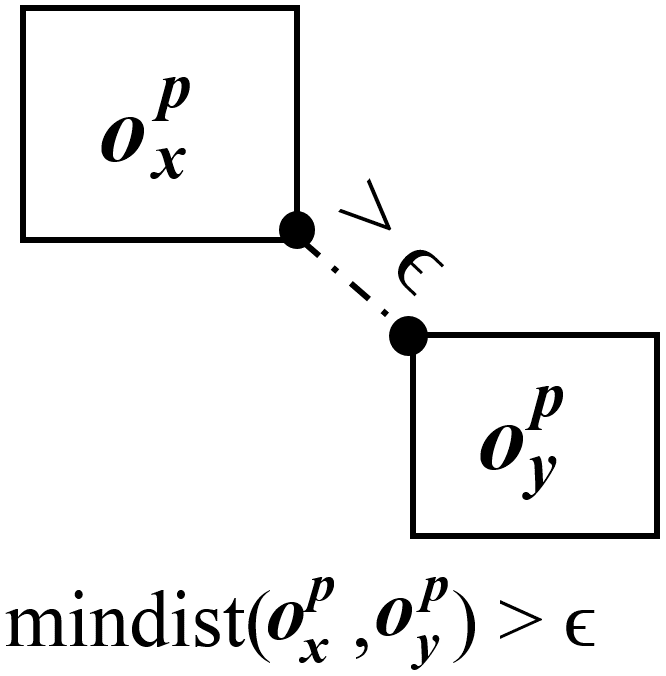}}\label{subfig:pruning1}          
}\qquad%\qquad
\nop{
\subfigure[][{\small pivot pruning}]{
\scalebox{0.17}[0.17]{\includegraphics{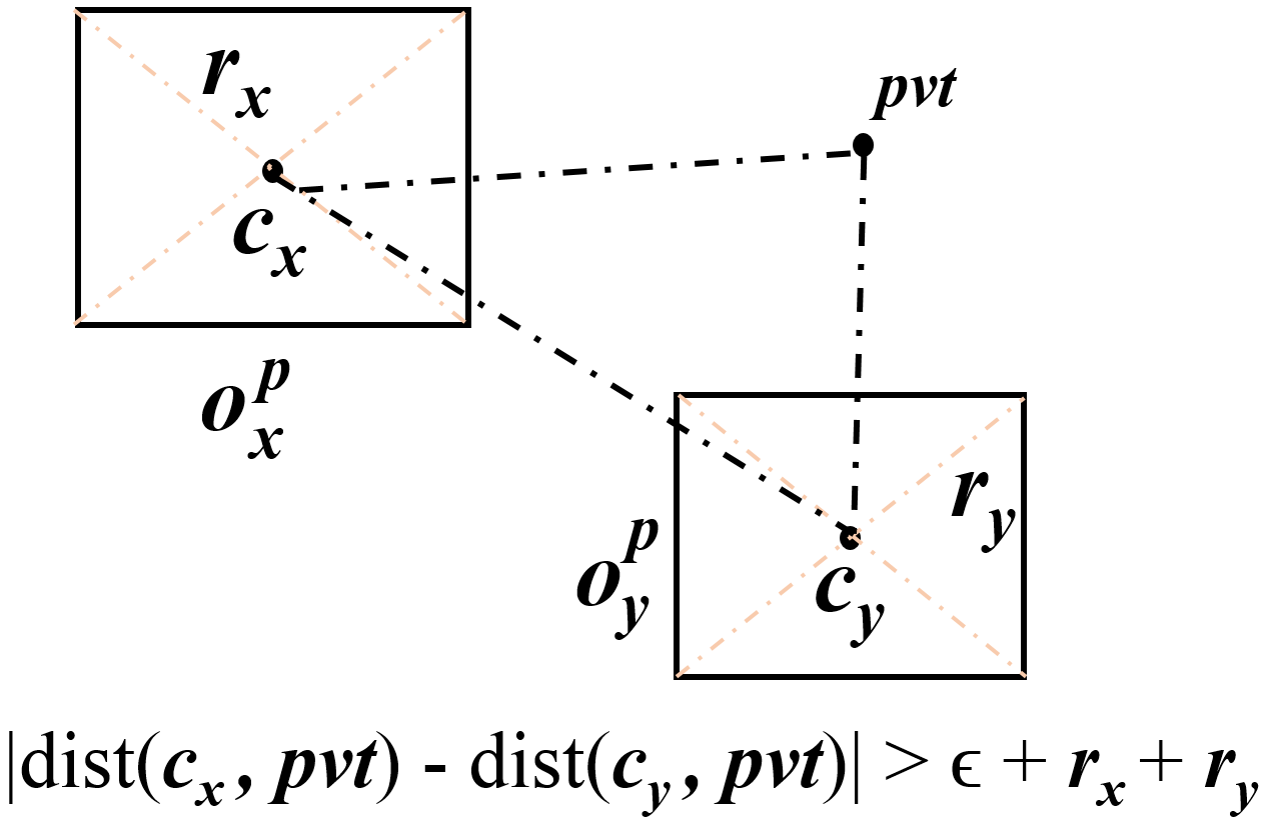}}\label{subfig:pruning2}       
}\qquad\qquad
}
\subfigure[][{\small sample-level pruning}]{
\scalebox{0.15}[0.15]{\includegraphics{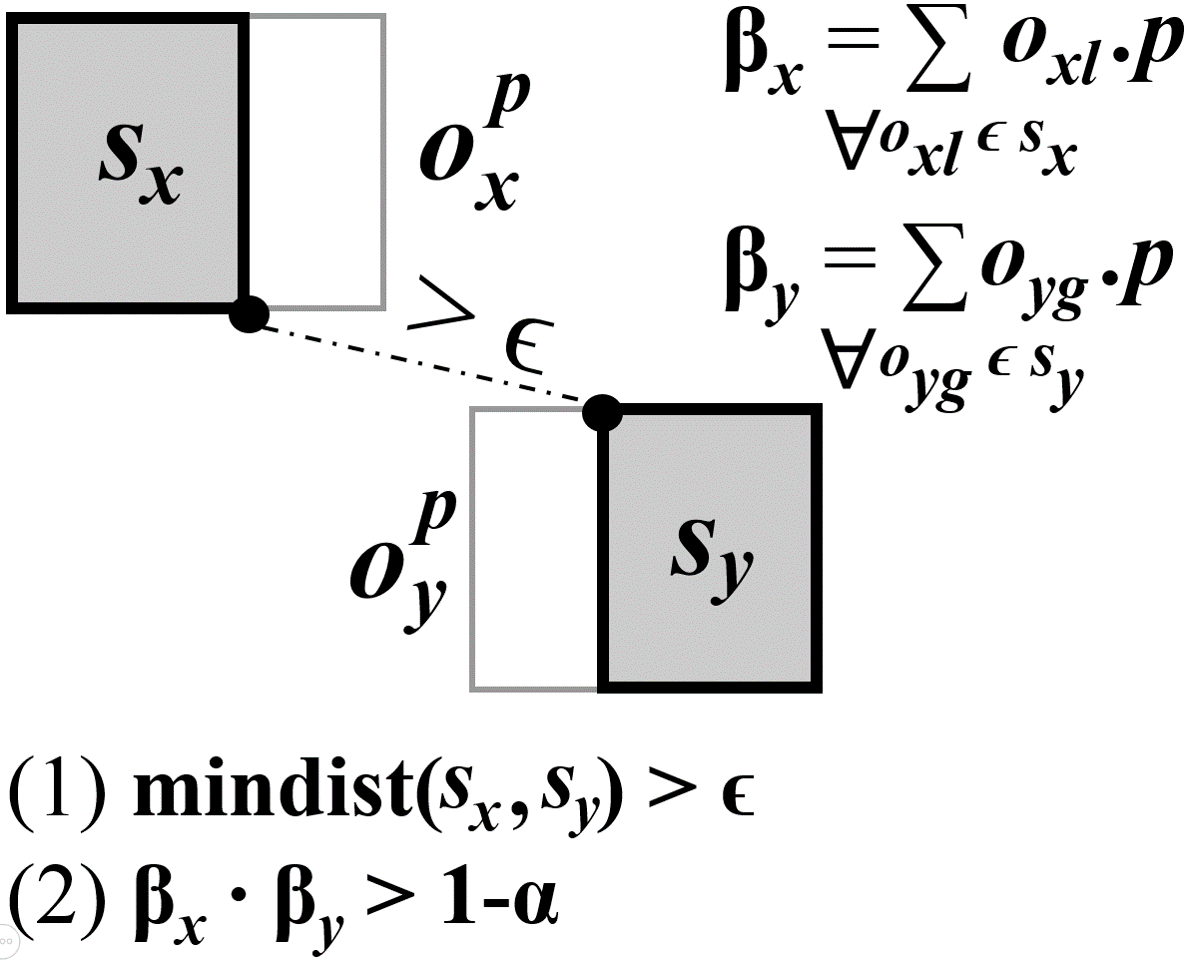}}\label{subfig:pruning3} 
}\vspace{-1ex}  
\caption{\small Illustration of pruning strategies.} \label{fig:pruning} \vspace{2ex}                           
\end{figure}

\subsection{Pruning Rules}
\label{subsec:pruning_rules}
Below, we propose two pruning strategies, \textit{object-level} and \textit{sample-level pruning}, to reduce the Join-iDS search space. The latter one will be used, if an object pair cannot be pruned by the former one.

%{\color{Xiang} (Since you use the term ``object-level pruning'', why do you give another name spatial pruning? just use Object-Level Pruning please)}

\noindent {\bf Object-Level Pruning.} Given two incomplete objects $o_x \in W_{1t}$ and $o_y \in W_{2t}$, our first pruning rule, namely \textit{object-level pruning}, is to utilize the boundaries of the imputed objects $o_x^p$ and $o_y^p$, and filter out the object pair $(o_x, o_y)$ if their minimum possible distance $mindist(o_x^p,o_y^p)$ is greater than the distance threshold $\epsilon$. Here, the boundary of an imputed object $o_x^p$ (or called \textit{minimum bounding rectangle} (MBR)) encloses all instances of $o_x^p$ and has an imputed interval, $[o_x^p[A_j].min,o_x^p[A_j].max]$, for any missing attribute $A_j$ (for $1\le j\le d$).

%leverage the distance threshold $\epsilon$ and the minimum distance, $mindist(o_x^p,o_y^p)$, between their imputed objects $o_x^p$ and $o_y^p$, to filter out pairs of objects, $\{o_x,o_y\}$, with low pair confidences ($=0$).

%For an incomplete object $o_x$, its imputed object $o_x^p$ is actually a \textit{minimum bounding rectangle} (MBR), $o_x^p.MBR$, with an interval, $[o_x^p[A_j].min,o_x^p[A_j].max]$, of possible values on attributes $A_j$ (for $1\le j\le d$). 

\begin{lemma} {\bf (Object-Level Pruning)}
Given two incomplete objects, $o_x$ and $o_y$, from sliding windows $W_{1t}\in iDS_1$ and $W_{2t}\in iDS_2$, respectively, if $mindist(o_x^p,o_y^p) > \epsilon$ holds, then object pair $(o_x, o_y)$ can be safely pruned.
\label{lemma:lem1}
\end{lemma}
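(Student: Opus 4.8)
The plan is to show that the minimum bounding rectangle (MBR) distance $mindist(o_x^p, o_y^p)$ is a lower bound for the distance between any pair of instances $o_{xl} \in o_x^p$ and $o_{yg} \in o_y^p$, and then conclude that if even this lower bound exceeds $\epsilon$, no instance pair can satisfy the distance constraint, forcing the Join-iDS probability to be zero.

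First I would recall that the MBR of $o_x^p$ encloses all of its instances: for every attribute $A_j$ (missing or complete), and every instance $o_{xl} \in o_x^p$, we have $o_{xl}[A_j] \in [o_x^p[A_j].min, o_x^p[A_j].max]$; for a complete attribute this interval is a single point. The same holds for $o_y^p$. The quantity $mindist(o_x^p, o_y^p)$ is the standard rectangle-to-rectangle minimum Euclidean distance, computed per-dimension: in dimension $j$ the contribution is $\max\{0,\ o_x^p[A_j].min - o_y^p[A_j].max,\ o_y^p[A_j].min - o_x^p[A_j].max\}$, and these are summed in quadrature and square-rooted.

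Next I would argue dimension by dimension that $|o_{xl}[A_j] - o_{yg}[A_j]|$ is at least the per-dimension gap used in $mindist$: if the two intervals overlap the gap is $0$ and the bound is trivial; otherwise one interval lies entirely to one side of the other, and since $o_{xl}[A_j]$ and $o_{yg}[A_j]$ lie in their respective intervals, their separation is at least the distance between the intervals. Summing the squared per-dimension bounds and taking square roots (monotonicity of $\sqrt{\cdot}$ and of the sum) yields $dist(o_{xl}, o_{yg}) \ge mindist(o_x^p, o_y^p)$ for every instance pair. Hence if $mindist(o_x^p, o_y^p) > \epsilon$, then $dist(o_{xl}, o_{yg}) > \epsilon$ for all $o_{xl}, o_{yg}$, so every indicator term $\chi(dist(o_{xl}, o_{yg}) \le \epsilon)$ in Eq.~(\ref{eq:eq3}) vanishes, giving $Pr_{Join\text{-}iDS}(o_x^p, o_y^p) = 0 < \alpha$; thus the pair $(o_x, o_y)$ is not a Join-iDS result and can be safely discarded.

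I expect the only mild subtlety — not really an obstacle — to be stating precisely what $mindist$ between two MBRs means and confirming it decomposes coordinatewise so that the per-dimension inequality can be lifted to the full Euclidean distance; once that is pinned down, the argument is a short monotonicity chain. Everything else (the MBR containing all instances, the probability being a nonnegative combination of the indicator terms) is immediate from Definition~\ref{def:pDS} and Eq.~(\ref{eq:eq3}).
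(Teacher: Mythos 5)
Your proposal is correct and follows essentially the same route as the paper's proof: both rest on the fact that $mindist(o_x^p,o_y^p)$ lower-bounds the distance between every instance pair, so $mindist(o_x^p,o_y^p)>\epsilon$ forces $Pr_{Join\text{-}iDS}(o_x^p,o_y^p)=0<\alpha$. You merely spell out the coordinatewise decomposition of $mindist$ and work through Eq.~(\ref{eq:eq3}) explicitly, whereas the paper states the bound $Pr\{dist(o_x^p,o_y^p)\leq\epsilon\}\le Pr\{mindist(o_x^p,o_y^p)\leq\epsilon\}=0$ directly.
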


\nop{
\noindent where the minimum distance $dist(o_x^p,o_y^p).min$ can be calculated in Eq. (\ref{eq:eq4}).
\begin{eqnarray}
dist(o_x^p,o_y^p).min = \sqrt{\sum_{j=1}^d \omega^2(o_x^p,o_y^p,A_j)}
\label{eq:eq4}
\end{eqnarray}
\noindent where $\omega(o_x^p,o_y^p,A_j)$ is the function to obtain the minimum distance between imputed objects $o_x^p$ and $o_y^p$ on attributes $A_j$, which is calculated via Eq. (\ref{eq:eq5}).
\begin{eqnarray}
\begin{cases}
o_x^{p}[A_j].min-o_y^p[A_j].max& if\ o_y^p[A_j].max < o_x^p[A_j].min\\
o_y^p[A_j].min-o_x^p[A_j].max& if\ o_x^p[A_j].max < o_y^p[A_j].min\\
0& \text{otherwise}
\end{cases}
\label{eq:eq5}
\end{eqnarray}
}
\begin{proof}
Please refer to Appendix \ref{subsec:proof_lem1}.
\end{proof} 

%As shown in Figure \ref{subfig:pruning1}, for two incomplete objects $o_x$ and $o_y$, if the minimum distance, $mindist(o_x^p,o_y^p)$, between their imputed objects $o_x^p$ and $o_y^p$ is larger than distance threshold $\epsilon$, then they cannot be joined (i.e., $Pr_{Join\text{-}iDS}(o_x^p,o_y^p)=0$). 

Figure \ref{subfig:pruning1} illustrates an example of Lemma \ref{lemma:lem1}. Intuitively, if $mindist(o_x^p,o_y^p)>\epsilon$ holds, then two imputed objects (MBRs), $o_x^p$ and $o_y^p$, are far away from each other, and any instance pair from them cannot be joined (i.e., $Pr_{Join\text{-}iDS}(o_x^p,o_y^p)=0$). Thus, object pair $(o_x, o_y)$ can be safely pruned.

%Please note, the MBR of imputed objects may be obtained directly via intervals of DDs, or data index (R*-tree).

\nop{
\noindent {\bf Pivot pruning.} Our second pruning method, namely \textit{pivot pruning}, is to prune pairs of incomplete objects, $o_x \in W_{1t}$ and $o_y \in W_{2t}$, with low join probabilities ($=0$) via triangle inequality and a set, $PVT$, of $h$ pivots $pvt$ from static data repository $R$.

\begin{lemma} {\bf (Pivot Pruning)}
Given two incomplete objects $o_x \in W_{1t}$ and $o_y \in W_{2t}$, and a pivot $pvt \in PVT$, if $|dist(c_x,pvt)-dist(c_y,pvt)|>\epsilon + r_x + r_y$, then pair $\{o_x,o_y\}$ can be safely pruned.
\label{lemma:lem2}
\end{lemma}
\noindent where $c_*$ and $r_*$ are the center and half diagonal length of the MBR $o_*^p.MBR$ of imputed object $o_*^p$ ($*=x$ or $y$), respectively.
\begin{proof}
Please refer to Appendix \ref{subsec:proof_lem2}.
\end{proof} 

Figure \ref{subfig:pruning2} illustrates the case of the \textit{pivot pruning}. Instead of the direct comparison between imputed objects $o_x^p$ and $o_y^p$, the pair, $\{o_x,o_y\}$, can be safely pruned by leveraging the triangle inequality between two center points, $c_x$ and $c_y$, and a pivot $pvt$.

Actually, given the pivot set $PVT \in R$, since the value of $\epsilon+r_x+r_y$ is fixed, we can choose a pivot $pvt^{'} \in PVT$ far from objects $o_x^p$ and $o_y^p$ to do the pruning, which leads to \textit{Corollary} \ref{corollary:cor1}.

\begin{corollary} {\bf (Pivot Set Pruning)}
Given two incomplete objects $o_x \in W_{1t}$ and $o_y\in W_{2t}$, and a pivot $pvt^{'}=\max\limits_{\forall pvt} |dist(c_x,pvt) - dist(c_y,pvt)|$, if $|dist(c_x,pvt^{'}) - dist(c_y,pvt^{'})| > \epsilon+r_x+r_y$, then pair $\{o_x,o_y\}$ can be safely pruned.
\label{corollary:cor1}
\end{corollary}
\begin{proof}
We can proof this corollary with the same idea as \textit{Lemma} \ref{lemma:lem2}, please refer to the proof in \textit{Lemma} \ref{lemma:lem2}.
\end{proof} 
In \textit{Corollary} \ref{corollary:cor1}, $pvt^{'}$ is the pivot in $PVT$ with the strongest pruning power, which leads to the largest value of  $|dist(c_x,pvt^{'}) - dist(c_y,pvt^{'})|$ among all $h$ pivots in $PVT$. To retrieve the best pivot set $PVT \in R$, we specially design a cost model, please refer to Section \ref{subsec:cost_for_PVT}.
}

%If the pivot $pvt^{'}$ cannot prune the pair, $\{o_x,o_y\}$, of objects $o_x^p$ and $o_y^p$ (i.e., $|dist(c_x,pvt^{'})-dist(c_y,pvt^{'})| - r_x - r_y > \epsilon$), then none of the $h$ pivots $pvt \in PVT$ can help prune $\{o_x,o_y\}$. 

\vspace{1ex}\noindent {\bf Sample-Level Pruning.} The object-level pruning rule cannot filter out object pairs with non-zero Join-iDS probabilities $Pr_{Join\text{-}iDS}(o_x,$ $o_y)$ ($\in (0, \alpha)$). Thus, we present a \textit{sample-level pruning} method, which aims to rule out those false alarms with low Join-iDS probabilities, by considering instances of imputed objects $o_x^p$ and $o_y^p$.

%Given the imputed objects (MBRs) $o_x^p$ and $o_y^p$,ai wme present a \textit{sample-level pruning} method, which rules out false alarms via sub-MBRs, $s_x \subseteq o_x^p$ and $s_y \subseteq o_y^p$, respectively, where $s_x$ (or $s_y$) is a sub-MBR such that object $o_x$ (or $o_y$) falls into $s_x$ (or $s_y$) with some probability $\beta_x$ (or $\beta_y$).

%{\color{Weilong} Specifically, for an imputed object (MBR) $o_i^p$ ($i=x$ or $y$), it has 2 sub-MBRs $s_i$ on each dimension $A_j$ ($1\le j\led$), where $s_i$ is a consecutive sub-MBR cut from the left or right side of $o_i^p$ on dimension $A_j$, and consists of at least $\beta_i \times |o_{il}|$ ($\beta_i \in [0,1]$ is a parameter and $|o_{il}|$ is the instance number of imputed object $o_i^p$) instances $o_{il}\in o_i^p$.}

%{\color{Xiang} You did not discuss how to define $s_x$ and $\beta_x$ in the paragraph above! You may need to discuss there are multiple sub-MBRs $s_x$ w.r.t. different dimensions.}

%{\color{Xiang} From Fig. 2(b), $s_x$ seems to be an interval on a dimension, but it should be an MBR, right? Can you revise your figure to indicate $s_x$ is a sub-MBR? \color{Weilong}(professor, it is revised.)}
%Or maybe you can use $\beta\text{-}MBR_x$ instead of $s_x$?

%. Instead of using the whole MBRs ($o_x^p.MBR$ and $o_y^p.MBR$) or all instances ($o_{xl} \in o_x^p$ and $o_{yg} \in o_y^g$) of imputed objects $o_x^p$ and $o_y^p$, the \textit{sample-level pruning} will prune false alarms via sub MBRs, $s_x$ and $s_y$, of $o_x^p.MBR$ and $o_y^p.MBR$. 

\begin{lemma} {\bf (Sample-Level Pruning)}
Given two incomplete objects $o_x \in W_{1t}$ and $o_y \in W_{2t}$, and two sub-MBRs, $s_x \subseteq o_x^p$ and $s_y \subseteq o_y^p$, the object pair, $(o_x, o_y)$, can be safely pruned, if $mindist(s_x,s_y) > \epsilon$ and $\beta_x \cdot \beta_y > 1- \alpha$ hold, \noindent where $\beta_x = \sum_{\forall o_{xl} \in s_x}$ $o_{xl}.p$ is the summed probability that instances $o_{xl} \in o_x^p$ fall into sub-MBR $s_x$ (the same for $\beta_y$ w.r.t. $s_y$).
\label{lemma:lem3}
\end{lemma}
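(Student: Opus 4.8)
The plan is to upper-bound the Join-iDS probability $Pr_{Join\text{-}iDS}(o_x^p, o_y^p)$ by $1-\beta_x\beta_y$ and then invoke the second hypothesis $\beta_x\beta_y > 1-\alpha$ to conclude that this probability is strictly below $\alpha$, so that the pair $(o_x,o_y)$ fails the membership condition of Definition~\ref{def:Join-iDS} and can be discarded. I would start from the object-level rewriting in Eq.~(\ref{eq:eq3}), which expresses $Pr_{Join\text{-}iDS}(o_x^p, o_y^p)$ as the double sum over instance pairs $(o_{xl}, o_{yg})$ of $o_{xl}.p\cdot o_{yg}.p\cdot\chi\big(dist(o_{xl},o_{yg})\le\epsilon\big)$, since this is already established as equivalent to the possible-worlds formulation.

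The first key step is to observe that every instance pair lying entirely inside the two sub-MBRs contributes nothing to the sum. Indeed, if $o_{xl}\in s_x$ and $o_{yg}\in s_y$, then $dist(o_{xl},o_{yg})\ge mindist(s_x,s_y) > \epsilon$ by the first hypothesis (the sub-MBRs are axis-parallel boxes enclosing the respective instances, exactly as in Lemma~\ref{lemma:lem1}), so $\chi(\cdot)=0$. Hence the double sum effectively ranges only over the ``boundary-crossing'' pairs for which $o_{xl}\notin s_x$ or $o_{yg}\notin s_y$, and on each such pair we crudely bound $\chi(\cdot)\le 1$, obtaining $Pr_{Join\text{-}iDS}(o_x^p,o_y^p)\le \sum_{o_{xl}\notin s_x\,\vee\,o_{yg}\notin s_y} o_{xl}.p\cdot o_{yg}.p$.

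The second step is a normalization argument. Since the instances of an imputed object are mutually exclusive with total mass one, i.e. $\sum_{o_{xl}\in o_x^p} o_{xl}.p = \sum_{o_{yg}\in o_y^p} o_{yg}.p = 1$ by Definition~\ref{def:pDS}, the total weight $\sum_{(o_{xl},o_{yg})} o_{xl}.p\cdot o_{yg}.p$ over all instance pairs equals $1$, and the weight of the pairs with $o_{xl}\in s_x$ \emph{and} $o_{yg}\in s_y$ equals $\beta_x\cdot\beta_y$. Therefore the boundary-crossing pairs carry total weight $1-\beta_x\beta_y$, which yields $Pr_{Join\text{-}iDS}(o_x^p,o_y^p)\le 1-\beta_x\beta_y < 1-(1-\alpha)=\alpha$, exactly the condition under which $(o_x,o_y)$ is not a Join-iDS answer and may be safely pruned.

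I do not expect a genuine obstacle here: the argument is a short counting/normalization bound rather than anything delicate. The only points that warrant a line of care are (i) verifying that $mindist(s_x,s_y)$ is really a lower bound on $dist(o_{xl},o_{yg})$ for all instances confined to the sub-MBRs, which follows immediately from the MBR semantics already used in the object-level pruning, and (ii) keeping the final inequality strict — it is, because the hypothesis $\beta_x\beta_y > 1-\alpha$ is strict, so even in the worst case where the upper bound is attained the probability stays safely below the threshold $\alpha$.
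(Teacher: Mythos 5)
Your argument matches the paper's own proof in Appendix A.2: both split the instance-pair sum according to whether $(o_{xl},o_{yg})$ lies in $s_x\times s_y$, kill the inside part via $mindist(s_x,s_y)>\epsilon$, bound the remaining mass by $1-\beta_x\beta_y$ using the normalization of instance probabilities, and conclude $Pr_{Join\text{-}iDS}(o_x^p,o_y^p)<\alpha$ from the strict hypothesis $\beta_x\beta_y>1-\alpha$. The proposal is correct and essentially identical in approach (and stated a bit more cleanly than the paper's version).
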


\begin{proof}
Please refer to Appendix \ref{subsec:proof_lem3}.
\end{proof} 

Figure \ref{subfig:pruning3} shows an example the \textit{sample-level pruning} in Lemma \ref{lemma:lem3}, which considers instances of imputed objects $o_x^p$ and $o_y^p$, and uses their sub-MBRs, $s_x \subseteq o_x^p$ and $s_y \subseteq o_y^p$, to enable the pruning, where $s_x$ (or $s_y$) is a sub-MBR such that object $o_x$ (or $o_y$) falls into $s_x$ (or $s_y$) with probability $\beta_x$ (or $\beta_y$). Intuitively, if $mindist(s_x,s_y) > \epsilon$ and $\beta_x \cdot \beta_y > 1-\alpha$ hold, then we can prove that the object pair $(o_x, o_y)$ has low join probability (i.e., $<\alpha$), and can be safely pruned.

\section{Join over incomplete data streams}
\label{sec:join_over_iDS}

In this section, we first design a data synopsis for incomplete data streams and imputation indexes over data repository $R$, and then propose an efficient Join-iDS processing algorithm to retrieve the join results via synopsis/indexes.

\begin{figure}[t!]
\centering%\vspace{2ex}
\hspace{2ex}\includegraphics[scale=0.16]{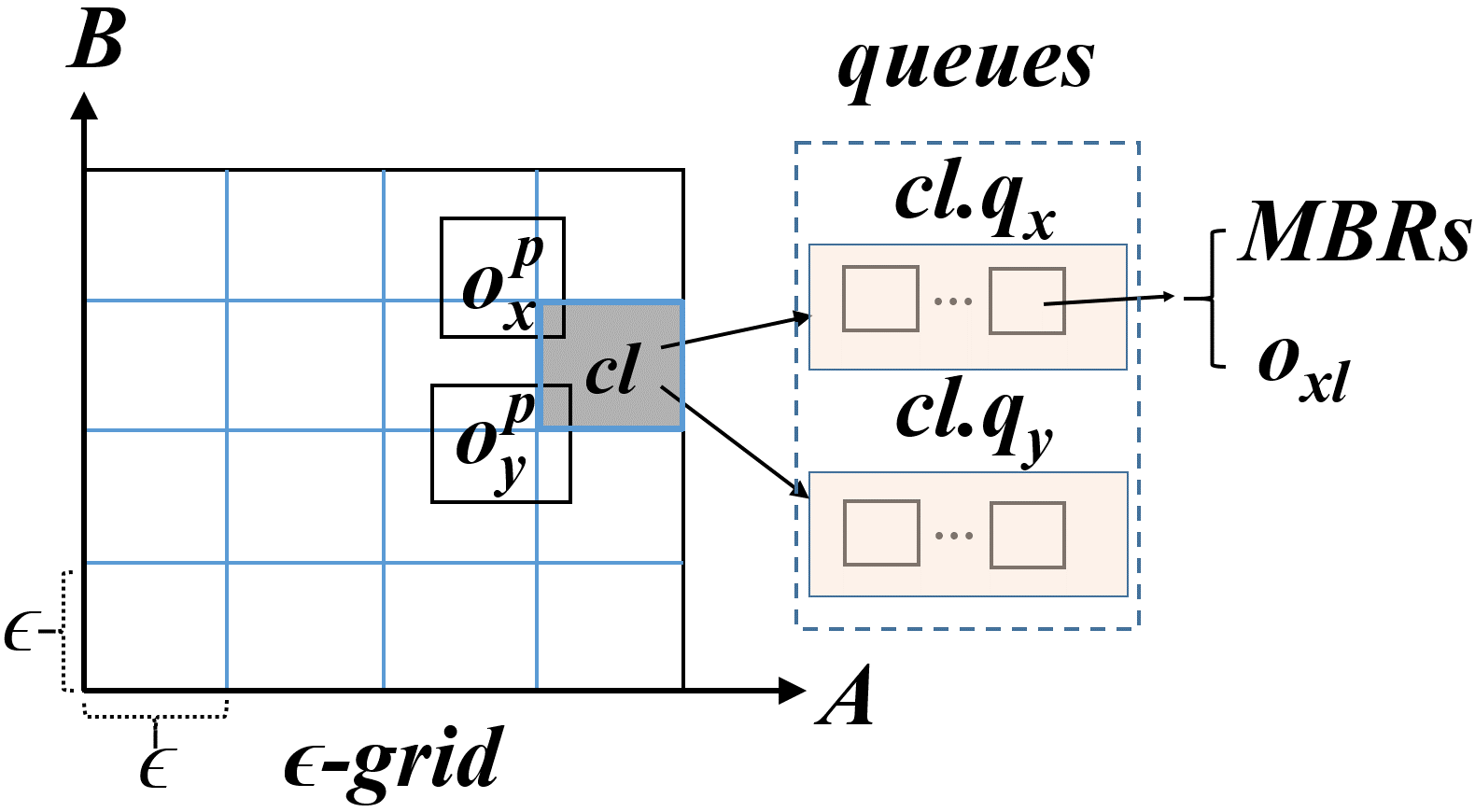}\vspace{-1ex}
\caption{\small Illustration of a 2D $\epsilon$-grid over incomplete data streams.}\label{fig:e_grid}
\end{figure}

\subsection{Grid Synopsis and Imputation Indexes}
\label{subsec:grid_synopsis}

%{\color{Xiang} please also update $JS$ in the framework section \color{Weilong} (professor, it is revised.)}

\noindent {\bf $\epsilon$-Grid Over Imputed Data Streams.} We will incrementally maintain a data synopsis, namely \textit{$\epsilon$-grid}, over (imputed) objects $o_x^p$ and $o_y^p$ from sliding windows $W_{1t}\in iDS_1$ and $W_{2t}\in iDS_2$, respectively. Specifically, to construct the \textit{$\epsilon$-grid}, we divide the data space into equal grid cells with side length $\epsilon$ along each dimension (attribute $A_j$). Each cell, $cl$, is associated with two queues, $cl.q_x$ and $cl.q_y$, which sequentially store imputed objects $o_x^p\in W_{1t}$ and $o_y^p\in W_{2t}$, respectively, that intersect with this cell $cl$. Each imputed object $o_x^p$ (or $o_y^p$) contains information as follows:
\begin{enumerate}
%\item the center, $c_i$, of MBR $e$ of imputed object $o_i^p$,
%\item the half diagonal length, $r_i$, of the MBR $e$,
%\item the set of distances, $dist(c_i,pvt)$, between the center $c_i$ and each pivot $pvt \in PVT^*$ (obtained by cost model in Section \ref{subsec:cost_for_PVT}),
\item a set of currently accessed MBR nodes, $MBRs$, in the R$^*$-tree over data repository $R$ for imputation (as will be discussed later in this subsection), or;
\item a set of instances, $o_{xl}$ (or $o_{yg}$), in $o_x^p$ (or $o_y^p$).
\end{enumerate}

%The \textit{$\epsilon$-grid} is a grid structure over data repository $R$, where the space of $R$ on dimension (attribute) $A_j$ (for $1\le j\le d$) is divided into $\lceil \frac{R.I_j}{\epsilon} \rceil$ portions ($R.I_j$ is the interval of $R$ on dimension $A_j$ and $\lceil \cdot \rceil$ is the smallest integer no smaller than $\cdot$). In particular, each cell (with edge length $\epsilon$), $cl$, of \textit{$\epsilon$-grid} is associated with two queues, $q_1$ and $q_2$, which sequentially store the data structures of some valid imputed objects $o_x^p$ (for $t-w+1\le x\le t$) and $o_y^p$ (for $t-w+1\le y\le t$) intersected with $cl$ at timestamp $t$ from the sliding windows $W_{1t}\in iDS_1$ and $W_{2t}\in iDS_2$, respectively. In addition, for each object $o_i$ from $W_{1t}$ or $W_{2t}$, its data structure stores some additional information as follow:

%{\color{Xiang} can you specify cell $cl$ in Figure 3? \color{Weilong}(professor, it is revised.)}

%{\color{Xiang} (Weilong, should the 2 items above "and" or "or"?) \color{Weilong} (professor, it should be ``or'', thanks.}

Figure \ref{fig:e_grid} illustrates an example of the $\epsilon$-grid with two attributes $A$ and $B$. The $\epsilon$-grid divides the 2D data space into 16 ($=4\times 4$) cells, each with side length $\epsilon$. If imputed object $o_x^p$ (or $o_y^p$) intersects with a cell $cl$, then this object will be stored in a queue $cl.q_x$ (or $cl.q_y$) pointed by cell $cl$.

%{\color{Xiang} (Weilong, can you help change q1 and q2 in Figure 3 to qx and qy? Or can we use $cl.Q_x$ and $cl.Q_y$?) \color{Weilong}(professor, it is revised.)}

\begin{figure}[t!]
\centering%\vspace{-2ex}
\hspace{2ex}\includegraphics[scale=0.22]{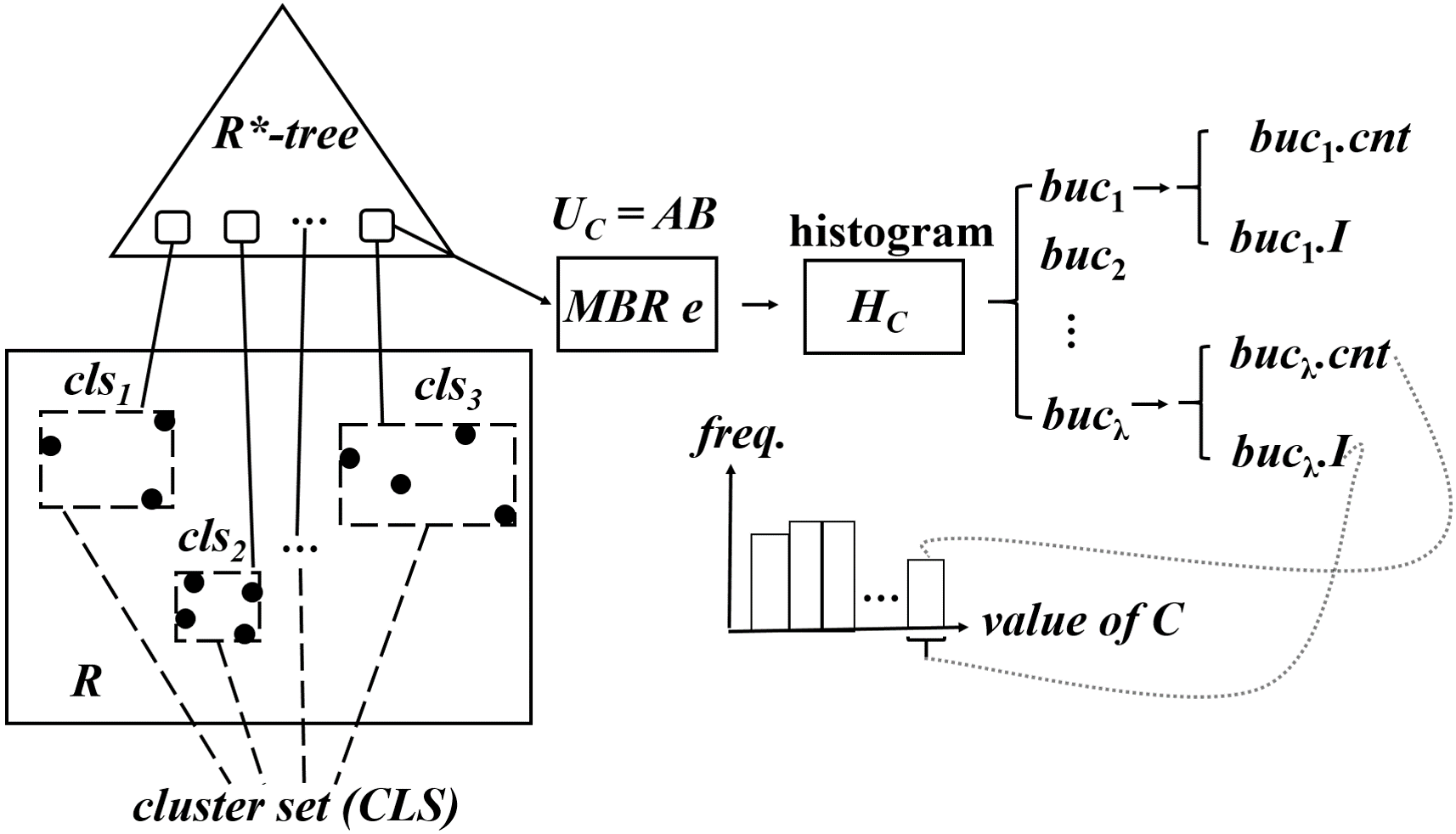}\vspace{-1ex}
\caption{\small Imputation Index over repository $R$ given a DD $AB \to C$ ($U_C = AB$).}
\label{fig:index_structure}\vspace{2ex}
\end{figure}

\vspace{1ex}\noindent {\bf Imputation Indexes Over Data Repository $R$.} To enable fast imputation, we devise $d$ indexes, $I_j$ (for $1\le j\le d$), each of which will have the best imputation power for a possibly missing attribute $A_j$. Specifically, assume that the combined DD rule on Level $l$ of the \textit{imputation lattice} $Lat_j$ (see Section \ref{sec:imputation_of_io}) is $X_1X_2...X_l \to A_j$. Then, we let $U_j = X_1 \cup X_2 \cup ... \cup X_l$, and construct a variant of R$^*$-tree \cite{beckmann1990r} over attributes $U_j$ of data repository $R$. 

We divide complete objects $o_c$ in data repository $R$ into $n$ clusters, $cls_1 \sim cls_n$, and insert them into the R$^*$-tree, where each cluster has the size within $[m, M]$. We design a specific cost model to select a good cluster set. Please refer to Appendix \ref{subsec:cost_for_cluster} for details. 

Moreover, each node $e$ in R$^*$-tree stores a histogram, $H_{A_j}$, over dependent attribute $A_j$, which stores a summary of complete objects $o_c$ in $e$, where $H_{A_j}$ is divided into $\lambda$ buckets, $buc_f$ ($1\le f\le \lambda$), with consecutive bucket intervals $buc_f.I=[buc_f.A_j^{-},buc_f.A_j^{+}]$ (i.e., $buc_f.A_j^{+} = buc_{f+1}.A_j^{-}$), and each bucket $buc_f$ contains all ($buc_f.cnt$) objects $o_c \in e$ with attribute values $o_c[A_j]$ within the interval $buc_f.I$.

Figure \ref{fig:index_structure} gives an example of a table with 3 attributes $A, B$ and $C$, and two DD rules, $A\to C$ and $B\to C$, with dependent attribute $C$. We construct an index $I_C$ for imputing attribute $C$, where $U_j = AB$ and $A_j=C$. In this example, we first put complete objects $o_c\in R$ into some clusters (e.g., $cls_1$), and then insert these clusters into an R$^*$-tree as leaf nodes. As shown in Figure \ref{fig:index_structure}, each node $e$ is divided into $\lambda$ buckets, $buc_1 \sim buc_{\lambda}$, based on the distribution of dependent attribute $C$, where each bucket $buc_f$ ($1\le f\le \lambda$) contains the count $buc_f.cnt$ of objects $o_c$ and the interval $buc_f.I$ of values $o_c[C]$ on attribute $C$ of $o_c$ in the bucket $buc_f$.

\subsection{Join-iDS Processing via $\epsilon$-Grid}
\label{subsec:Join_via_grid}

%The $\epsilon$-grid contains the information of all objects from both sliding windows $W_{1t} \in iDS_1$ and $W_{2t} \in iDS_2$, and a newly arriving incomplete object $o_x \in W_{1t}$ will need to access the $\epsilon$-grid to obtain its join candidate $o_y^p\in W_{2t}$. 

%{\color{Xiang} (Weilong, in Algorithm 2, for each new object $o_x^p$, you need to search R*-tree over R for imputation and epsilon-grid at the same time to find matching objects $o_y^p$ from $iDS_2$ (using pruning), which are used for updating $JS$ set). Do the same for $o_y^p$. For each expired object $o_x^p$ or $o_y^p$, just remove all appearances of $o_x^p$ and $o_y^p$ from $JS$. Finally, refine the new candidate pairs.

%I do not think you need more details on insertion and deletions. You add one or two lines to add or remove imputed objects from the grid in Algorithm 2. \color{Weilong}(professor, it is revised.)}

\nop{

{\color{Xiang} Algorithm 2, line 4, it is not a sentence. check my revision.

line 5, when do we impute to instance level, and when to MBR level?

line 7, it is not clear what does it mean by "(cannot be pruned by Lemma \ref{lemma:lem1})"?

\color{Weilong} (professor, it is revised.)
}

}

\begin{algorithm}[t!]\scriptsize
\KwIn{a join set $JS$, a $\epsilon$-grid synopsis, imputation indexes $I_j$ over $R$, and new objects $o_x$ and $o_y$ from $W_{1t}\in iDS_1$ and $W_{2t}\in iDS_2$}
\KwOut{an dynamically updated $JS$ and $\epsilon$-grid}
remove from $\epsilon$-grid those expired objects from streams $iDS_1$ and $iDS_2$

remove from $JS$ object pairs containing the expired objects 

obtain initial $o_x^p.MBR$ via R*-tree nodes in index $I_j$ and DD rules returned by $Lat_j$

%initialize the MBRs stored in data structure of $o_x^p$

\If{there exists some grid cell $cl \in\epsilon$-grid with nonempty queues $cl.q_y$, such that $mindist(cl, o_x^p.MBR)\leq \epsilon$ (via Lemma \ref{lemma:lem1})}{
    obtain instances $o_{xl}$ of imputed object $o_x^p$ by accessing objects $o_c$ via indexes $I_j$
    
    update $o_x^p.MBR$
}

\For{each cell $cl \in\epsilon$-grid with non-empty queue $cl.q_y$ that cannot be pruned via Lemma \ref{lemma:lem1}}{
     \If{$mindist(s_x,cl) \le \epsilon$ (via Lemma \ref{lemma:lem3})}{
            \For{each unchecked object $o_y^p$ in queue $cl.q_y$ satisfying $mindist(o_x^p.MBR, o_y^p.MBR) \leq \epsilon$}{
         \If{$o_y^p$ is not completely imputed via indexes $I_j$}{
        impute $o_y^p$ to instance level via indexes $I_j$, and update $o_y^p.MBR$
        
         update those cells $cl\in \epsilon$-grid intersecting with $o_y^p.MBR$ 
        
        %update data information of $o_y^p.MBR$ in queues $cl.q_y$ of intersected cells $cl\in \epsilon$-grid
    }
    
    \If{$mindist(s_x,s_y) \le \epsilon$ (via Lemma \ref{lemma:lem3})}{
        \If{$Pr_{Join\text{-}iDS}(o_x^p,o_y^p) \ge \alpha$ via Eq. (\ref{eq:eq3})}{
        add $(o_x^p,o_y^p)$ to $JS$
    }
    }

    }
        }
}

\For{each cell $cl$ intersecting with the MBRs of $o_x^p$}{
    add $o_x^p$ to queue $cl.q_x$
}

execute lines 3-17 symmetrically for new object $o_y \in iDS_2$

\caption{Join-iDS via $\epsilon$-grid}
\label{alg:join_via_grid}%\vspace{-2ex}
\end{algorithm}

%After inserting the new object $o_x^p$ ($x=t$) into the $\epsilon$-grid via Algorithm \ref{alg:grid_insertion} (will be discussed later),% 

\noindent {\bf Join-iDS via $\epsilon$-Grid.} Denote $JS$ as a join set that records all join results, $(o_x^p, o_y^p)$, between two incomplete data streams. Algorithm \ref{alg:join_via_grid} performs the object imputation and join at the same time, and dynamically maintain the join set $JS$ (and $\epsilon$-grid as well).

\underline{Deletion of the expired objects.} At a new timestamp $t$, Algorithm \ref{alg:join_via_grid} will remove the expired objects from $\epsilon$-grid and those object pairs containing the expired objects from $JS$ (lines 1-2).

%{\color{Xiang} {\bf (Weilong, please update the descriptions below based on updated pseudo code.) \color{Weilong} (professor, it is revised.)}}

%{\color{Weilong}

\underline{Object imputation and object-level pruning.}  Given a newly arriving incomplete object $o_x \in W_{1t}$, Algorithm \ref{alg:join_via_grid} will retrieve a query range $Q$ via a DD rule returned by the \textit{imputation lattice} $Lat_j$ (Section \ref{sec:imputation_of_io}), and obtain an initial MBR $o_x^p.MBR$, by accessing R$^*$-tree nodes that intersect with the query range $Q$ via imputation index $I_j$  (line 3). Then, we will check whether there are some cells, $cl$, in the $\epsilon$-grid, that may match with $o_x^p$ (via Lemma \ref{lemma:lem1}). In particular, if $mindist(cl,o_x^p.MBR)\leq \epsilon$ holds, we will further obtain instances $o_{xl}$ of imputed object $o_x^p$, and update (shrink) the MBR of $o_x^p.MBR$ (lines 4-6). 

\underline{Object imputation and sample-level pruning.} Next, for cells that cannot be pruned by Lemma \ref{lemma:lem1}, Algorithm \ref{alg:join_via_grid} will further check the minimum distance, $mindist(s_x,cl)$, between sub-MBRs $s_x \in o_x^p.MBR$ and cell $cl$ via the sample-level pruning (Lemma \ref{lemma:lem3}; lines 7-15). If $mindist(s_x,cl)\le \epsilon$ and queues $cl.q_y$ are non-empty, then we will check the minimum distance, $mindist(o_x^p.MBR, o_y^p.MBR)$, between imputed objects $o_x^p$ and each unchecked object $o_y^p$ in the queues $cl.q_y$ of cell $cl$ (lines 9-15). Note that, each object $o_y^p\in cl.q_y$ may have two possible imputation states: (1) object $o_y^p$ is represented by MBRs $o_y^p.MBR$, or (2) object $o_y^p$ is represented by some samples (the missing attributes are imputed from $R$). We call the first state ``not completely imputed'', while the second one ``completely imputed''. If $o_y^p$ is not completely imputed, we  will impute $o_y^p$ completely via indexes $I_j$, and update the cells in $\epsilon$-grid intersecting with $o_y^p.MBR$ (lines 10-12). Given both completely imputed objects $o_x^p$ and $o_y^p$, if $mindist(s_x, s_y)\le \epsilon$, we will use the sample-level pruning to prune the object pair $(o_x^p,o_y^p)$ (please refer to Appendix \ref{sec:Ij_for_sample_pruning} for the selection of sub-MBRs $s_x$ and $s_y$; line 13). If the object pair $(o_x^p,o_y^p)$ still cannot be pruned, then we will check the join probabilities, $Pr_{Join\text{-}iDS}(o_x^p,o_y^p)$, via Eq.~(\ref{eq:eq3}), and add actual join pairs to $JS$ (lines 14-15). 

%the MBR $o_y^p.MBR$ of $o_y^p$ is obtained by aggregating the MBRs of nodes $e$ in R$^*$-tree (i.e., all cells were pruned by $o_y^p.MBR$ at timestamp $y$), and $o_y^p.MBR$ is obtained by exact samples in $R$. We call the first case ``not completely imputed'', while the second one ``completely imputed''.

\underline{Update of $\epsilon$-grid with new object $o_x^p$.} Algorithm \ref{alg:join_via_grid} then inserts new object $o_x^p$ into all queues, $cl.q_x$, in cells $cl$ (intersecting with the MBR of $o_x^p$) of the $\epsilon$-grid (lines 16-17).

Finally, similar to object $o_x$, we execute lines 3-17 for a newly arriving object $o_y$ from sliding window $W_{2t}\in iDS_2$ (line 18).

\nop{
Algorithm \ref{alg:join_via_grid} will check the imputation status of $o_x^p$ (lines 2-5). There are three cases: current MBR $o_x^p.MBR$ of $o_x^p$ is obtained by query range of DDs; $o_x^p.MBR$ is obtained by aggregating the MBRs of nodes $e$ in R$^*$-tree; $o_x^p.MBR$ is obtained by exact samples $o_c \in R$. We call the first two cases ``not completely imputed'', while the first one ``completely imputed''. If $o_x^p.MBR$ is not completely imputed, Algorithm \ref{alg:join_via_grid} will impute $o_x^p$ completely via indexes $I_j$, update (decrease) the intersected cells $cl_x$ in $\epsilon$-grid with $o_x^p$ (with smaller MBR), and update all five data information of $o_x^p$ stored in queues $q_1$ of the updated (remaining) intersected cells $cl_x\in \epsilon$-grid (lines 3-5).
}

\nop{
\begin{algorithm}[t!]\scriptsize
\KwIn{the $\epsilon$-grid, imputation lattice $Lat_j$, and a new object $o_x \in W_{1t}$ ($x=t$) at timestamp $t$}
\KwOut{the updated $\epsilon$-grid and $JS$}
%$o_x.JS \leftarrow Null$
 
obtain $o_x^p.MBR$ via index $I_j$ and DD rules returned by $Lat_j$

%initialize the MBRs stored in data structure of $o_x^p$

\If{$\exists$ grid cells $cl_y \in\epsilon$-grid with non-empty queueS $q_2$ cannot be pruned via Lemmas \ref{lemma:lem1}}{
    impute its all instances $o_{xl}$ by accessing complete objects $o_c$ via indexes $I_j$
    
    update $o_x^p.MBR$
}

apply Lemmas \ref{lemma:lem3} to remaining grid cells $cl_y$ (cannot pruned by Lemmas \ref{lemma:lem1})
    
\For{all cells $cl_y \in\epsilon$-grid with non-empty queue $q_2$ are not pruned}{
    \For{all objects $o_y^p$ in queue $q_2$ of $cl_y$}{
        add potential join pair $(o_x^p,o_y^p)$ to $JS$
    }
}

\For{all cells $cl_x$ intersected with the MBR of $o_x^p$}{
    add data information of $o_x^p$ to queue $q_1$ of $cl_x$
}

\caption{Insertion}
\label{alg:grid_insertion}%\vspace{-2ex}
\end{algorithm}
}

\nop{
\begin{algorithm}[t!]\scriptsize
\KwIn{the $\epsilon$-grid, and current timestamp $t$}
\KwOut{the updated $\epsilon$-grid and $JS$}
find all cells $cl_x \in \epsilon$-grid, intersected with imputed object $o^p_{x-w}$ ($x=t$)

\For{all $cl_x$}{
    remove the data information of $o^p_{x-w}$ from queue $q_1$ of $cl_x$
}
    
\For{all $(o_{x-w}^p,o_y^p) \in JS$}{
    remove $(o_{x-w}^p,o_y^p)$ from $JS$
}
\caption{Deletion}
\label{alg:grid_deletion}%\vspace{-2ex}
\end{algorithm}
\noindent {\bf Deletions.} At timestamp $t$, the old object $o_{x-w}$ ($x=w$) will be expired from the sliding window $W_{1t}\in iDS_1$. Algorithm \ref{alg:grid_deletion} will obtain all cells, $cl_x\in \epsilon$-grid, intersected with the MBR of $o_{x-w}^p$, remove the data information of $o_{x-w}^p$ from all queues $q_1$ of the cells $cl_x$, and remove from $JS$ the pairs $(o_{x-w}^p,o_y^p)$ containing expired object $o_{x-w}^p$ (lines 1-5).
}

\vspace{1ex}\noindent {\bf Complexity Analysis.} The Join-iDS algorithm in Algorithm \ref{alg:join_via_grid} requires $O(|o_x^p| \cdot |o_y^p| \cdot |cl| \cdot |cl.q_y|)$ time complexity, where $|o_x^p|$ and $|o_y^p|$ are the average numbers of instances in imputed objects $o_x^p$ and $o_y^p$, respectively, $|cl|$ is the number of cells intersecting with the MBR $o_x^p.MBR$ of $o_x^p$, and $|cl.q_y|$ is the average number of objects $o_y^p$ within queues $cl.q_y$ of cells $cl$.

%\noindent {\bf Complexity Analysis.} The Join-iDS algorithm in Algorithm \ref{alg:join_via_grid} requires $O(|o_x^p| \cdot |o_y^p| \cdot |JS.o_x|)$ time complexity, where $|o_x^p|$ and $|o_y^p|$ are average number of instances in imputed objects $o_x^p$ and $o_y^p$, respectively, and $|JS.o_x|$ ($\le w$) is the average size of the join pairs containing object $o_x^p$. The insertion algorithm in Algorithm \ref{alg:grid_insertion} requires $O(|cl| + w)$, where $|cl|$ is the number of cells in $\epsilon$-grid, and $w$ is the sliding window size. Moreover, the deletion algorithm in Algorithm \ref{alg:grid_deletion} also needs $O(|cl| + w)$ time complexity.

\nop{
\noindent {\bf Discussions.} In this paper, we discuss our Join-iDS problem based on two assumption: at each timestamp $t$, only one new object arrives at each data stream; and there are only two incomplete data streams. In the sequel, we will briefly discuss how to extend our work to the general cases: at each timestamp $t$, multiple new objects arrive at each data stream; and there are $n$ ($n>2$) incomplete data streams.

\underline{Join-iDS Batch Processing.} At timestamp $t$, assume $m$ new objects, $o_{x1},o_{x2},...,o_{xm}$, arrive into the sliding window $W_{1t} \in iDS_1$, we will divide these $m$ new objects into some groups, $G$, based on a principle that the average volume per object in each group is maximal. Specifically, we can obtain the MBRs, $o_{x1}.MBR,o_{x2}.MBR,...,\\o_{xm}.MBR$, of these $m$ objects based on their query range $Q$ via DDs. Then, we will add object $o_{x1}$ to group 1 ($G_1$), and obtain an initial group MBR $G_1.MBR$ ($=o_{x1}.MBR$). If we add $o_{x2}$ into $G_1$, we may need to update the MBR $G_1^{'}.MBR$ of $G_1^{'}$ (the updated $G_1$). Then, we compare the average volume per object in $G_1$ and $G_1^{'}$, $\frac{G_1.MBR}{1}$ and $\frac{G_1^{'}.MBR}{2}$. If $\frac{G_1.MBR}{1} \le \frac{G_1^{'}.MBR}{2}$, we will add $o_{x2}$ to $G_1$. Otherwise, we add $o_{x2}$ to a new group $G_2$. Following this idea, we can divide the $m$ new objects into some groups. For each group $G$, we treat them as a new super object, and insert $G$ into the $\epsilon$-grid, which can obtain the candidate objects $o_y^p\in W_{2t}$ of objects in $G$. Then, we compare each object in $G$ with each $o_y^p$ and obtain the final join results.

}

\nop{
\begin{figure}[t!]
\centering\vspace{2ex}
\hspace{2ex}\includegraphics[scale=0.18]{e_grid_n.png}\vspace{-1ex}
\caption{\small Extension of the $\epsilon$-grid in Figure \ref{fig:e_grid} in the case of $n$ iDS.}
\label{fig:e_grid_n}
\end{figure}
}

\vspace{1ex}\noindent {\bf Discussions on the Extension of Join-iDS to $n$ ($>2$) Incomplete Data Streams.} We can extend our Join-iDS problem over 2 incomplete data streams to multiple (e.g., $n>2$) incomplete data streams $iDS_1 \sim iDS_n$. We only need to update the $\epsilon$-grid, that is, increase the number of queues in each cell of $\epsilon$-grid from 2 to $n$. Within a cell in $\epsilon$-grid, each queue, $cl.q_i$ ($1\le i\le n$), stores objects from its corresponding incomplete data stream $iDS_i$. With the modified $\epsilon$-grid, at timestamp $t$, when a new object $o_x$ arrives, the imputed object $o_x^p$ will push its join pairs into $JS$, by accessing those objects from $(n-1)$ queues in each cell.
%Figure \ref{fig:e_grid_n} illustrates the extended $\epsilon$-grid w.r.t. the one in Figure \ref{fig:e_grid}, where a cell is associated with $n$ queues, each of which is to store data information from a specific data streams (e.g., the data information of imputed object $o_n^p$ is stored in queue $q_n$ in a cell of $\epsilon$-grid). 

\nop{
\subsection{Cost-Model-Based Index on Data Repository $R$ for Imputation}
\label{subsec:index_j}
In this subsection, we first introduce our index structure over data repository $R$ for imputation, followed by its supported cost model (please refer to Appendix \ref{subsec:cost_for_cluster}).

\noindent {\bf Index Structure.} In the environment of incomplete data streams, an incomplete object $o_i$ may have a missing attribute on any attribute $A_j$ (for $1\le j\le d$). To enable fast imputation for each attribute, we devise $d$ indexes, $I_j$, each of which will have the best imputation power for missing values on attribute $A_j$.

To build an index $I_j$, we extract the DD rule, $\{X_1X_2...X_l \to A_j\}$, from level $l$ of the \textit{imputation lattice} $Lat_j$ (Section \ref{sec:imputation_of_io}), and then build $I_j$ over $R$ for attributes $U_j = X_1 \cup X_2 \cup ... \cup X_l$. We use R$^*$-tree \cite{beckmann1990r} as our index model. Instead of complete objects, we insert n clusters, $cls_1 \sim cls_n$ into R$^*$-tree, where each cluster contains some complete objects $o_c \in R$ with a size $[m,M]$.

Given a table with 3 attributes $A, B$ and $C$, and two DD rules, $A\to C$ and $B\to C$, with dependent attribute $C$, Figure \ref{fig:index_structure} demonstrates the basic idea for constructing an index $I_C$ for imputing attribute $C$, where $U_j = AB$ and $A_j=C$. In this toy example, we first put complete objects $o_c\in R$ into some clusters (e.g., $cs_1$), and then insert these clusters into R$^*$-tree as leaf nodes.

Therefore, for each node $e$ in R$^*$-tree, we can obtain an interval, $e.I=[e.A_j^-,e.A_j^+]$, that all (with number $e.cnt$) complete objects $o_c$ in $e$ fall into on the dependent attribute $A_j$. This interval can help the apply of the \textit{object-level pruning} (Lemma \ref{lemma:lem1}) in Section \ref{sec:pruning_strategies}. To enable to usage the \textit{sample-level} pruning in Section \ref{sec:pruning_strategies}, based on the determinant attribute(s) $U_j$, each node $e$ in R$^*$-tree stores a histogram, $H_{A_j}$, over dependent attribute $A_j$, which stores a summary of complete objects $o_c$ in $e$, where $H_{A_j}$ is divided into $\lambda$ buckets, $buc_f$ ($1\le f\le \lambda$), with consecutive bucket intervals $buc_f.I=[buc_f.A_j^{-},buc_f.A_j^{+}]$ (i.e., $buc_f.A_j^{+} = buc_{f+1}.A_j^{-}$), and each bucket $buc_f$ contains all ($buc_f.cnt$) objects $o_c \in e$ with attribute values $o_c[A_j]$ within the interval $buc_f.I$.

\nop{
\begin{enumerate}
\item the center, $c_e$, of the MBR $e$,
\item the half diagonal length, $r_e$, of the MBR $e$, and
\item a histogram, $H_{A_j}$, over dependent attribute $A_j$, which stores a summary of complete objects $o_c$ in $e$.
\end{enumerate}
\noindent where $H_{A_j}$ is divided into $\lambda$ buckets, $buc_f$ ($1\le f\le \lambda$), with consecutive bucket intervals $buc_f.I=[buc_f.A_j^{-},buc_f.A_j^{+}]$ (i.e., $buc_f.A_j^{+} = buc_{f+1}.A_j^{-}$), and each bucket $buc_f$ contains all ($buc_f.cnt$) objects $o_c \in e$ with attribute values $o_c[A_j]$ within the interval $buc_f.I$.
}

%Among the stored information in $e$, the first two (i.e., $c_e$ and $r_e$) are for the \textit{pivot pruning} (Lemma \ref{lemma:lem2}), and the remaining histogram information $H_{A_j}$ is for the \textit{sample pruning} (Lemma \ref{lemma:lem3}). 

The right part of Figure \ref{fig:index_structure} shows a portray of the stored information for a node $e$. In particular, as depicted in Figure \ref{fig:index_structure}, the node $e$ is divided into $\lambda$ buckets, $buc_1 \sim buc_{\lambda}$, based on the distribution of dependent attribute $C$, where each bucket $buc_f$ ($\le f\le \lambda$) contains the count $buc_f.cnt$ of objects $o_c$ and the interval $buc_f.I$ of values $o_c[C]$ on attribute $C$ of $o_c$ in the bucket $buc_f$.

\underline{$I_j$ for sample-level pruning.} Given two objects $o_x^p$ and $o_y^p$ imputed via index $I_j$, to apply the \textit{sample-level pruning}, we need to select two sets (MBRs), $s_x$ and $s_y$, of buckets $buc_f$ from nodes in R$^*$-tree intersected with $o_x^p$ and $o_y^p$. There are exponential number (i.e., $\lambda^2$) of selection combinations between the set pair $\{s_x,s_y\}$. In this paper, we design an effective selection strategy to select $s_x$ and $s_y$ as below. The general idea is that we first select a $s_x$, based on which we select a $s_y$. To be specific, we will obtain $s_x$ by combining consecutive buckets from $buc_1$ till $buc_{f}$ such that its overall frequency is beyond $(1-\alpha) \times e.cnt$, that is, $\beta_x=\frac{s.cnt}{e.cnt} = \frac{\sum_{i=1}^f buc_f.cnt}{e.cnt} > 1-\alpha$. Then we will decide whether to add the next bucket $buc_{f+1}$ to $s_x$ by checking the value $\frac{\Delta S_x.I}{\Delta \beta_x}$, where $\Delta S_x.I$ and $\Delta \beta_x$ are the change ratios between the changes of intervals $s_x.I$ and $\beta_x$ w.r.t. the addition of bucket $buc_{f+1}$ into $s_x$, respectively. If $\frac{\Delta S_x.I}{\Delta \beta_x} < 1$, we will add $buc_{f+1}$ to $s_x$, since this addition will bring into $s_x$ more samples $o_c \in e$ but do not significantly enlarge the interval $s_x.I$ of $s_x$ on attribute $A_j$. Otherwise (i.e., $\frac{\Delta S_x.I}{\Delta \beta_x} \ge 1$), we will not add $buc_{f+1}$ to $s_x$. After we fix the $s_x$, we will select the first $s_y$ with $\beta_y > \frac{1-\alpha}{\beta_x}$. In this case, we can use the selected $s_x$ and $s_y$ to apply Lemma \ref{lemma:lem3} to prune the object pair $\{o_x,o_y\}$.
}

%\noindent {\bf Cost Model for the Clusters Selection.} 

\section{Experimental Evaluation}
\label{sec:exp_eval}

\begin{table}[t!]
\centering\scriptsize%\vspace{-3ex}
\caption{\small The parameter settings.} \label{table:exp_parameter_setting}\vspace{-2ex}
\begin{tabular}{|l|c|}
\hline
\qquad\qquad\qquad\qquad\textbf{Parameters} & \textbf{Values}\\
\hline
\hline
probabilistic threshold $\alpha$ &  0.1, 0.2, \textbf{0.5}, 0.8, 0.9 \\\hline
dimensionality $d$ & 2, 3, \textbf{4}, 5, 6, 10 \\\hline
distance threshold $\epsilon$ & 0.1, 0.2, \textbf{0.3}, 0.4, 0.5  \\\hline
the number, $|W_t|$, of valid objects in $iDS$ & 500, 1K, \textbf{2K}, 4K, 5K, 10K \\\hline
the size, $|R|$, of data repository $R$ & 10K, 20K, \textbf{30K}, 40K, 50K \\\hline
the number, $m$, of missing attributes & \textbf{1}, 2, 3 \\\hline
\end{tabular}%\vspace{-2ex}
\end{table}

\subsection{Experimental Settings}
\label{subsec:exp_settings}

\noindent {\bf Real/Synthetic Data Sets.} We evaluate the performance of our Join-iDS approach on 4 real and 3 synthetic data sets. 

\underline{Real data sets.} We use Intel lab data\footnote{\scriptsize\url{http://db.csail.mit.edu/labdata/labdata.html}}, UCI gas sensor data for home activity monitoring\footnote{\scriptsize\url{http://archive.ics.uci.edu/ml/datasets/gas+sensors+for+home+activity+monitoring}}, US \& Canadian city weather data\footnote{\scriptsize\url{https://www.kaggle.com/selfishgene/historical-hourly-weather-data}}, and S\&P 500 stock data\footnote{\scriptsize\url{https://www.kaggle.com/camnugent/sandp500}}, denoted as $Intel$, $Gas$, $Weather$ and $Stock$, respectively. $Intel$ contains 2.3 million data, collected from 54 sensors deployed in Intel Berkeley Research lab on Feb. 28-Apr. 5, 2014; $Gas$ includes 919,438 samples from 8 MOX gas sensors, and humidity and temperature sensors; $Weather$ contains $45.3K$ historical weather (temperature) data for 30 US and Canadian Cities during 2012-2017; $Stock$ has $619K$ historical stock data for all companies found on the S\&P 500 index till Feb 2018. We extract 4 attributes from each of these 4 real data sets: temperature, humidity, light, and voltage from $Intel$; temperature, humidity, and resistance of sensors 7 and 8 from $Gas$; Vancouver, Portland, San Francisco, and Seattle from $Weather$; and open, high, low, close from $Stock$. We normalize the intervals of 4 attributes of each data sets into $[0,1]$. Then, as depicted in Table \ref{table:real_DDs}, we detected DD rules for each data set, by considering all combinations of determinant/dependent attributes over samples in data repository $R$ \cite{song2011differential}. 

\begin{table}[t!]
\centering\scriptsize\vspace{-3ex}
\caption{\small The tested data sets and their DD rules.} 
\label{table:real_DDs}\vspace{-2ex}
\begin{tabular}{|c|l|}
\hline
\textbf{Data Sets} & \qquad\qquad\qquad\qquad\quad \textbf{DD Rules}\\
\hline
\hline
$Intel$ & $humidity, light, voltage \to temperature$,\\
& \{$[0,0.0036], [0, 0.645], [0, 0.116], [0, 0.003]$\}\\\hline
$Gas$ &  $resistance8, temperature \to resistance7$,\\
& $\{[0, 0.0483], [0, 0.751], [0, 0.4]\}$ \\\hline
$Weather$ &  $Vancouver, Portland, San\ Francisco \to Seattle$,\\
& $\{[0, 0.003], [0, 0.382], [0, 0.539], [0, 0.014]\}$ \\\hline
$Stock$ &  $open, high, low \to close$,\\
& $\{[0, 0.037], [0, 0.013], [0, 0.014], [0, 0.01]\}$ \\\hline
$Uniform$& $A B C \to D$, $\{[0,0.01], [0,0.01], [0,0.01], [0,0.01]\}$\\
$Correlated$ & $A B \to E$, $\{[0,0.02], [0,0.02], [0,0.05]\}$ \\
$Anti$-$correlated$ & $A C \to F$, $\{[0,0.03], [0,0.03], [0,0.1]\}$\\
 & $B \to F$, $\{[0,0.02], [0,0.05]\}$
\\\hline
\end{tabular}%\vspace{2ex}
\end{table}

\underline{Synthetic data sets.} We produce 3 types of $d$-dimensional synthetic data \cite{borzsony2001skyline}, following uniform, correlated, and anti-correlated distributions, denoted as $Uniform$, $Correlation$, and $Anti$-$Correlation$, respectively. For each data distribution, we first generate 5,000 seeds, and then obtain the remaining objects based on DD rules in Table \ref{table:real_DDs}.

\underline{Incomplete data generation.} For real/synthetic data above, we randomly select $m$ dependent attributes (e.g., $temperature$ of $Intel$) for objects from incomplete data streams $iDS_1$ and $iDS_2$, and set them as missing (=``-''). Note that, for each real/synthetic data set, we divide it into three subsets, which are corresponding to incomplete data streams $iDS_1$ and $iDS_2$, and complete data repository $R$, respectively.

%given the number, $m$, of missing attributes, 

\noindent {\bf Competitor.} We compare our Join-iDS approach with two baseline approaches, namely $DD+ASP$ and $DD+\epsilon$\text{-}$grid$, which which first impute incomplete objects via DDs and data repository $R$, and then obtain join results by considering pairwise objects from imputed data streams via the join algorithm in \cite{lian2010similarity} and our proposed join approach via $\epsilon$-grid synopsis, respectively.

%We compare our Join-iDS approach with a baseline algorithm, namely $baseline$, which first imputes incomplete objects via DDs and data repository $R$, and then obtains join results by considering pairwise objects from incomplete data streams.

\noindent {\bf Measure.} We report the \textit{wall clock time}, which is the total CPU time to perform the data imputation (via DDs and imputation indexes) and join processing (via $\epsilon$-grid) at the same time. 

%to impute the missing attributes via index $I_j$ (Section \ref{subsec:index_j}), maintain the $\epsilon$-grid (Section \ref{subsec:grid_synopsis}), and Join-iDS processing via $\epsilon$-grid (Section \ref{subsec:Join_via_grid}). Please note, we apply the style ``imputation and join at the same time'', and do not separately report the imputation and join time.

\noindent {\bf Parameter Settings.} Table \ref{table:exp_parameter_setting} depicts experimental settings, where default parameter values are in bold. Each time we test one parameter, while setting other parameters to their default values. We run our experiments on a PC with Intel(R) Core(TM) i7-6600U CPU 2.70 GHz and 32 GB memory. All algorithms were implemented by C++.

\begin{figure}[t!]%\vspace{-3ex}
\centering
\scalebox{0.21}[0.21]{\includegraphics{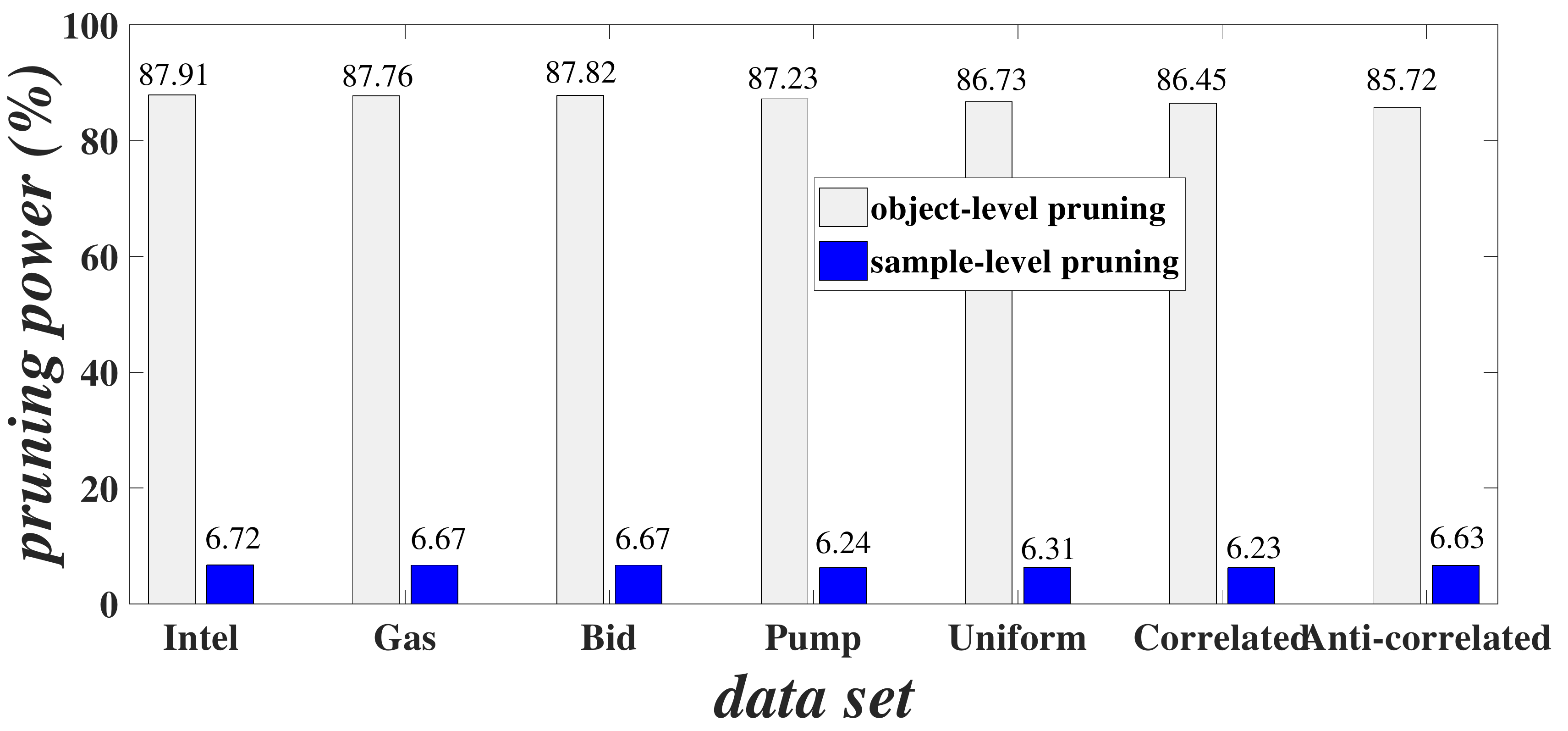}}\vspace{-3ex}
\caption{\small Pruning power evaluation over real/synthetic data sets.}%\vspace{-2ex}
\label{fig:exper:Join-iDS_pru_pow_dataset}
\end{figure}

\subsection{Effectiveness of Sky-iDS Pruning Methods}
Figure \ref{fig:exper:Join-iDS_pru_pow_dataset} demonstrates the percentages of object pairs that are pruned by our two pruning rules, \textit{object-level pruning} and \textit{sample-level pruning}, over real/synthetic data sets, where parameters are set to their default values in Table \ref{table:exp_parameter_setting}. As mentioned in Section \ref{subsec:pruning_rules}, we will first apply the \textit{object-level pruning}, and then apply the \textit{sample-level pruning} if the former one does not work. From the figure, we can see that the \textit{object-level pruning} can prune most pairs of objects from two different data streams for both real and synthetic data sets (i.e., 87.23\%-87.91\% for real data sets and 85.72\%-86.73\% for synthetic data sets). In addition, the \textit{sample-level pruning} can further prune 6.24\%-6.72\% and 6/23\%-6.63\% object pairs for real and synthetic data sets, respectively. Overall, our proposed pruning rules can together prune 93.47\%-94.63\% and 92.35\%-93.04\% object pairs over real and synthetic data sets, respectively, which confirms the effectiveness of our proposed pruning methods.

\begin{figure}[t!]
\centering %\vspace{-3ex}
\subfigure[][{\small $F_1$ score (real data)}]{\hspace{-2ex}
\scalebox{0.2}[0.2]{\includegraphics{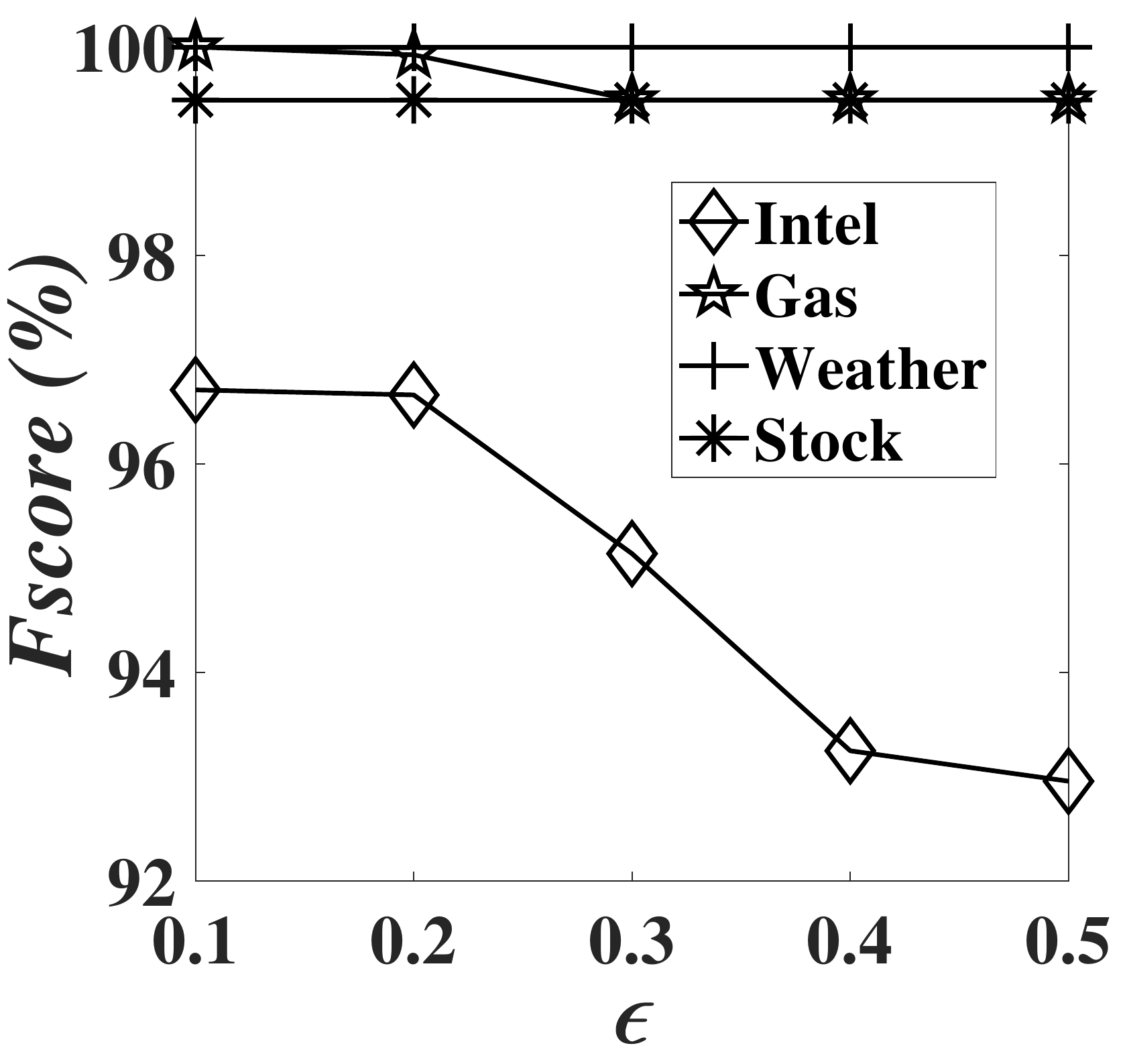}}
\label{subfig:real_Fscore_vs_epsilon}
}\qquad%
\subfigure[][{\small $F_1$ score (synthetic data)}]{%\vspace{-1ex}\hspace{-1.5ex}                  
\scalebox{0.2}[0.2]{\includegraphics{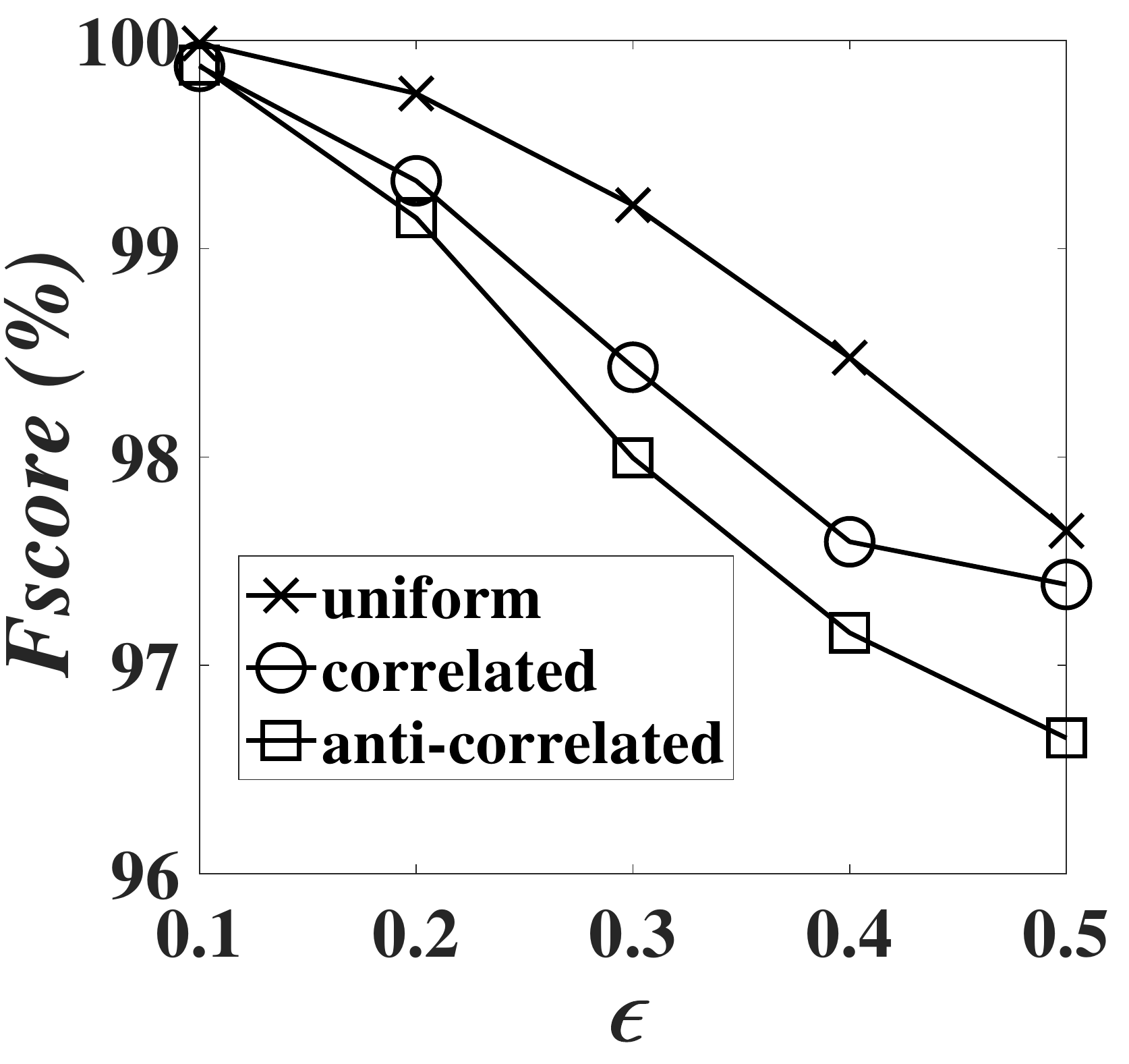}}
\label{subfig:synthetic_Fscore_vs_epsilon}
}\vspace{-3ex}
\caption{\small The Join-iDS effectiveness vs. distance threshold $\epsilon$.} 
\label{exper:Join-iDS_effectiveness_vs_epsilon}%\vspace{-2ex}
\end{figure} 

\begin{figure}[t!]\vspace{-2ex}
\centering
\scalebox{0.21}[0.21]{\includegraphics{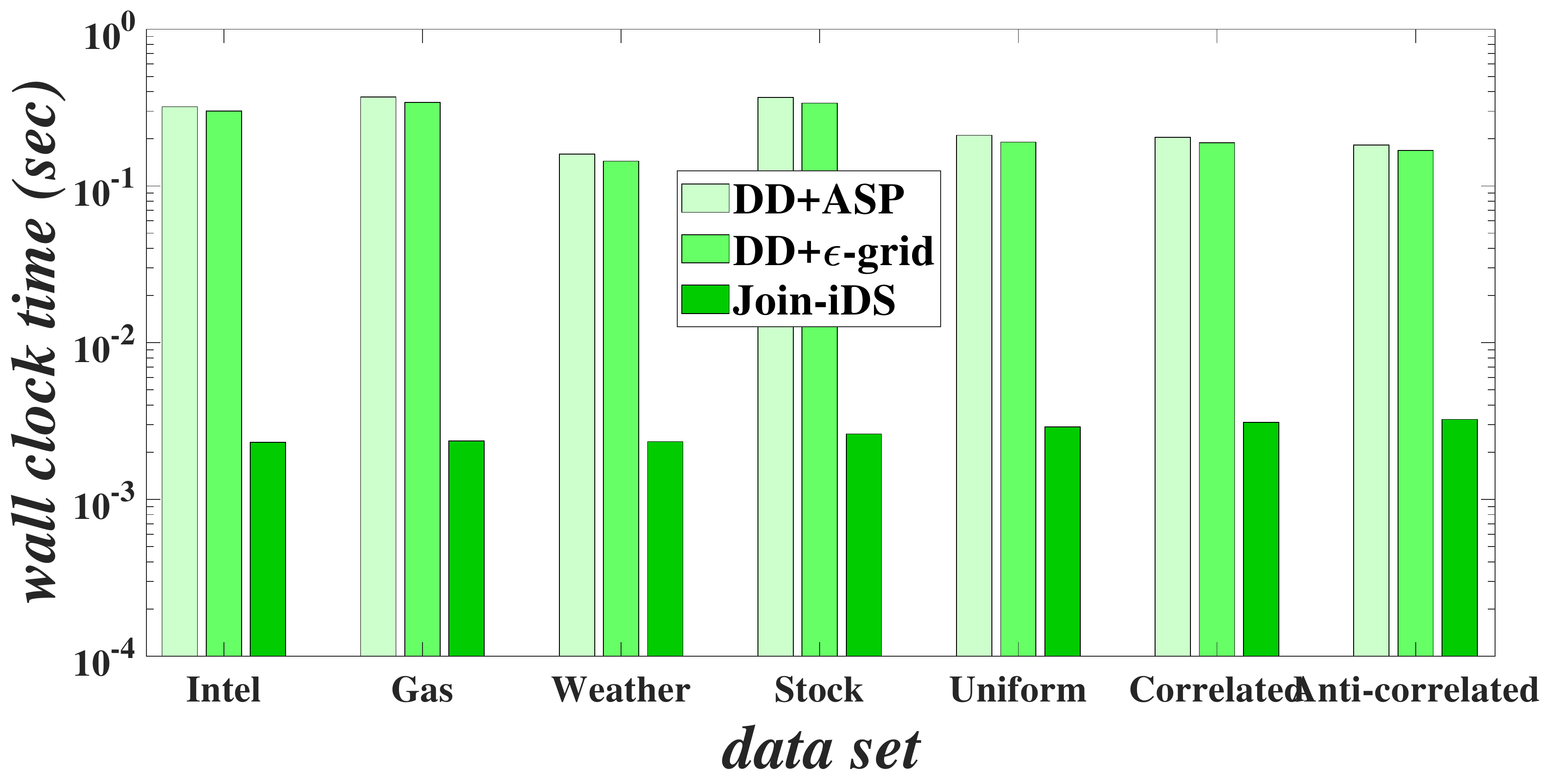}}\vspace{-3ex}
\caption{\small The performance vs. real/synthetic data sets.}%\vspace{-2ex}
\label{fig:exper:Join-iDS_vs_dataset}
\end{figure}

\subsection{The Effectiveness of Join-iDS}
\label{subsec:Join_effectiveness}

\noindent {\bf The Join-iDS Effectiveness vs. Distance Threshold $\epsilon$.} Figure \ref{exper:Join-iDS_effectiveness_vs_epsilon} illustrates the \textit{$F_1$ score} of our Join-iDS approach over real/synthetic data sets, by varying distance threshold $\epsilon$ from $0.1$ to $0.5$, where other parameters are set to their default values, and \textit{$F_1$ score} is defined as:\vspace{-2ex}

\begin{eqnarray}
F_1\ score &=& 2 \times \frac{recall \times precision}{recall + precision},
\label{eq:F1_score}
\end{eqnarray}%\vspace{-3ex}

\noindent where \textit{recall} is the number of correct returned join pairs by our Join-iDS approach divided by the number of actual join pairs (i.e., groundtruth); and \textit{precision} is the ratio of correct join pairs among all returned join pairs by our Join-iDS approach. Here, we generate incomplete objects by randomly selecting some attributes in complete data sets as missing, thus, we can know the groundtruth of acutal join results. From experimental results, we can see that the \textit{$F_1$ score} remains high for both real and synthetic data (i.e., above $92\%$ and $96\%$, resp.) for different $\epsilon$ values, which verifies the effectiveness of our imputation and Join-iDS approaches. Note that, the two baseline methods $DD+ASP$ and $DD+\epsilon$\text{-}$grid$ have same \textit{$F_1$ score} as our Join-iDS approach, since they also apply DDs as their imputation methods. Thus, we will not report the effectiveness of the $DD+ASP$ and $DD+\epsilon$\text{-}$grid$. We also tested other parameters, and will not report similar experimental results here.

\subsection{The Efficiency of Join-iDS}
\label{subsec:Join_efficiency}

\noindent {\bf The Join-iDS Performance vs. Real/Synthetic Data Sets.} Figure \ref{fig:exper:Join-iDS_vs_dataset} compares the wall clock time of our Join-iDS approach with that of $DD+ASP$ and $DD+\epsilon$\text{-}$grid$ on real/synthetic data sets, where default parameter values are used (as depicted in Table \ref{table:exp_parameter_setting}). From figures, our Join-iDS approach outperforms the $DD+ASP$ and $DD+\epsilon$\text{-}$grid$ by about 2 orders of magnitude, which confirms the efficiency of the ``data imputation and join processing at the same time'' style of our Join-iDS approach. 

%In addition, our Join-iDS approach has the similar performance over all data sets. This indicates that our Join-iDS approach has a stable performance over different data sets with different distributions.

Below, we evaluate the robustness of our Join-iDS approach by varying different parameter values. To clearly illustrate the trend of our Join-iDS approach, we will omit similar results for $DD+ASP$ and $DD+\epsilon$\text{-}$grid$.

\begin{figure}[t!]
\centering\vspace{-1ex}
\subfigure[][{\small real data}]{\hspace{-2ex}
\scalebox{0.2}[0.2]{\includegraphics{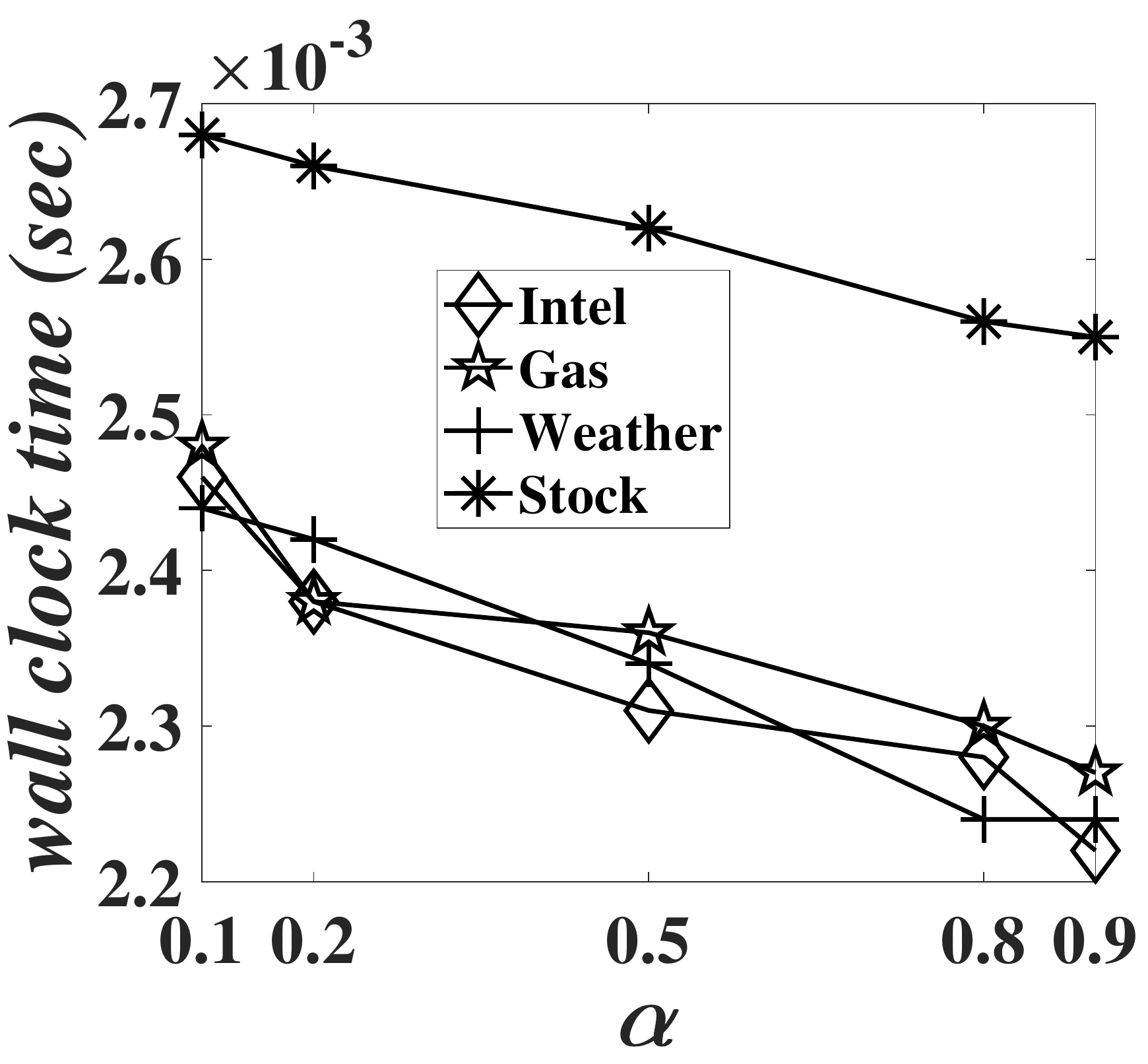}}
\label{subfig:real_cost_vs_alpha}
}\qquad%
\subfigure[][{\small synthetic data}]{\vspace{1ex}%\hspace{-2ex}                  
\scalebox{0.2}[0.2]{\includegraphics{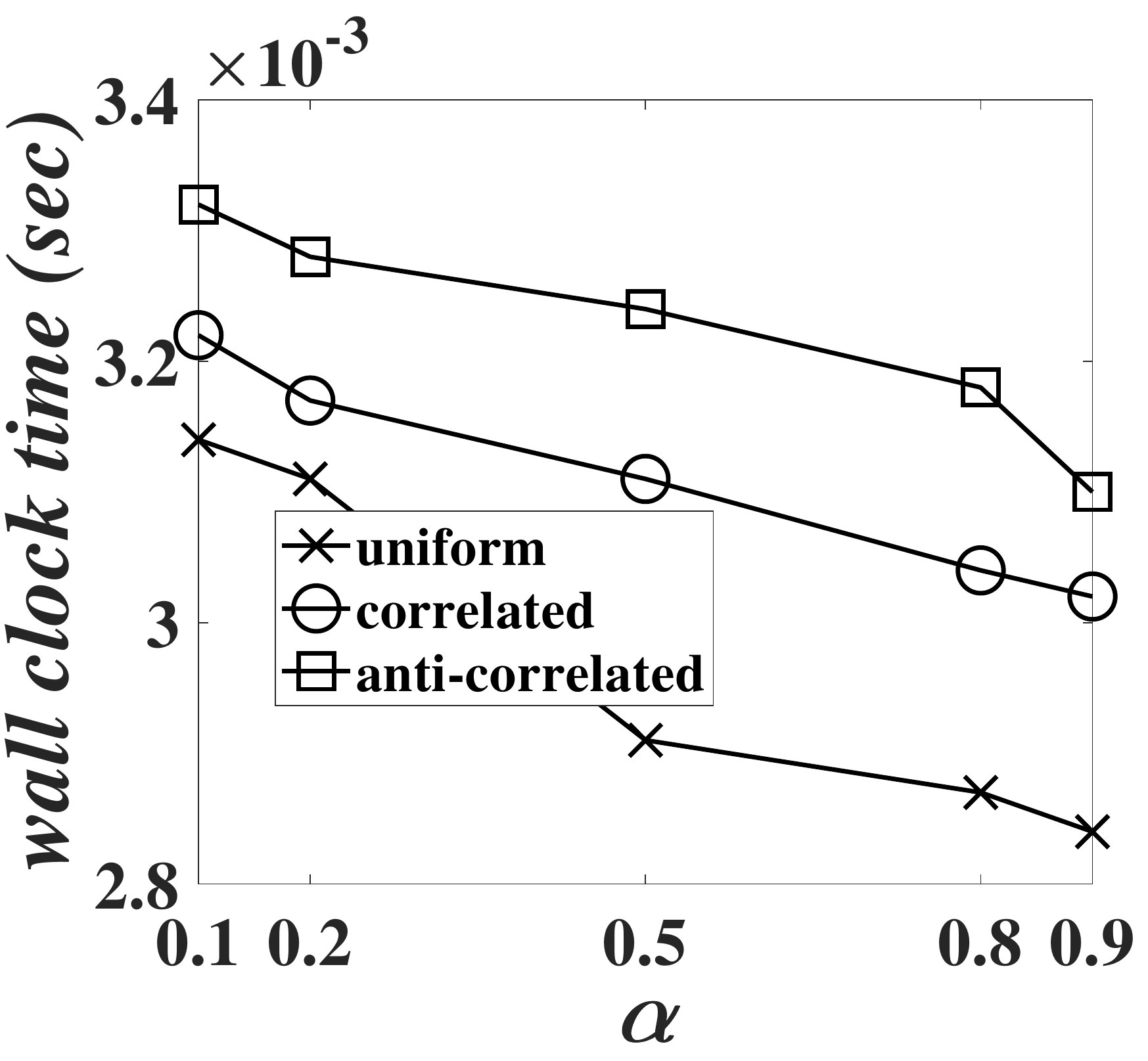}}
\label{subfig:synthetic_cost_vs_alpha}
}\vspace{-3ex}
\caption{\small The performance vs. probabilistic threshold $\alpha$.} 
\label{exper:Join-iDS_vs_alpha} \vspace{-1ex}
\end{figure} 

\begin{figure}[t!]
\centering\vspace{-3ex}
\subfigure[][{\small real data}]{\hspace{-2ex}
\scalebox{0.2}[0.2]{\includegraphics{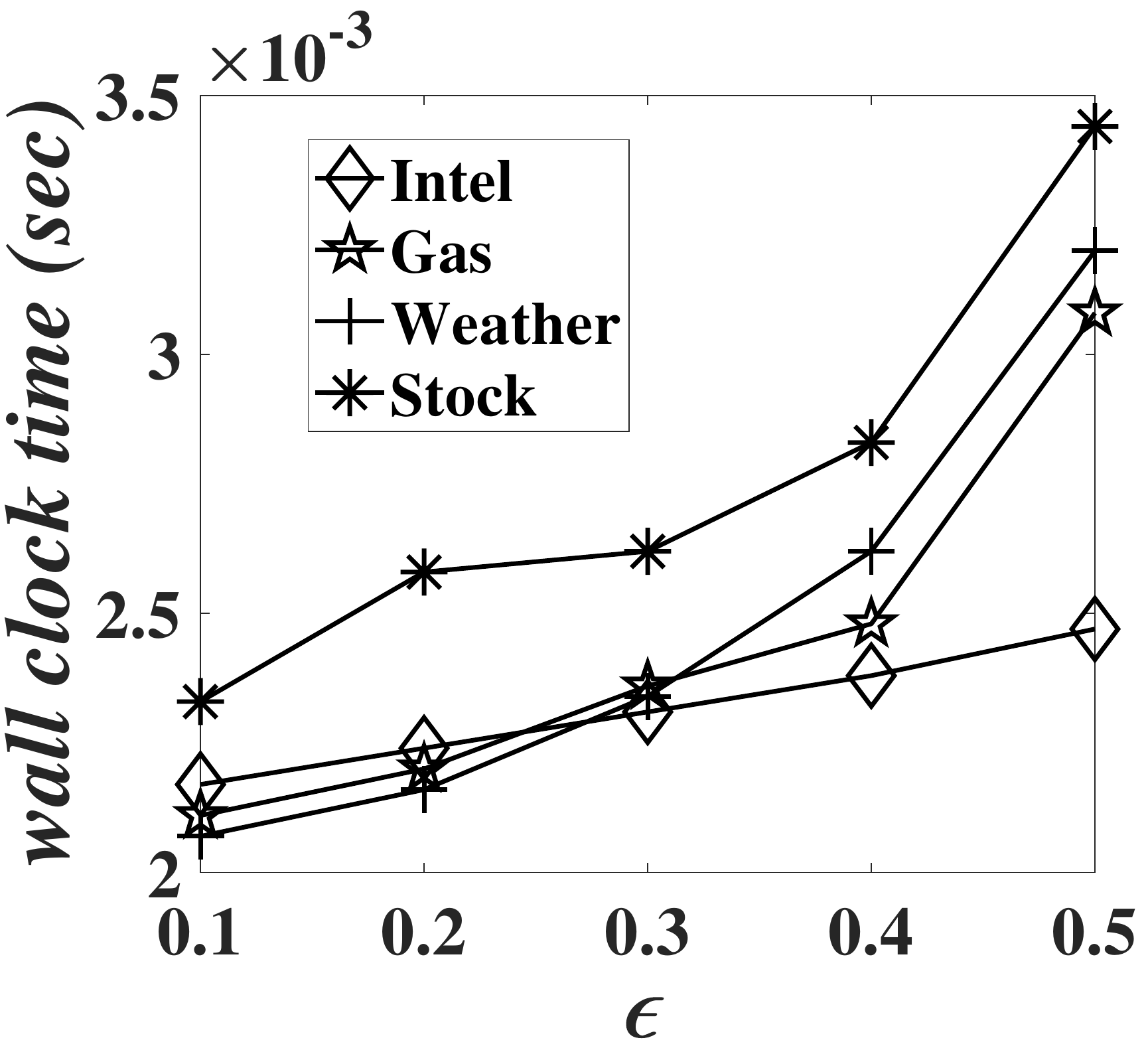}}
\label{subfig:real_cost_vs_epsilon}
}\qquad%
\subfigure[][{\small synthetic data}]{%\hspace{-1ex}                  
\scalebox{0.2}[0.2]{\includegraphics{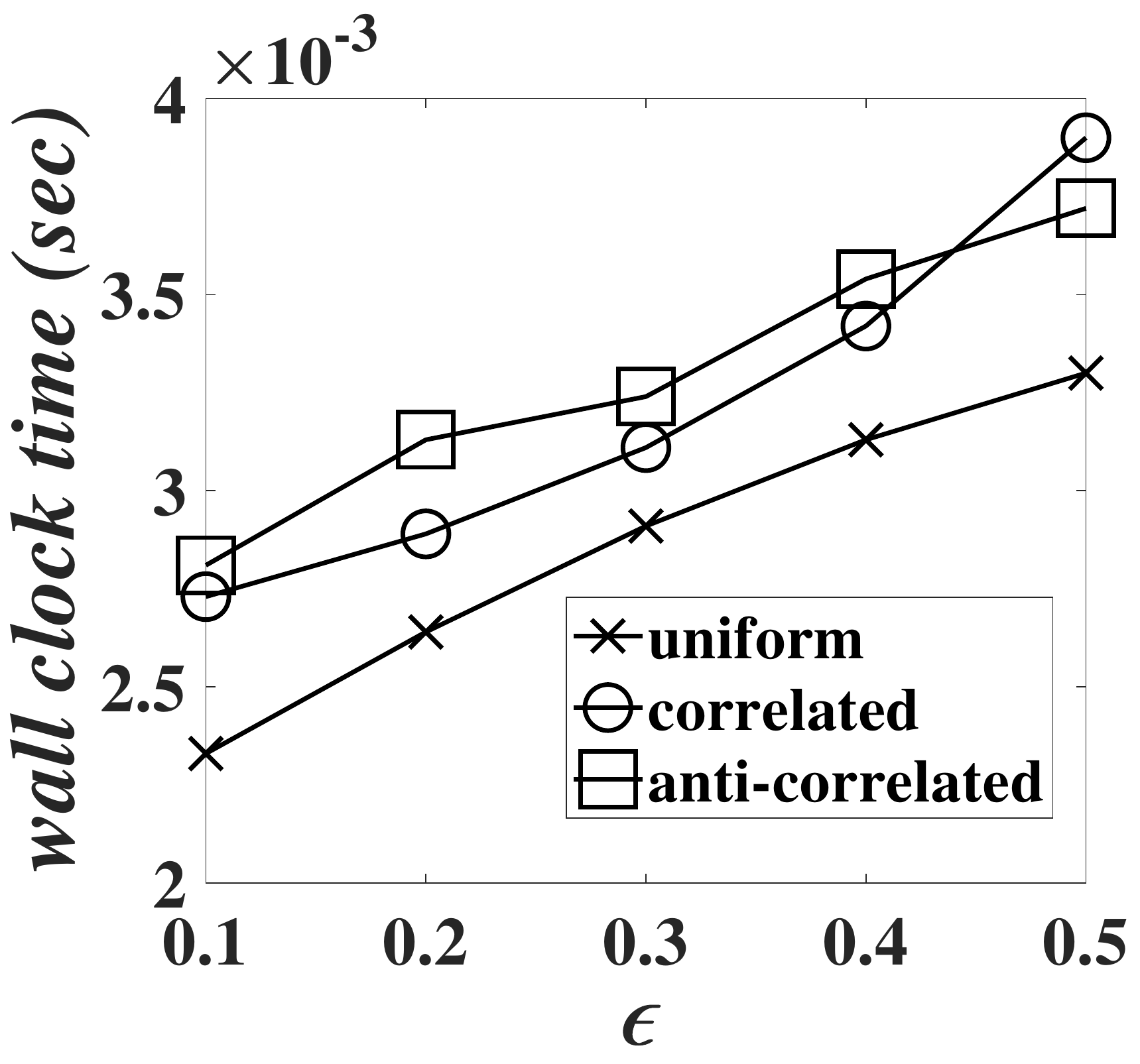}}
\label{subfig:synthetic_cost_vs_epsilon}
}\vspace{-3ex}
\caption{\small The performance vs. parameter $\epsilon$.} 
\label{exper:Join-iDS_vs_epsilon} \vspace{-1ex}
\end{figure} 

\begin{figure}[t!]
\centering\vspace{-3ex}
\subfigure[][{\small real data}]{\hspace{-2ex}
\scalebox{0.21}[0.21]{\includegraphics{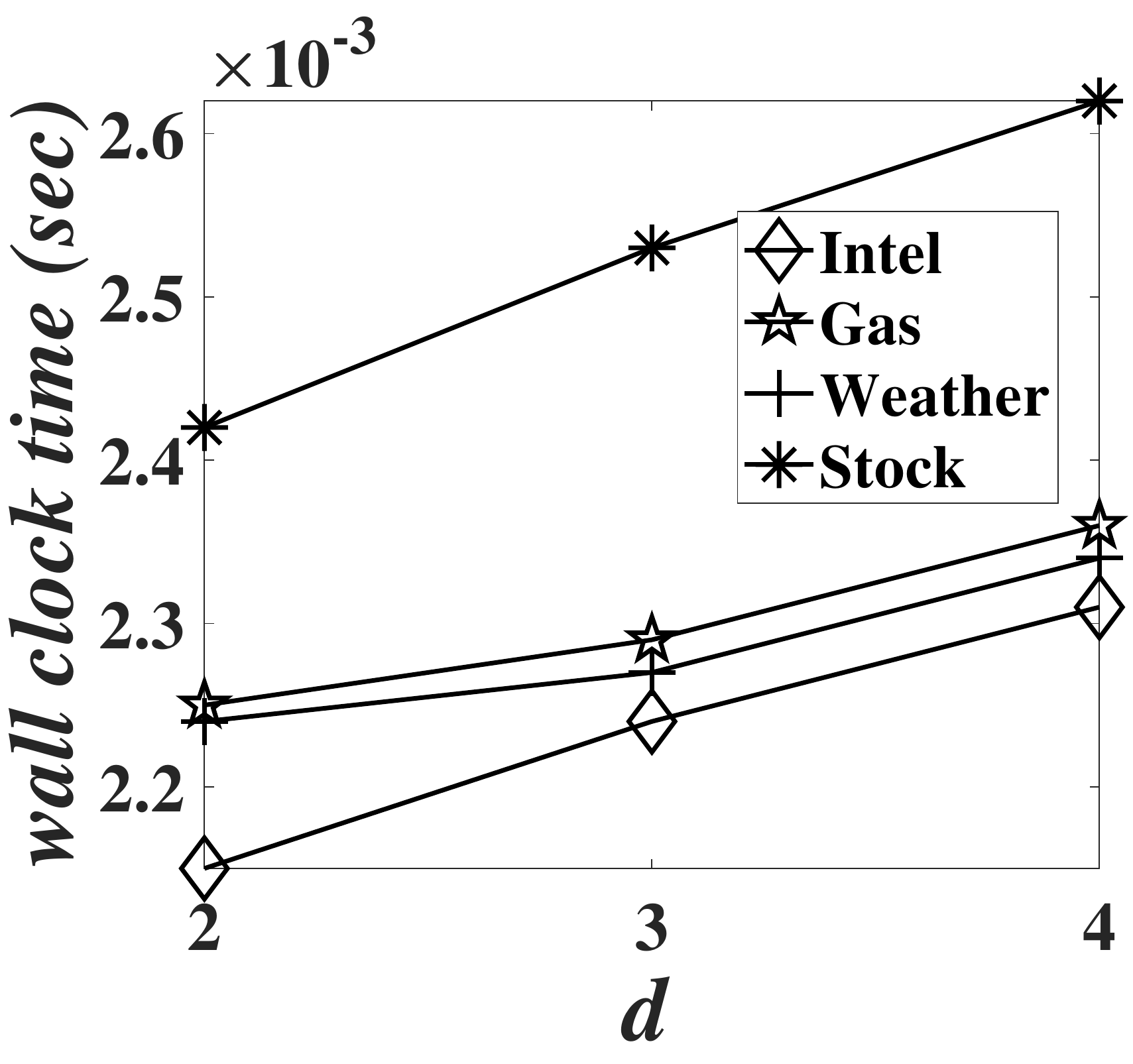}}
\label{subfig:real_cost_vs_d}
}\qquad%
\subfigure[][{\small synthetic data}]{%\hspace{-1ex}                  
\scalebox{0.2}[0.2]{\includegraphics{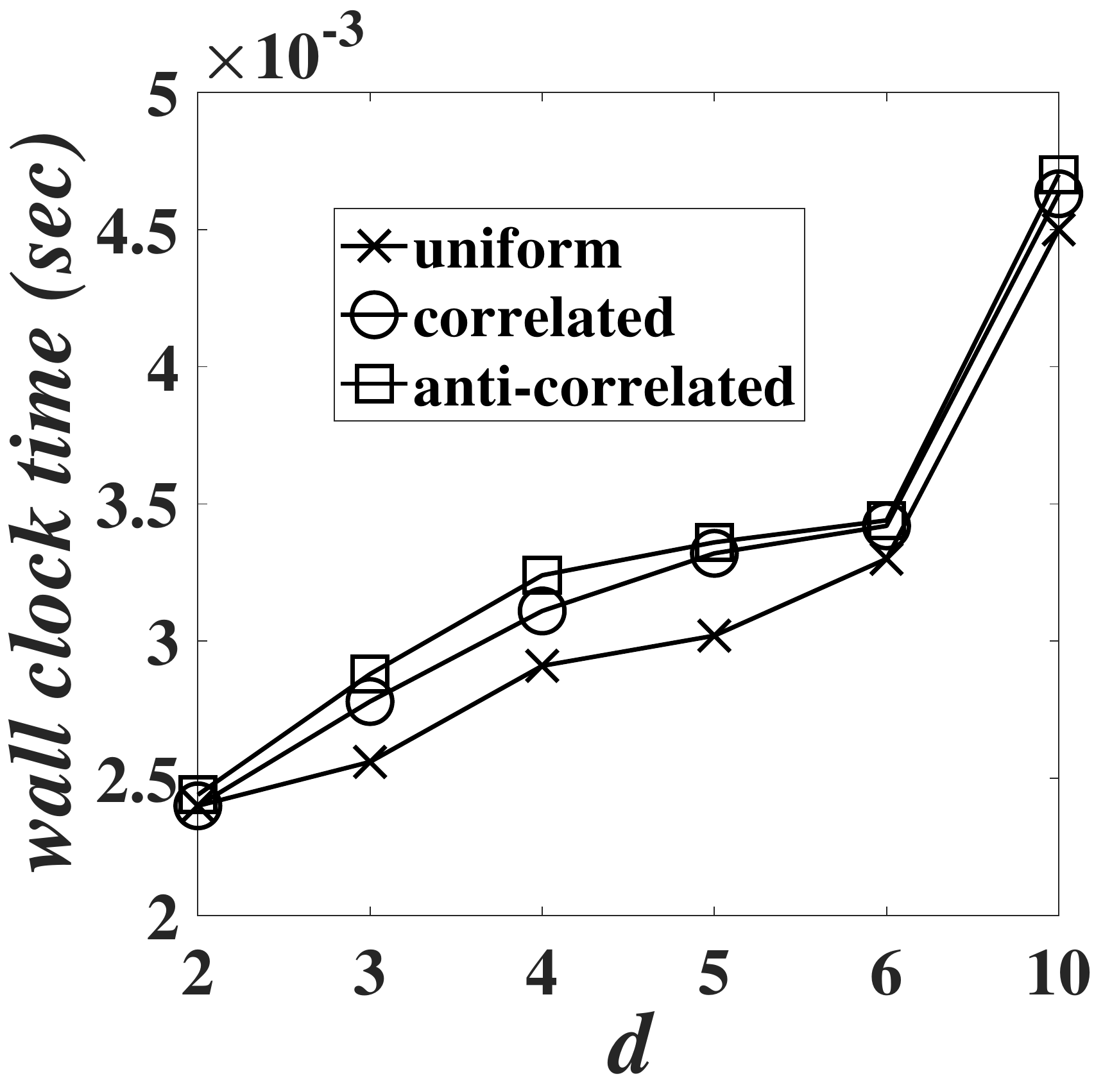}}
\label{subfig:synthetic_cost_vs_d}
}\vspace{-3ex}
\caption{\small The performance vs. dimensionality $d$.} 
\label{exper:Join-iDS_vs_d} %\vspace{-2ex}
\end{figure} 

\begin{figure}[t!]
\centering\vspace{-3ex}
\subfigure[][{\small real data}]{\hspace{-2ex}
\scalebox{0.2}[0.2]{\includegraphics{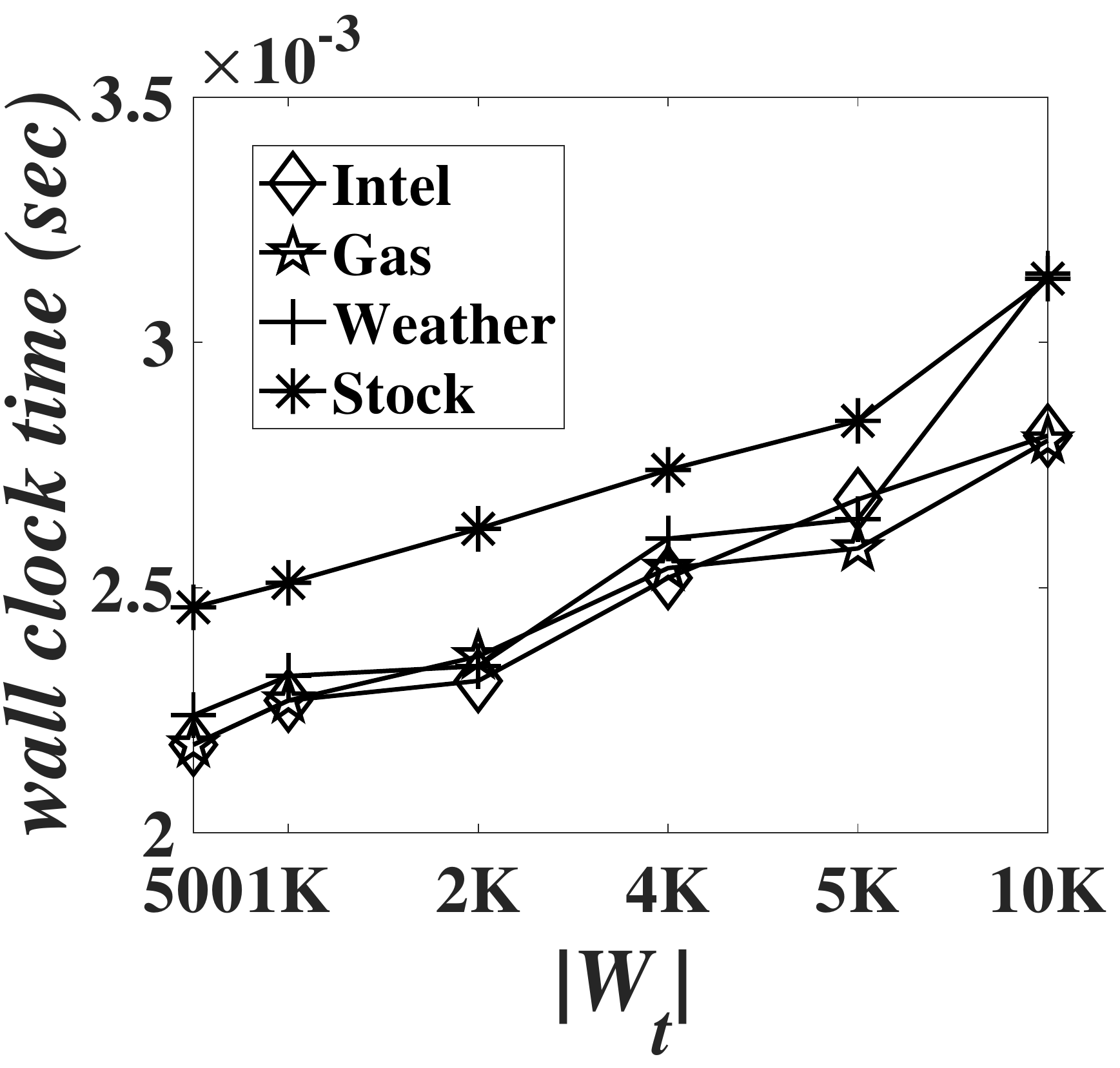}}
\label{subfig:real_cost_vs_w}
}\qquad%
\subfigure[][{\small synthetic data}]{%\hspace{-1.8ex}                  
\scalebox{0.2}[0.2]{\includegraphics{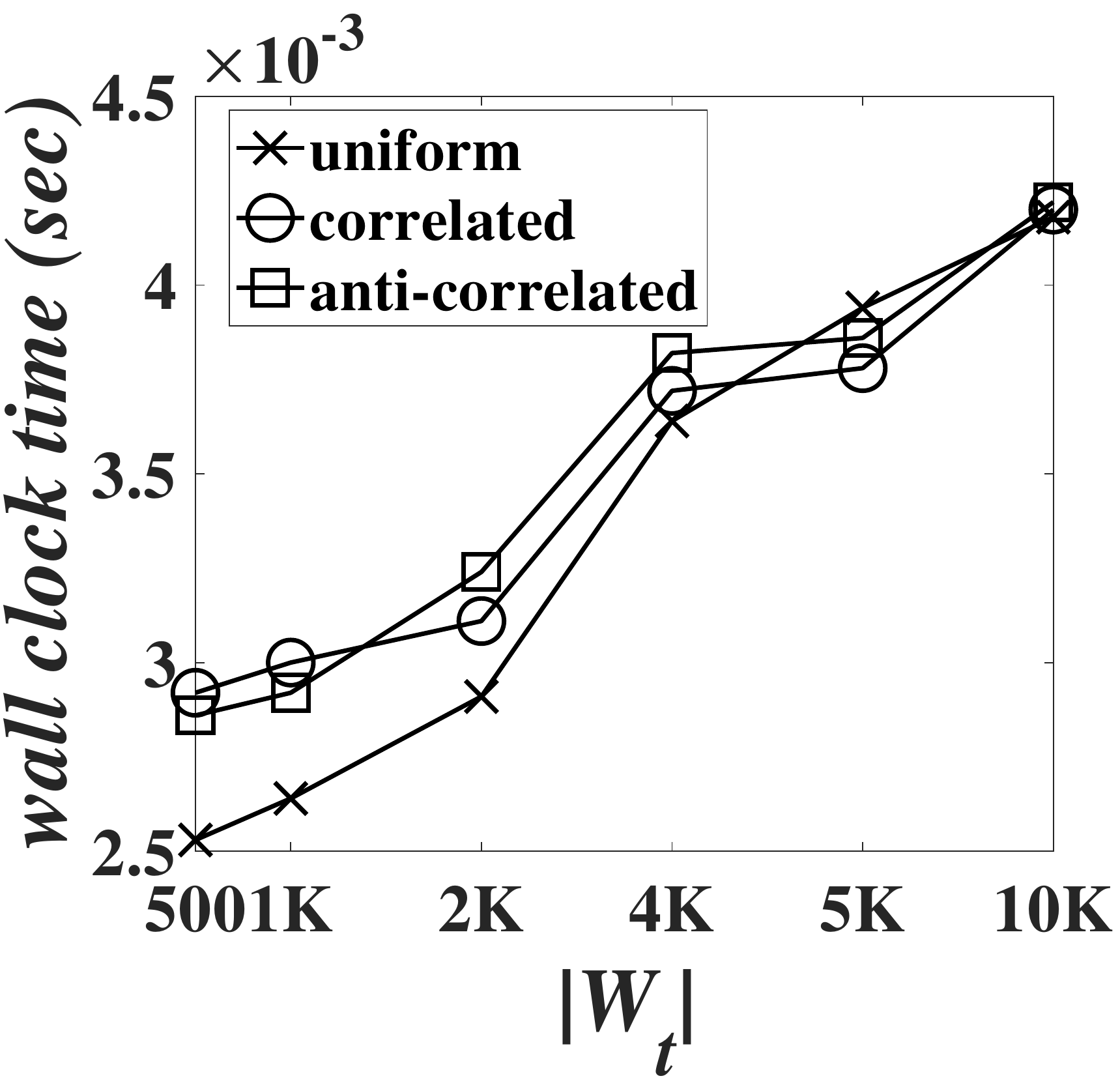}}
\label{subfig:real_cost_vs_w}
}\vspace{-3ex}
\caption{\small The performance vs. No., $|W_t|$, of objects in $W_{1t}$ (or $W_{2t}$).} 
\label{exper:Join-iDS_vs_w} %\vspace{-1ex}
\end{figure}

\nop{
\begin{figure}[t!]
\centering \vspace{-3ex}
\subfigure[][{\small real data}]{\hspace{-2ex}
\scalebox{0.2}[0.2]{\includegraphics{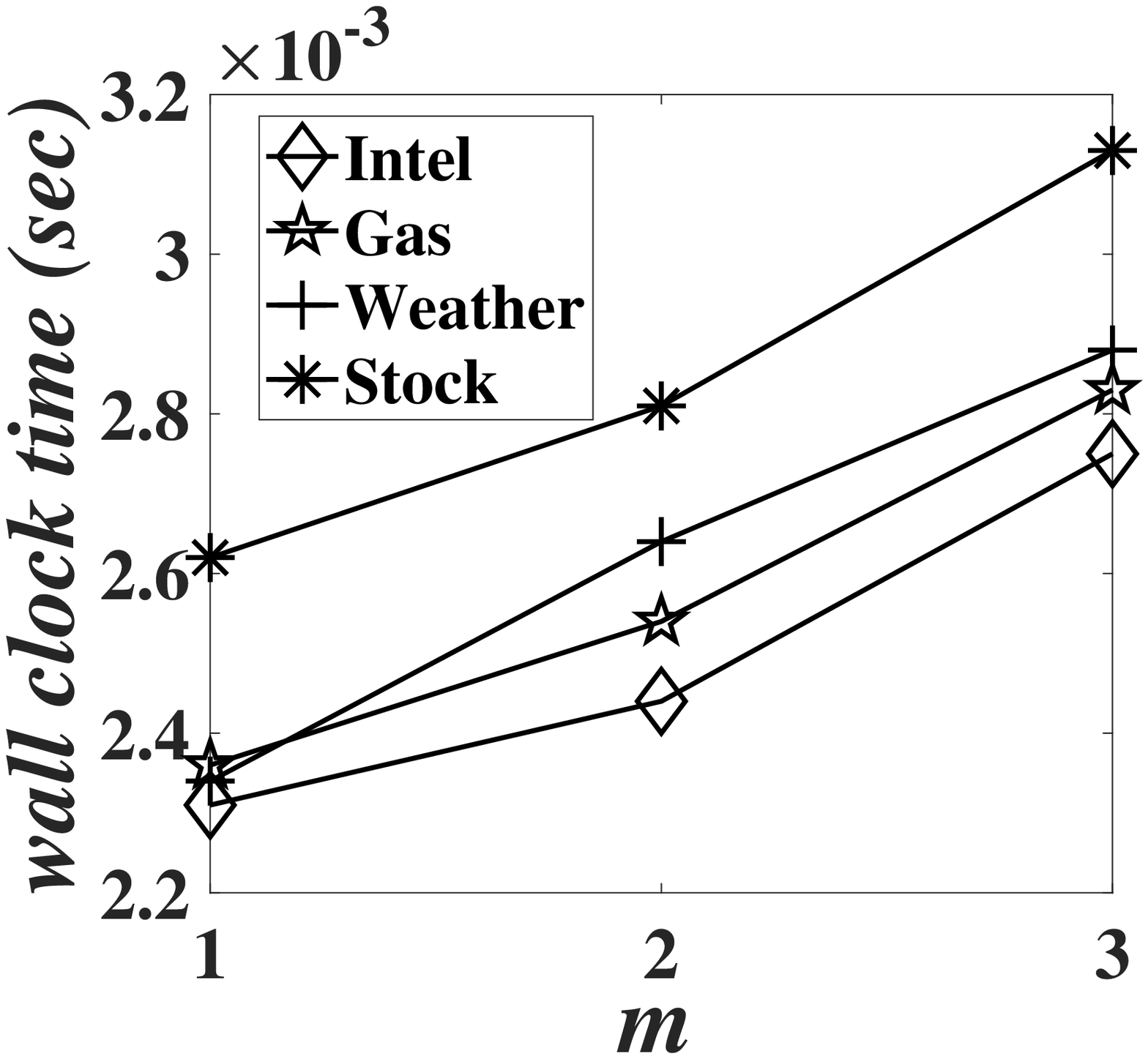}}
\label{subfig:real_cost_vs_m}
}\qquad%
\subfigure[][{\small synthetic data}]{%\hspace{-1.5ex}\vspace{-1ex}                  
\scalebox{0.2}[0.2]{\includegraphics{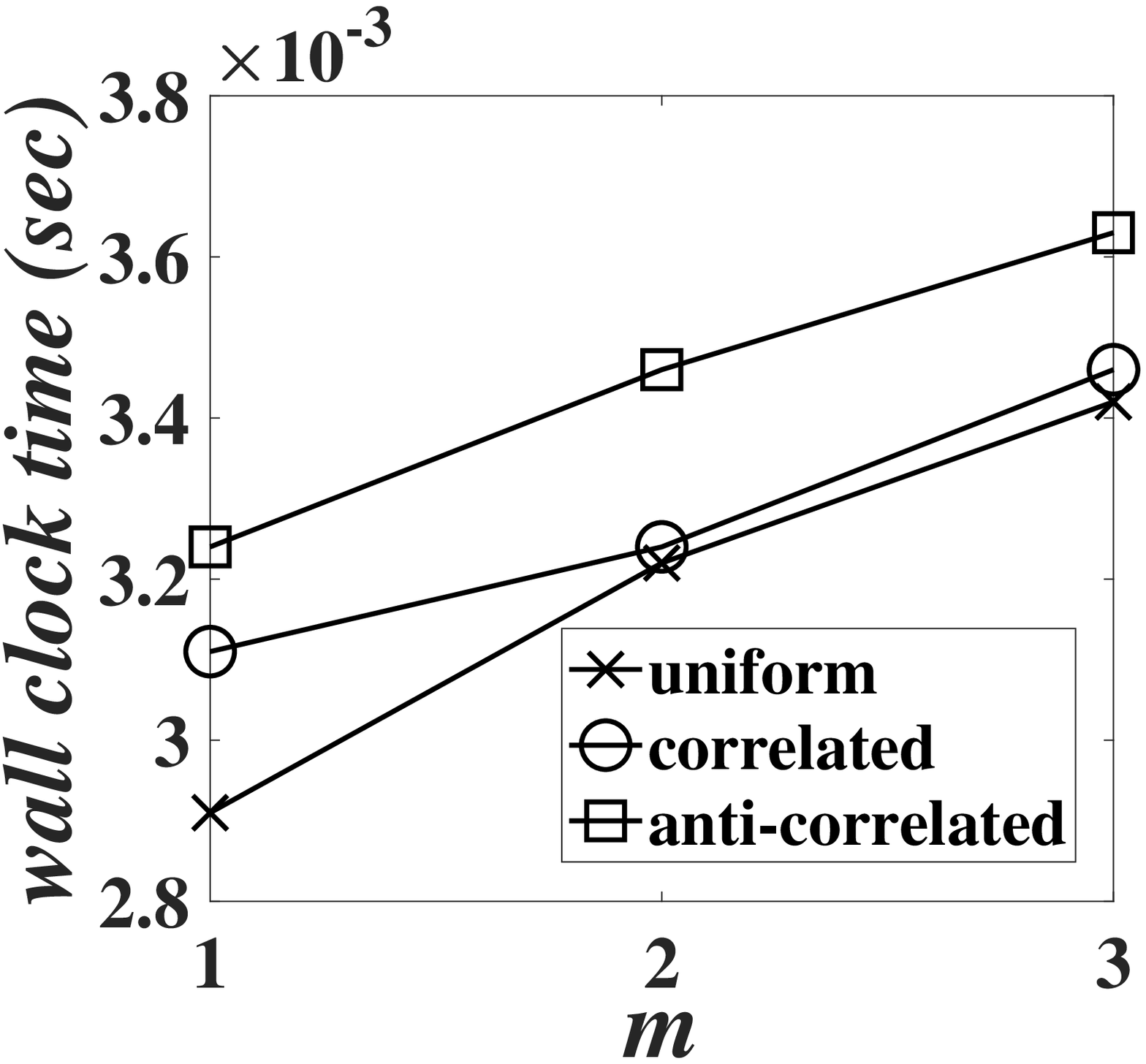}}
\label{subfig:synthetic_cost_vs_R}
}\vspace{-3ex}
\caption{\small The performance vs. No., $m$, of missing attributes.} 
\label{exper:Join-iDS_vs_m}%\vspace{-2ex}
\end{figure} 
}

\noindent {\bf The Join-iDS Performance vs. Probabilistic Threshold $\alpha$.} Figure \ref{exper:Join-iDS_vs_alpha} shows the performance of our Join-iDS approach, where probabilistic threshold $\alpha=0.1, 0.2, 0.5, 0.8$, and $0.9$, and default values are used for other parameters. From Inequality~(\ref{eq:eq2}), larger $\alpha$ value will incur fewer object pairs that can be joined, and thus lead to smaller wall clock time (as confirmed in figures). For all real/synthetic data, the wall clock time remains low (i.e., less than 0.0027 $sec$ and 0.0034 $sec$, resp.), which indicates the efficiency of our Join-iDS approach with different $\alpha$ values.

\noindent {\bf The Join-iDS Performance vs. Distance Threshold $\epsilon$.} Figure \ref{exper:Join-iDS_vs_epsilon} evaluates the effect of distance threshold $\epsilon$ on our Join-iDS performance, where $\epsilon$ varies from $0.1$ to $0.5$, and other parameter values are by default. Intuitively, larger $\epsilon$ values incur lower pruning power. Thus, when $\epsilon$ becomes larger, the wall clock time smoothly increases. Nonetheless, the wall clock time still remains low for real and synthetic data (i.e., less than 0.0035 $sec$ and 0.0039 $sec$, resp.).

%Please note,  which is confirmed by the our experimental results. That is, as the increase of $\epsilon$, the wall clock time increases smoothly. Nonetheless, the wall clock time still remain low for both real (less than 0.0035 $sec$) and synthetic data (less than 0.0039 $sec$). 

\noindent {\bf The Join-iDS Performance vs. Dimensionality $d$.} Figure \ref{exper:Join-iDS_vs_d} varies the dimensionality, $d$, of objects in streams from $2$ to $4$ for real data, and from $2$ to $10$ for synthetic data, where other parameters are set to their default values. As the increase of $d$, all real/synthetic data sets need more wall clock time, which is due to ``the curse of dimensionality'' problem \cite{Berchtold96}. Nevertheless, wall clock times for all real/synthetic data still remain low (i.e., below 0.0026 $sec$ and 0.0049 $sec$, resp.), which verifies good Join-iDS performance.

%on the performance of our Join-iDS approach, where we set $d=2$, $3$, and $4$ for real data ($Stock$ has less than $6$ parameters); $d=2$, $3$, $4$, $5$, and $6$ for synthetic data; and all other parameters as their default values. As the increase of $d$, all real/synthetic data sets need more wall clock time, which is due to ``the curse of dimensionality'' problem \cite{Berchtold96}. However, the wall clock time for both real and synthetic data sets still remain low (below 0.0027 and 0.0033 $sec$, respectively), which confirms good performance of our Join-iDS approach.

\noindent {\bf The Join-iDS Performance vs. the Number, $|W_t|$, of Objects in Sliding Windows.} Figure \ref{exper:Join-iDS_vs_w} illustrates the performance of our Join-iDS approach for different sizes, $|W_t|$, of sliding windows (i.e., $W_{1t}$ and $W_{2t}$), where $|W_t|=500$, $1K$, $2K$, $4K$, $5K$, and $10K$, and other parameters are set to their default values. In figures, the wall clock time increases for larger $|W_t|$. This is reasonable, since we need to maintain more objects in $\epsilon$-grid and check more candidate join pairs in $JS$. Nonetheless, the wall clock time remains low for real/synthetic data (i.e., less than 0.0032 $sec$ and 0.0043 $sec$, resp.), which shows good scalability of our Join-iDS approach for large window size.

\nop{
\begin{figure}[t!]
\centering \vspace{-3ex}
\subfigure[][{\small real data}]{\hspace{-2ex}
\scalebox{0.2}[0.2]{\includegraphics{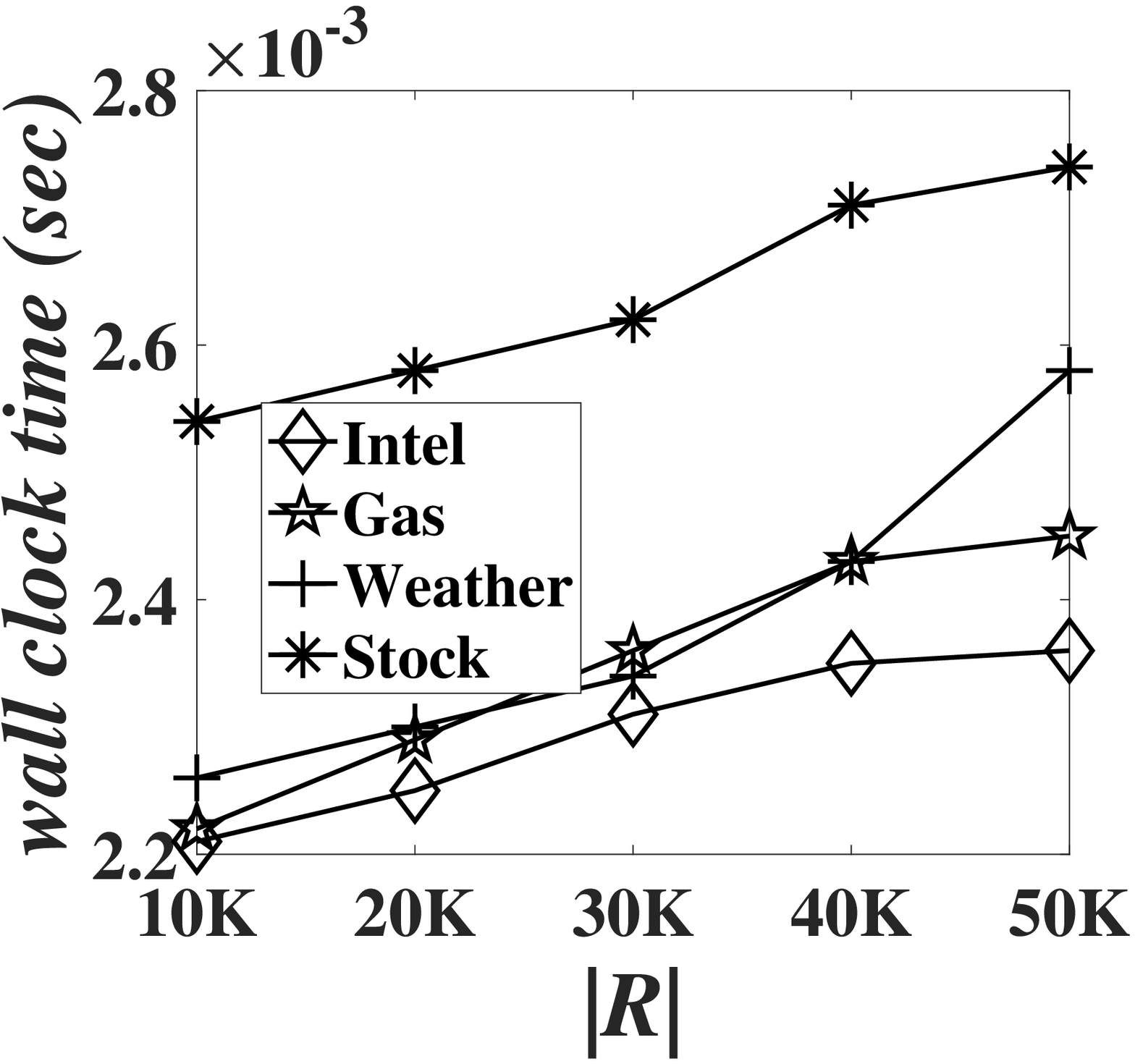}}
\label{subfig:real_cost_vs_R}
}\qquad%
\subfigure[][{\small synthetic data}]{%\hspace{-1.5ex}\vspace{-1ex}                  
\scalebox{0.2}[0.2]{\includegraphics{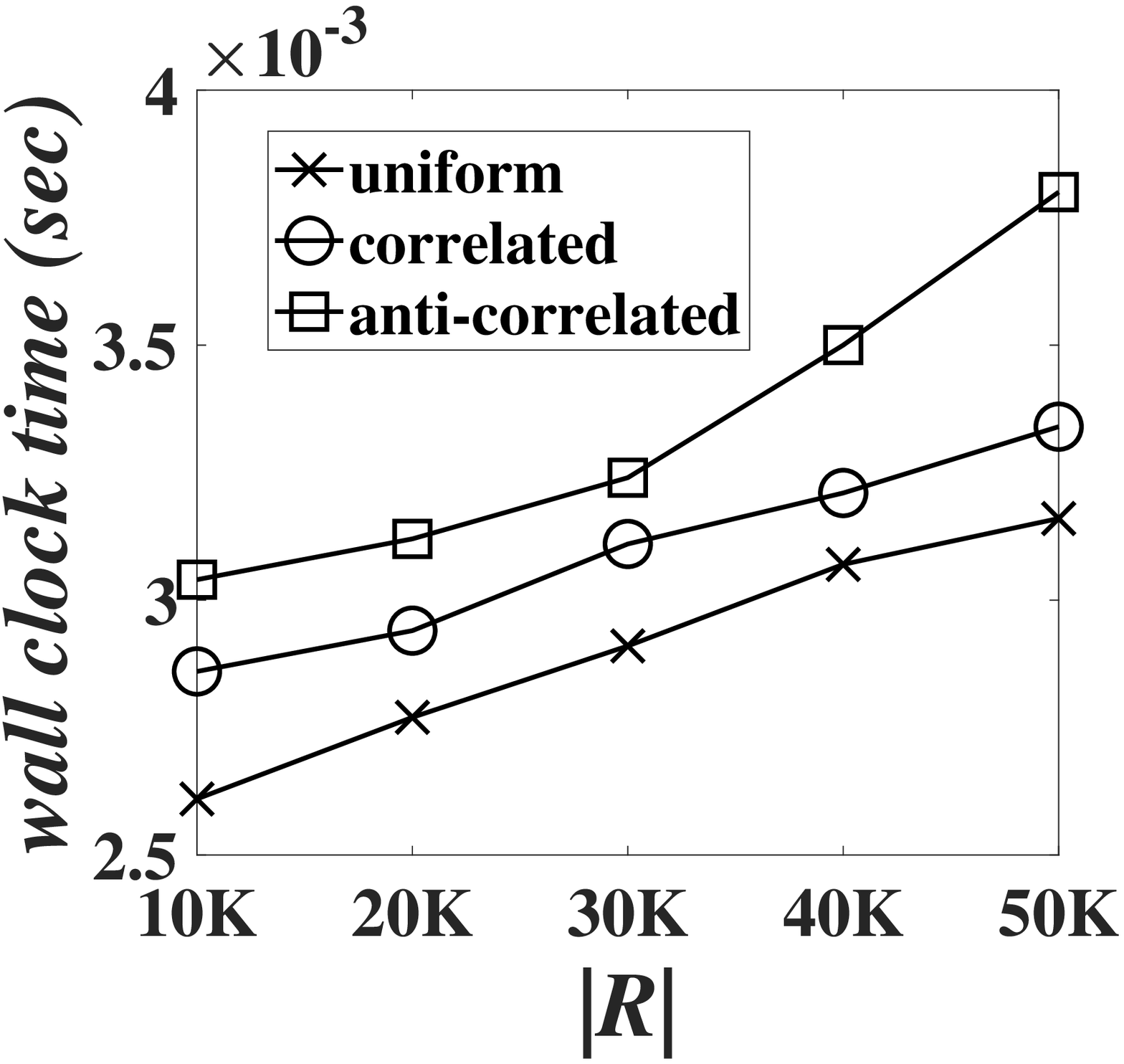}}
\label{subfig:synthetic_cost_vs_R}
}\vspace{-3ex}
\caption{\small The performance vs. the size, $|R|$, of data repository.} 
\label{exper:Join-iDS_vs_R}%\vspace{-1ex}
\end{figure}

\begin{figure}[t!]\vspace{-2ex}
\centering
\scalebox{0.2}[0.2]{\includegraphics{m_synthetic.eps}}\vspace{-3ex}
\caption{\small The performance vs. No., $m$, of missing attributes.}%\vspace{-2ex}
\label{exper:Join-iDS_vs_m}
\end{figure}

}

\nop{
\noindent {\bf The Join-iDS Performance vs. the Size, $|R|$, of the Data Repository $R$.} Figure \ref{exper:Join-iDS_vs_R} reports the effect of the size, $|R|$, of data repository $R$ on the performance of our Join-iDS approach, by varying $|R|$ from $10K$ to $50K$, where other parameters are by default. In figures, the wall clock time increases smoothly for larger $|R|$. This is because, larger $|R|$ leads to higher imputation cost with more imputation samples to manipulate. Nonetheless, the wall clock times are less than 0.0028 $sec$ and 0.0038 $sec$, for real and synthetic data, resp., which shows the Join-iDS efficiency for data repository $R$ of different sizes. 

\noindent {\bf The Join-iDS Performance vs. the Number, $m$, of Missing Attributes.} Figure \ref{exper:Join-iDS_vs_m} examines the Join-iDS performance against different numbers, $m$, of missing attributes, where $m=1$, $2$, and $3$, and other parameters are set to default values. Note that, as depicted in Table \ref{table:real_DDs}, we only have one DD rule for real data sets. Thus, we do not test this parameter for $Intel$, $Gas$, $Weather$, and $Stock$. With larger $m$, the imputed objects will have more possible instances, and thus the wall clock time increases (as confirmed in Figure \ref{exper:Join-iDS_vs_m}). The wall clock time is less than 0.0037 $sec$ for synthetic data, which indicates good performance of our Join-iDS approach.

}

We also tested other parameters (e.g., the size, $|R|$, of the data repository $R$ and the number, $m$, of missing attributes), and will not report similar experimental results here due to space limitations. In summary, our Join-iDS approach can achieve good performance under different parameter settings.

\section{Related Work}
\label{sec:related_work}

\noindent {\bf Stream Processing.} There are many important problems for stream data processing, including event detection \cite{poppe2017complete}, outlier detection \cite{sadik2018wadjet}, top-$k$ query  \cite{mouratidis2006continuous}, join \cite{das2003approximate,lian2009efficient}, skyline query \cite{tao2006maintaining}, nearest neighbor query \cite{bohm2007efficiently}, aggregate query \cite{tatbul2006window}, and so on. These works usually assume that stream data are either certain or uncertain. To our best knowledge, they cannot be directly applied to our Join-iDS problem, under the semantics of incomplete data streams.

\nop{
\noindent {\bf Data Imputation in Data streams.}
{\color{Weilong} do we need to add this part?}
}

\noindent {\bf Differential Dependency.} Differential dependency (DD) \cite{song2011differential} is a valuable tool for data imputation \cite{song2015enriching}, data cleaning \cite{prokoshyna2015combining}, data repairing \cite{song2014repairing}, and so on. Song et al. \cite{song2015enriching} used the DDs to fill the missing attributes of incomplete objects on static data set via some detected neighbors satisfying the distance constraints on determinant attributes. Song et al. \cite{song2014repairing,song2017graph} also explored to repair labels of graph nodes. Prokoshyna et al. \cite{prokoshyna2015combining} cleaned databases by removing inconsistent records that violate DDs. Unlike these works targeting at static database, we apply DD-based imputation to the streaming environment, which makes our Join-iDS problem more challenging.

%In addition, we study the join operator in the process of stream data imputation, which makes our Join-iDS problem quite challenging.

%Compared with PJQ, PSJ tolerates differential differences between attributes by introducing a distance threshold (i.e., $\epsilon$), which makes PSJ a more common used operator

\noindent {\bf Join Over Certain/Uncertain Databases.} The join operator was traditionally used in relational databases \cite{mishra1992join} or data streams \cite{das2003approximate}. The join predicate may follow equality semantics between attributes of tuples or data objects. According to predicate constraints, join over uncertain databases \cite{lian2010similarity,lian2009efficient} can be classified into two categories, probabilistic join query (PJQ) and probabilistic similarity join (PSJ), which return pairs of joining objects that are identical or similar (e.g., within $\epsilon$-distance from each other), resp., with high confidences. 

PSJ has received much attention in many domains. Galkin et al. \cite{galkin2017sjoin} applied PSJ to integrate heterogeneous RDF graphs by introducing an equivalent semantics for RDF graphs. Ma et al. \cite{ma2017novel} proposed an effective filter-based method for high-dimensional vector similarity join. Wang et al. \cite{wang2017leveraging} explored how to leverage relations between sets to proceed the exact set similarity. Li et al. \cite{li2015efficient} proposed a prefix tree index to join multi-attribute Data. Shang et al. \cite{shang2017trajectory} applied PSJ in trajectory similarity join in spatial networks via some search space pruning techniques. Bohm et al. \cite{bohm2001epsilon} proposed a join approach for massive high-dimensional data, based on a particular order of data points via a grid. Different from \cite{bohm2001epsilon} that uses the grid cell for sorting data points, in our work, we designed a grid variant, $\epsilon$-grid, which stores additional information (e.g., queues with imputed objects) specific for incomplete data streams, and supports the dynamic maintenance of candidate join answers. 

Existing works on join over certain or uncertain databases (or data streams) usually assume that, the underlying data have complete attributes. Thus, their techniques cannot be directly applied to solve our Join-iDS problem in the presence of missing attributes. 

%It is non-trivial to effectively and efficiently solve our Join-iDS problem.

\noindent {\bf Incomplete Databases.} In the literature of incomplete databases, the most commonly used imputation methods include rule-based \cite{fan2010towards}, statistical-based \cite{mayfield2010eracer}, pattern-based \cite{wellenzohn2017continuous}, constraint-based \cite{zhang2017time} imputation, and so on. These existing works may incur the accuracy problem for sparse data sets. That is, sometimes, they may not be able to find samples to impute the missing attributes in sparse data sets, which may lead to problems such as imputation failure or even wrong imputation result \cite{song2011differential}. To avoid or alleviate this problem, in this paper, we use DDs to impute missing attributes based on a historical (complete) data repository $R$.  We will consider the regression-based imputation approaches (e.g., \cite{8731351}) as our future work.

\section{Conclusions}
\label{sec:conclusions}

In this paper, we formalize the problem of the \textit{join over incomplete data streams} (Join-iDS), which is useful for many real applications such as sensor data monitoring and network intrusion detection. In order to tackle the Join-iDS problem, we design a cost-model-based data imputation method via DDs, devise effective pruning and indexing mechanisms, and propose an efficient algorithm to incrementally maintain the join results over the imputed data streams. Through extensive experiments, we confirm the efficiency and effectiveness of our proposed Join-iDS approach on both real and synthetic data.

\balance
%
% The acknowledgments section is defined using the "acks" environment (and NOT an unnumbered section). This ensures
% the proper identification of the section in the article metadata, and the consistent spelling of the heading.

\begin{acks}
Xiang Lian is supported by NSF OAC No. 1739491 and Lian Start
up No. 220981, Kent State University. We thank the anonymous reviewers for the useful suggestions.
\end{acks}

%
% The next two lines define the bibliography style to be used, and the bibliography file.
\bibliographystyle{ACM-Reference-Format}
\bibliography{sample-base}

% 
% If your work has an appendix, this is the place to put it.

\newpage
%\small%\scriptsize
\appendix

\text{ }\\
\noindent {\huge \bf Appendix}

\section{Proof for Pruning Strategies}
\label{sec:proof}

\subsection{Proof of Lemma \ref{lemma:lem1}}
\label{subsec:proof_lem1}
\begin{proof}
If $mindist(o_x^p,o_y^p) > \epsilon$ holds, we can get $Pr\{mindist(o_x^p, o_y^p)$ $\leq \epsilon\}=0$ based on Definition \ref{def:Join-iDS}. Since $mindist(o_x^p,o_y^p)$ is the minimum distance between imputed objects $o_x^p$ and $o_y^p$, we can obtain $Pr_{Join\text{-}iDS}(o_x^p,$ $o_y^p)$ $= Pr\{dist(o_x^p, o_y^p) \leq \epsilon\} \le Pr\{mindist(o_x^p, o_y^p) \leq \epsilon\} = 0$. That is, $Pr_{Join\text{-}iDS}(o_x^p,o_y^p)=0$.
\end{proof} 

\nop{
\subsection{Proof of Lemma \ref{lemma:lem2}}
\label{subsec:proof_lem2}
\begin{proof}
Given three data points, $c_x$, $c_y$ and $pvt$, in data space, based on triangle inequality, we can get $dist(c_x,c_y) > |dist(c_x,pvt)-dist(c_y,pvt)|$.
In addition, based on the space position of the MBRs of objects $o_x^p$ and $o_y^p$, we can get $dist(c_x,c_y) \le dist(o_x^p,o_y^p) +r_x +r_y$. Thus, we can get $Pr_{Join\text{-}iDS}(o_x^p,o_y^p) = Pr\{dist(o_x^p, o_y^p) \le \epsilon\} \le Pr\{dist(c_x, c_y) - r_x - r_y \le \epsilon\} < Pr\{|dist(c_x,pvt)-dist(c_y,pvt)| - r_x - r_y \le \epsilon\}$. If $|dist(c_x,pvt)-dist(c_x,pvt)|>\epsilon + r_x + r_y$, we can get $Pr\{|dist(c_x,pvt)-dist(c_y,pvt)| - r_x - r_y \le \epsilon\} = 0$. Finally we can get $Pr_{Join\text{-}iDS}(o_x^p,o_y^p) < Pr\{|dist(c_x,pvt)-dist(c_y,pvt)| - r_x - r_y \le \epsilon\} = 0$.
\end{proof} 
}

\subsection{Proof of Lemma \ref{lemma:lem3}}
\label{subsec:proof_lem3}
\begin{proof}
Given the two sub sets (MBRs) $s_x \in o_x^p.MBR$ and $s_y \in o_y^p.MBR$, based on \textit{Lemma} \ref{lemma:lem1}, if their minimum distance is larger than $\epsilon$ (i.e., $mindist(s_x,s_y) > \epsilon$), we can get $s_x$ and $s_y$ cannot be jointed with confidence 1 (i.e., $Pr_{Join\text{-}iDS}(s_x,s_y) = 0$). We use $Pr\{o_{xl} \in s_x \land o_{yg} \in s_y\}$ to represent the probability that instances $o_{xl}$ and $o_{yg}$ are from MBRs $s_x$ and $s_y$, respectively, while we use $Pr\{o_{xl} \notin s_x \lor o_{yg} \notin s_y\}$ to indicate the probability that $o_{xl}$ and $o_{yg}$ are not all from $s_x$ and $s_y$.  Thus, based on conditional probability and Eq. (\ref{eq:eq3}), we can get $Pr_{Join\text{-}iDS}(o_x,o_y) = \sum_{o_{xl}} \sum_{o_{yg}} Pr\{o_{xl} \in s_x \land o_{yg} \in s_y\} \cdot o_{xl}.p \cdot o_{yg}.p \cdot \chi(dist(o_{xl},o_{yg})\le \epsilon) + \sum_{o_{xl}} \sum_{o_{yg}} Pr\{o_{xl} \notin s_x \lor o_{yg} \notin s_y\} \cdot o_{xl}.p \cdot o_{yg}.p  \cdot \chi(dist(o_{xl},o_{yg})\le \epsilon) \le \sum_{o_{xl}} \sum_{o_{yg}} Pr\{o_{xl} \notin s_x \lor o_{yg} \notin s_y\} \cdot o_{xl}.p \cdot o_{yg}.p  \cdot \chi(dist(o_{xl},o_{yg})\le \epsilon) \le \sum_{o_{xl}} \sum_{o_{yg}} Pr\{o_{xl} \notin s_x \lor o_{yg} \notin s_y\} = 1 - \sum_{o_{xl}} \sum_{o_{yg}} Pr\{o_{xl} \in s_x \land o_{yg} \in s_y\} = 1 - \beta_x \cdot \beta_y$. Since $\beta_x \cdot \beta_y > 1-\alpha$, thus, finally we can get $Pr_{Join\text{-}iDS}(o_x,o_y) \le 1 -  \beta_x \cdot \beta_y < 1 - (1 - \alpha) = \alpha$.
\end{proof} 

\section{Cost Models}
\label{sec:cost_model}

\subsection{Cost Model for the DD Selection via Fractal Dimension}
\label{subsec:cost_DD_selection}
%\underline{Cost-model based DD usability estimation via fractal dimension.} 
Given a DD rule, $Y\to A_j$, returned by \textit{imputation lattice} $Lat_j$ and an incomplete object $o_i$ with missing attribute $A_j$, we will search samples $o_c\in R$ via a query range, $Q$, enclosed by the intervals, $A_y.I$, on attributes $A_y\in Y$. Thus, we can project data points in $R$ from $d$ dimension to $|Y|$ dimensions, where $|Y|$ is the number of attributes in $Y$. Within the $|Y|$-dimensional space, we can calculate its \textit{fractal dimension} \cite{belussi1998self}, denoted as $D_2^Y$, as follows.
\begin{eqnarray}
D_2^Y = \frac{\partial log(\sum p_i^2)}{\partial log(r)}, \ \ \ \ \ \ r \in (r_1, r_2)
\label{eq:D2_calculation}
\end{eqnarray}
\noindent where the reduced $|Y|$-dimensional space of data repository $R$ is composed of multiple regular cells with equal side length $r \in (r_1, r_2)$, in which contains $p_i$ percentage of data points in $R$.

With the \textit{fractal dimension}, $D_2^Y$, of the projected $|Y|$-dimensional data space, we can obtain the estimated count, $cnt_Q$, of objects $o_c\in R$ falling into the query range $Q$, via Eq. (\ref{eq:candidate_num_estimation}) \cite{belussi1998self}:
\begin{eqnarray}
cnt_Q = \left(\frac{Vol(\epsilon, Q)}{Vol(\epsilon, \square)}\right)^{\frac{D_2^Y}{|Y|}} \times (N_{\square}-1) \times 2^{D_2^Y} \times \epsilon^{D_2^Y},
\label{eq:candidate_num_estimation}
\end{eqnarray}
\noindent where $\epsilon$ is the largest distance between the center of query range $Q$ and corresponding the most remote point on $Q$, $Vol(\epsilon,Q)$ is the volume of $Q$, $Vol(\epsilon,\square)$ is the volume of a $|Y|$-dimension regular cube with the largest distance $\epsilon$ between the most remote point on $\square$ and its center of $\square$, and $N_{\square}$ is the number of objects within the range of $\square$.

\nop{
\subsection{Cost Model for Selecting a Good Pivot Set $PVT^*$ for \textit{Lemma} \ref{lemma:lem2}}
\label{subsec:cost_for_PVT}
As depicted in \textit{Lemma} \ref{lemma:lem2}, for two incomplete objects $o_x$ and $o_y$, a good pivot $pvt \in PVT$ may help prune the false join pair $\{o_x,o_y\}$, without the direct comparison of their distance. Given a number of $h$ pivot sets (can be generated randomly in data space of $R$), it is important to select a good pivot set $PVT^*$ (from $R$) with strong pruning power to prune the most number of object pairs. To obtain such a pivot set $PVT^*$, we devise a specific cost model as follow.

\begin{eqnarray}
PVT^* = arg \max\limits_{PVT} \sum_{\forall o_x}^{N_1} \sum_{\forall o_y}^{N_2} (1- \prod_{\forall pvt \in PVT} Pr\{dist(o_x, o_y) \le \epsilon| pvt\})
\label{eq:eqx}
\end{eqnarray}
\noindent where incomplete objects $o_x$ and $o_y$ are samples (following the distribution of historical data), $N_1$ and $N_2$ are the corresponding number of samples $o_x$ and $o_y$, $Pr\{dist(o_x, o_y) \le \epsilon| pvt\}$ is the probability that the pair, $\{o_x,o_y\}$, of samples $o_x$ and $o_y$ cannot be pruned, based on pivot $pvt$ via \textit{Lemma} \ref{lemma:lem2}. 

Thus we get the probability, $1 - (\prod_{\forall pvt \in PVT} - Pr\{dist(o_x, o_y) \le \epsilon| pvt\})$, that the pair $\{o_x,o_y\}$ can be pruned via any ($h$) pivot $pvt \in PVT$. Finally, we sum up the probabilities of these $N_1 \times N_2$ sample pairs, and select the pivot set $PVT^*$ that maximises the probability sum.
}

\subsection{Cost Model for the Cluster Selection}
\label{subsec:cost_for_cluster}
%For our data index $I_j$, we use clusters to collect into the sets the data objects $o_c \in R$ adjacent to each other. We believe a good clustering method should facilitate both the imputation efficiency and effectiveness. In this paper, given a query range $Q$, we try to maximise the ratio of intersection between $Q$ and cluster set $CLS$.

Given a an incomplete object $o_i$ (with missing attribute $A_j$) and a DD rule $(X\to A_j, \phi[X\ A_j])$, we can obtain a query range in the form of 

\begin{eqnarray}
Q (o_i, DD) =\bigwedge_{A_x \in X} [o_i[A_x] - \epsilon_{A_x}, o_i[A_x] - \epsilon_{A_x}].
\label{eq:query_range}
\end{eqnarray}

With the query range $Q$, we can obtain all intersected clusters $cls_k$ (for $1\le k\le n$) from the cluster set $CLS$. For each intersected cluster $cls_k$, we use $itsec$ to represent the intersection between $cls_k$ and $Q$. Our goal is to find the best cluster set $CLS^*$ that can maximise the ratio of intersection between $Q$ and cluster set $CLS$, that is:
\begin{eqnarray}
CLS^* = arg \max\limits_{CLS} \sum_{i=1}^s \sum_{\forall DD} \sum_{\forall cls_k \in CLS} \frac{itset.N}{cls_k.N} \cdot \omega(cls_k,Q(o_i,DD)),
\label{eq:cost_model}
\end{eqnarray}
\noindent where $s$ is the number of samples, $x.N$ ($x=itset$ or $cls_k$) is the number of samples $o_c$ in $x$, and function $\omega(cls_k,Q(o_i,DD)) = 1$ if the query range $\omega(cls_k,Q(o_i,DD))$ is intersected with cluster $cls_k$ (otherwise $\omega(cls_k,Q(o_i,$ $DD)) = 0$).

%The desired cluster set $CLS_*$ will induce the clusters to be friendly and close to the query range $Q$, determined by the DD rules and histogram of data samples in data repository $R$.

%Based on the data model in Eq. (\ref{eq:query_range}), a suitable clustering approach can be designed. Due to space limitation, we omit its details here.
%and will briefly explain the basic idea in Section \ref{sec:exp_eval}.

\section{Selection of sub-MBRs $s_x$ and $s_y$}
\label{sec:Ij_for_sample_pruning}
Given two objects $o_x^p$ and $o_y^p$ imputed via index $I_j$, to apply the \textit{sample-level pruning}, we need to select two sets (MBRs), $s_x$ and $s_y$, of buckets $buc_f$ from nodes in R$^*$-tree intersected with $o_x^p$ and $o_y^p$. There are exponential number (i.e., $\lambda^2$) of selection combinations between the set pair $(s_x,s_y)$. In this paper, we design an effective selection strategy to select $s_x$ and $s_y$ as below. The general idea is that we first select a $s_x$, based on which we select a $s_y$. To be specific, we will obtain $s_x$ by combining consecutive buckets from $buc_1$ till $buc_{f}$ such that its overall frequency is beyond $(1-\alpha) \cdot e.cnt$, that is, $\beta_x=\frac{s_x.cnt}{e.cnt} = \frac{\sum_{i=1}^f buc_f.cnt}{e.cnt} > 1-\alpha$. Next, we will decide whether to add the next bucket $buc_{f+1}$ to $s_x$ by checking the value $\frac{\Delta S_x.I}{\Delta \beta_x}$, where $\Delta S_x.I$ and $\Delta \beta_x$ are the change ratios between the changes of intervals $s_x.I$ and $\beta_x$ w.r.t. the addition of bucket $buc_{f+1}$ into $s_x$, respectively. If $\frac{\Delta S_x.I}{\Delta \beta_x} < 1$, we will add $buc_{f+1}$ to $s_x$, since this addition will bring to $s_x$ more samples $o_c \in e$ but do not significantly enlarge the interval $s_x.I$ of $s_x$ on attribute $A_j$. Otherwise (i.e., $\frac{\Delta S_x.I}{\Delta \beta_x} \ge 1$), we will not add $buc_{f+1}$ to $s_x$. After we fix the $s_x$, we will select the first $s_y$ with $\beta_y > \frac{1-\alpha}{\beta_x}$. In this case, we can use the selected $s_x$ and $s_y$ to apply Lemma \ref{lemma:lem3} to prune the object pair $(o_x, o_y)$.

\end{document}